\ifx\XeTeXversion\undefined\pdfoutput=1\fi
\documentclass{iopjournal}
\graphicspath{{figures/}}
\usepackage[T1]{fontenc}
\usepackage{lmodern}
\usepackage{amsmath}
\usepackage{amssymb}
\usepackage[numbers,sort&compress]{natbib}
\usepackage{booktabs}
\usepackage{multirow}
\usepackage{amsthm}
\newtheorem{theorem}{Theorem}
\newtheorem{corollary}{Corollary}
\newtheorem{lemma}{Lemma}

\newcommand{\warpax}{{\normalfont\textsc{warpax}}}
\newcommand{\dd}{\mathrm{d}}

\newcommand{\rodalWECmissVSfive}{15.6}  %
\newcommand{\rodalDECmissVSfive}{28.5}  %

\newcommand{\rodalVortFrac}{0.000}  %
\newcommand{\natarioVortFrac}{0.508}  %
\newcommand{\natarioExpFrac}{0.000}  %
\newcommand{\alcubierreVortFrac}{0.333}  %
\newcommand{\vdbVortFrac}{0.440}  %
\newcommand{\rodalNECscaling}{0.768}  %
\newcommand{\vdbTransitionVS}{0.38}  %
\newcommand{\kappaVorticity}{0.060}  %

\renewcommand{\headrulewidth}{0pt}
\renewcommand{\articletype}[1]{\vspace*{-8mm}\noindent{\scriptsize\sf\bfseries\MakeUppercase{#1}}}

\begin{document}

\articletype{Paper}

\title{Observer-robust energy condition verification
for warp drive spacetimes}

\author{An T. Le$^{1,2,3}$}

\affil{$^1$Center for Environmental Intelligence, VinUniversity, Hanoi, Vietnam}

\affil{$^2$College of Engineering and Computer Sciences, VinUniversity, Hanoi, Vietnam}

\affil{$^3$Intelligent Autonomous Systems, TU Darmstadt, Germany}

\email{an.lt@vinuni.edu.vn; an@robot-learning.de}

\begin{abstract}
Whether a warp drive metric requires exotic matter is decided by energy conditions
quantified over all observers, not only the Eulerian.  Each of
the null, weak, strong and dominant conditions is equivalent, at a point, to
feasibility of a $4\times4$ linear matrix inequality $A_{ab}+\sigma
g_{ab}\succeq0$, by the S-lemma, with $A_{ab}$ the stress-energy tensor or its
trace reverse and the dominant condition a conjunction of two such tests.  It forms
no eigendecomposition of $T^a{}_b$, imposes no rapidity cap and assumes no
Hawking--Ellis type, so it decides all four alike, Types~I and~IV not being
exhaustive; its multiplier margin is
exactly half the null-cone minimum, so the same test returns the severity.
Composed with an interval enclosure of the curvature chain it
decides a point from the metric itself, not from a floating-point copy of its
stress-energy.  At Type~I each condition reduces instead
to an eigenvalue inequality holding for all observers at once.  The type label is
numerical and tolerance-bound; the reported severities are rapidity-capped
diagnostics, not certificates.  Everything decided uses only boost-invariant
data and stays well posed through $v_s=1$.  On a flat slice the Eulerian momentum that opens the
Type-IV wall vanishes only for a gradient shift, so among four matched drives the
irrotational Rodal geometry is Type~I identically, its shift curl-free by an exact
profile identity, while Alcubierre and Nat\'ario are Type-IV dominated at every
sampled speed and Van~den~Broeck above its transition.  A single-frame reading of
Rodal misses about $73\%$ of its wall weak-energy violations.  All four violate the
pointwise null energy condition at every sampled speed, consistent with the
Santiago--Schuster--Visser no-go, whose null step is conditional.  Both are realized in
\warpax{}, a JAX toolkit building $T^a{}_b$ by automatic differentiation.
\end{abstract}

\keywords{warp drive, energy conditions, Hawking--Ellis
classification, general relativity, automatic differentiation, JAX}

\section{Introduction}
\label{sec:intro}

Alcubierre's 1994 warp drive metric \cite{alcubierre1994} demonstrated
that general relativity admits solutions describing effective
superluminal travel by deforming the spacetime geometry around a
compact region, a ``warp bubble''.  Inside the bubble, an observer
follows a timelike worldline in locally flat spacetime while the bubble
itself moves at an arbitrary coordinate velocity $v_s$ through the
external Minkowski background.  The metric takes the ADM $3+1$ form
\begin{equation}
  \dd s^2 = -\dd t^2 + (\dd x - v_s f(r_s)\,\dd t)^2 + \dd y^2 + \dd z^2\,,
  \label{eq:alcubierre}
\end{equation}
where $f(r_s)$ is a top-hat shaping function that smoothly transitions
from unity inside the bubble to zero outside, and
$r_s = [(x - x_s(t))^2 + y^2 + z^2]^{1/2}$ is the distance from the
bubble center.

The fundamental obstacle to warp drive realizability is the requirement
for exotic matter: matter whose stress-energy tensor violates the
classical energy conditions.  Throughout this paper, the spacetime
signature is $(-+++)$ and we work in geometric units ($G = c = 1$).
The four energy conditions, namely the null (NEC), weak
(WEC), strong (SEC), and dominant (DEC), are pointwise inequalities on
the stress-energy tensor contracted with timelike or null vectors:
\begin{alignat}{2}
  \text{NEC:} & \quad T_{ab}\, k^a k^b \geq 0
    & \quad & \forall\; \text{null } k^a\,,
    \label{eq:nec} \\
  \text{WEC:} & \quad T_{ab}\, u^a u^b \geq 0
    & \quad & \forall\; \text{timelike } u^a\,,
    \label{eq:wec} \\
  \text{SEC:} & \quad \left(T_{ab} - \tfrac{1}{2}\,T\,g_{ab}\right)
    u^a u^b \geq 0
    & \quad & \forall\; \text{timelike } u^a\,,
    \label{eq:sec} \\
  \text{DEC:} & \quad T_{ab}\, u^a u^b \geq 0 \;\text{and}\;
    {-T^a{}_b\, u^b}\;\text{future causal}
    & \quad & \forall\; \text{future timelike } u^a\,.
    \label{eq:dec}
\end{alignat}
These conditions must hold for \emph{all} admissible observers at each
spacetime point \cite{santiago2021,hawking1973}.  The truth
value of an energy condition at a point is observer-independent (it
is defined by a universal quantifier); what is observer-dependent
is any single-frame \emph{diagnostic} that evaluates $T_{ab}\, u^a u^b$
for a particular~$u^a$.

The theoretical content of this observation is classical: for a Type~I
stress-energy tensor the energy conditions reduce to inequalities on the
Lorentz-invariant energy density $\rho$ and principal pressures $p_i$ (the
eigenvalues of $T^a{}_b$ are $-\rho$ and the $p_i$), while a Type~IV tensor, having no
causal eigenvector, violates the null energy condition
\cite{hawking1973,santiago2021,martinmoruno2018core,martinmoruno2017lnp}.  Once the
Hawking--Ellis type of a point is fixed, whether each energy condition holds is
therefore decided by algebra rather than by an observer search.  We make this precise
in Section~\ref{sec:classification}, which gives the Type-I eigenvalue inequalities
and the momentum discriminant that sets the algebraic type and the wall null-energy
deficit, together with an exact all-speed law for the integrated negative energy
(Lemma~\ref{lem:integrated}).  The classification does not close the question on its
own; a $4\times4$ linear matrix inequality decides all four
conditions at every algebraic type (Theorem~\ref{thm:lmi},
Appendix~\ref{app:slemma}).

\smallskip\noindent The four results are as follows.  First, a per-observer
decision at Types~I and~IV (Section~\ref{sec:classification}): the Type-I
eigenvalue inequalities on $\rho$ and the $p_i$ above, and at a Type-IV point an
explicit null direction along which the null condition fails, in
closed form when the obstruction lies in the momentum channel.  Both use only
Lorentz-invariant data and never require the coordinate-stationary observer
$\partial_t$ to be timelike, so they hold at $v_s\ge1$
(Section~\ref{sec:scope}).

Second, a decision that uses no algebraic type at all (Theorem~\ref{thm:lmi},
Appendix~\ref{app:slemma}).  Types~I and~IV do not exhaust the four: at fixed
$j\neq0$ a continuous Type-I/Type-IV transition in the momentum channel crosses
the Type~II locus $\Delta=0$, and Types~II and~III are exactly the strata a
floating-point eigensolver cannot separate, so a procedure resting on the type
cannot be complete.  Each of the four conditions instead holds at a point if and
only if the $4\times4$ linear matrix inequality $A+\sigma\eta\succeq0$ is feasible,
with $A=\widehat T$ for the null and weak conditions, its trace reverse
$\widehat\Theta$ for the strong one, and the dominant condition the conjunction of
two such tests; the multiplier is free in sign for the null condition and
nonnegative for the timelike ones.  The statement makes no hypothesis on the
Hawking--Ellis type, forms no eigendecomposition of $T^a{}_b$ and needs no rapidity
cap.  The best multiplier's margin is exactly half the
null-cone minimum (Corollary~\ref{cor:margin}), so the feasibility test returns the
severity along with the verdict.  A multiplier still has to be found, and the search for it reads symmetric
eigenvalues of $M(\sigma)$; it does not diagonalize the stress tensor
whose type is in question.
Composed with a rigorous interval enclosure of
the curvature chain and an interval $LDL^{\mathsf T}$ acceptance test, the same
inequality returns a verdict about the \emph{spacetime} rather than about a
binary64 copy of its stress-energy, at $4257$ wall points of each of the four
drives, with no point left undecided (Appendix~\ref{app:interval-census}).

Third, a momentum discriminant (Section~\ref{sec:classification}).  In the
timelike plane along the Eulerian momentum density $j$, and when $\hat\jmath^a$ is
a principal axis of the spatial stress, the sign of
$\Delta=(\rho_n+S_\parallel)^2-4|j|^2$ decides the type of the momentum-sourced
wall: $\Delta<0$ opens the Type-IV pair, where the momentum density
dominates, $2|j|>|\rho_n+S_\parallel|$.  It is sufficient for that channel and not
necessary, since Van~den~Broeck opens Type~IV through a separate conformal
channel governed by a transverse cubic discriminant
(Corollary~\ref{cor:conformal}).  The momentum channel is closed exactly on the
shifts that are gradients up to a rigid rotation (Lemma~\ref{lem:gradient}), the
converse of the classical sufficient direction; every other
wall-localized vortical shift sources a wall momentum, tied to
the shift vorticity by the exact identity
$8\pi|j|=\tfrac12|\nabla\times(\nabla\times\beta)|$, that turns the wall Type~IV
where it dominates.  The zero-expansion
Nat\'ario drive is vortical and has a Type-IV wall, so within this family expansion
is not what sources the momentum density (Section~\ref{sec:shift-vorticity}).
Fourth, an exact all-speed law for the slice-integrated negative energy of every
unit-lapse, flat-slice drive whose shift is linear in the speed
(Lemma~\ref{lem:integrated}), the rigorous integrated companion to the
leading-order pointwise wall deficit.  The speed-linearity is needed: a shift
$\beta^x=v_s^2f(y)$ is unit-lapse and flat-sliced and gives $E_-\propto v_s^4$.

\smallskip\noindent Across four matched drives (Alcubierre, Nat\'ario,
Van~den~Broeck, Rodal; the Rodal geometry uses a lab-frame standardization,
Appendix~\ref{app:metric_definitions}) the irrotational Rodal geometry is
globally Type~I at every speed, proved
(Lemma~\ref{lem:rodal}); the Alcubierre and Nat\'ario walls are Type~IV, and
Van~den~Broeck crosses Type-I$\,\to\,$Type-IV at $v_s\approx\vdbTransitionVS$.
Within Rodal's wall the
single-frame Eulerian diagnostic does not register ${\approx}74\%$ of the
dominant and ${\approx}73\%$ of the weak energy-condition violations that boosted
observers see; the drive's positive-energy characterization rests on its
predominantly positive invariant energy density~\cite{rodal2026}, which fixes the
Type~I algebra but not energy-condition satisfaction.  An exoticity ranking from
Lorentz-invariant per-point data places Rodal
lowest, well below the bubble-wall drives, every drive violating the null
energy condition at every sampled speed, in line with the
Santiago--Schuster--Visser no-go (Section~\ref{sec:construction-exoticity}); a
geodesic-integrated averaged null energy condition and a Ford--Roman
quantum-inequality comparison leave that ranking intact
(Section~\ref{sec:averaged-quantum}).

Prior warp-drive tooling (WarpFactory~\cite{warpfactory2024}) samples
${\sim}10^3$ observer directions per point;
we instead evaluate the Type~I eigenvalue criteria directly, with autodiff-exact
curvature, and detect the Type-IV walls that no single-frame sampling can reach
(Section~\ref{sec:related}).  Shell constructions such as the
regularized WarpShell~\cite{bobrick2021,fell2021} sit outside the matched
benchmark set; recent positive-energy
constructions~\cite{fuchs2024constvel,garattini2025desitter} are compared
separately in Section~\ref{sec:construction-exoticity}.

\subsection{Related work}
\label{sec:related}

\paragraph{Energy conditions in warp-drive spacetimes.}
Hawking and Ellis~\cite{hawking1973} introduced the algebraic classification of
$T^a{}_b$ into four types; Mart\'in-Moruno and Visser give its modern treatment
and the conditions under which each energy condition can be
violated at each type~\cite{martinmoruno2017lnp,martinmoruno2018core,
martinmoruno2018type3,martinmoruno2017rainich,martinmoruno2021backreaction}.
Santiago, Schuster and Visser~\cite{santiago2021} proved that
within the generic Nat\'ario framework any physically reasonable warp drive
violates the null energy condition, and so violates the weak, strong and dominant
conditions automatically.  The qualification ``physically reasonable''
is theirs, a hypothesis on the localization of the warp field.  Their conclusions are
observer-independent but not framework-independent, and Barzegar, Buchert and
Vigneron~\cite{barzegar2026classification} examine the routes that would evade
its hypotheses.  The irrotational,
unit-lapse, spatially flat configuration that yields Type~I matter is exactly the
one adopted by recent positive-energy constructions such as
Rodal's~\cite{rodal2026}, who further showed by a controlled vorticity ablation
that adding shift vorticity sharply increases its
negative-energy content.  We take this sufficient direction as our starting point
and extend it across the metric family and into the algebraic type.
Olum~\cite{olum1998} showed that superluminal
travel requires negative energy, and Lobo and Visser~\cite{lobovisser2004}
established the unavoidable energy-condition violation of the Alcubierre and
Nat\'ario classes.  We do not lean on Olum's result: Shoshany and
Snodgrass~\cite{shoshany2024ctc} argue that its hypotheses, in particular the
generic condition, fail for a wide class including Alcubierre and
Van~den~Broeck.  Kontou and Sanders~\cite{kontou2020} review the modern status of
the conditions.  The strong condition is violable by a positive cosmological
constant, for which $\rho+3p=-\Lambda/4\pi<0$.
The Bobrick--Martire taxonomy of
\emph{physical warp drives}~\cite{bobrick2021} and the hidden-geometric
positive-energy construction of Fell and Heisenberg~\cite{fell2021} reframe the
problem from violations in a single metric to which warp-drive class admits an
acceptable stress-energy.

\paragraph{Warp-drive geometries.}
The foundational constructions remain Alcubierre's 1994
metric~\cite{alcubierre1994}, Nat\'ario's zero-expansion
variant~\cite{natario2002} and Van den Broeck's modified-volume
construction~\cite{vandenbroeck1999}; the Krasnikov tube~\cite{krasnikov1998}
replaces the moving bubble with a static corridor.  The Lentz
soliton~\cite{lentz2021} was proposed as a positive-energy alternative but was
subsequently shown to violate the NEC~\cite{santiago2021} and the WEC in the
Eulerian frame~\cite{celmaster2025}.  The Rodal metric~\cite{rodal2026} completes
our benchmark set; Rodal's earlier work~\cite{rodal2023invariants,rodal2024natario}
analyzes curvature invariants for Alcubierre and Nat\'ario in the same $3+1$
formalism, and later notes argue against low-energy warp drives on metamaterial
gravitational-coupling grounds~\cite{rodal2025infeasibility} and develop
photon-propagation branch diagnostics on black-hole
backgrounds~\cite{rodal2026screening}.  Among the
2024--2026 constructions, Fuchs \emph{et al.}~\cite{fuchs2024constvel} introduce a
constant-velocity shell-based positive-energy drive in the Bobrick--Martire
family; Garattini and Zatrimaylov~\cite{garattini2024bh} show that embedding a
warp drive on a black-hole background alleviates energy-condition violations, and
generalize to a de~Sitter background~\cite{garattini2025desitter}.  We report the
latter as its current version leaves it: the ``positive-energy'' qualifier of the
earlier preprint title was withdrawn, and the weak and null conditions are stated
there as holding ``up to a total divergence term that averages to zero'', so the
pointwise statement is weaker than the averaged one.
Abell\'an, Bol\'ivar and Vasilev~\cite{abellan2024lapse} treat the lapse as an
additional degree of freedom, a non-trivial lapse accommodating a heat-flux
matter content, complementary to
the unit-lapse, flat-slice family analyzed here.  Clough, Dietrich and
Khan~\cite{clough2024gw} evolve a warp-drive containment failure numerically, the
dynamical counterpart to the static analysis here, and White
\emph{et al.}~\cite{white2025nacelle} segment the bubble
into interior-flat Gaussian nacelles, reshaping the
energy distribution rather than the all-observer character of the wall.
Huey~\cite{huey2024} proposes a
superluminal drive sourced by non-compact thin membranes whose distributional
stress-energy satisfies the WEC; such constructions sit outside both the
Santiago--Schuster--Visser hypotheses and the smooth-metric scope used here.
Santos-Pereira \emph{et al.}~\cite{santospereira2025matching} study Darmois
junction conditions for matching Alcubierre to Minkowski.

Two classification programmes bear directly on what is measured here.  Barzegar
and Buchert~\cite{barzegar2024restrictions} identify shared restrictions of the
current warp-drive spacetimes (flow-orthogonal foliations, vanishing spatial
curvature) and establish, for a one-dimensional shift at $\Lambda=0$, the
relation $E=-\Omega^2/8\pi\le0$ between the coordinate vorticity and the Eulerian
energy density, which is the ``vorticity implies negative Eulerian energy'' half
of the mechanism we read on the wall; the ``zero vorticity implies vanishing
Eulerian momentum'' half is Fell and Heisenberg's~\cite{fell2021}.  Neither
classifies the vortical case, which is what Section~\ref{sec:shift-vorticity}
adds.  With
Vigneron~\cite{barzegar2026classification} they extend this to a general
formalism and classification, with no-go theorems on global hyperbolicity,
asymptotic flatness and positive energy.  One of those is that a shift which is
the gradient of a harmonic function forces Minkowski, a hypothesis
Lemma~\ref{lem:rodal} checks.  They also dispute that the
Fuchs shell solves the field equations at all, noted with our own
positive margin in Section~\ref{sec:construction-exoticity}.  Buchert and
Frackowiak~\cite{buchert2026realizations} construct new warp solutions and
analyze their coordinate-acceleration and vorticity structure kinematically.
These classifications are geometric; the algebraic Hawking--Ellis type of the
wall stress-energy is the complementary axis.

\paragraph{Numerical methods.}
Computational tools for general relativity span symbolic algebra
(xAct~\cite{martingarcia2008}, SageManifolds~\cite{gourgoulhon2018}) and
numerical-relativity frameworks (the Einstein Toolkit~\cite{einsteintoolkit2024},
which evolves dynamically rather than assessing static metrics pointwise); see
Ref.~\cite{maccallum2018} for a survey.  Two methodological lineages meet in
Theorem~\ref{thm:lmi} and in Appendix~\ref{app:interval-lmi}.  The first is the
algebra of the causal cone.  Bergqvist and
Senovilla~\cite{bergqvist2001nullcone} characterize the tensors with the dominant
property as the convex cone generated by superenergy tensors of simple forms, the
closest existing such statement,
and Mart\'in-Moruno and Visser develop the complementary polynomial
route~\cite{martinmoruno2017rainich,martinmoruno2018core}.  Both are
characterizations rather than certificates.  The S-lemma supplies a certificate here, and we have not found the pointwise
energy conditions written as a linear matrix inequality elsewhere in the
relativity literature; the semidefinite machinery itself is standard, and we cite
it as such~\cite{yakubovich1971,polik2007slemma,loewy1975icecream,
hildebrand2007lorentz,hildebrand2011lorentzII}.  The second is validated numerics: interval arithmetic with directed rounding
turns a computation into a proof of an inequality, and it has reached general
relativity before, in Reiterer and Trubowitz's existence proof for the Choptuik
critical spacetime~\cite{reiterer2019choptuik}, through a fixed-point argument
rather than through an enclosure of curvature, which is as far as we are aware new.
Appendix~\ref{app:enclosures} encloses $g\to\Gamma\to R^a{}_{bcd}\to G_{ab}\to
T_{ab}$ rather than evaluating it, so what comes out is a bracket the true tensor
provably lies in, and a bracket decides a strict inequality but not an equality:
a point at which a condition holds with equality, the exterior of every drive among
them, is returned inconclusive rather than satisfied.

The warp-drive community's dedicated toolkit, WarpFactory, of Helmerich
\emph{et al.}~\cite{warpfactory2024,warpfactory_toolkit2024,warpfactory_docs2024},
evaluates the conditions over a discrete
sample of ${\sim}10^3$ observer directions and velocities and uses fourth-order
central finite differences for the curvature chain.  \warpax{} differs on three
axes: forward-mode automatic differentiation in JAX~\cite{jax2018} for the
curvature chain, following the autodiff-for-relativity approach of
diffjeom~\cite{coogan2024diffjeom}, the neural-field \emph{Einstein Fields}
representation~\cite{cranganore2025einsteinfields} and the programme of
Bara~\cite{bara2025}; continuous gradient-based
optimization~\cite{optimistix2024} in place of discrete observer sampling; and
geodesic integration via Diffrax~\cite{kidger2022diffrax} in an
Equinox~\cite{kidger2021equinox} module structure.  Together these give explicit
Type-IV detection that stays well posed at all warp speeds, subject to the
classification tolerances of Section~\ref{sec:scope}.

Section~\ref{sec:methods} presents the methods, Section~\ref{sec:results-ec} the
energy-condition results, Section~\ref{sec:convergence} the convergence
validation, Section~\ref{sec:discussion} the implications and
Section~\ref{sec:conclusion} the conclusions; the \warpax{} implementation
(Appendix~\ref{app:toolkit}) and the metric definitions
(Appendix~\ref{app:metric_definitions}) are given in the appendices.

\subsection{Scope and limitations}
\label{sec:scope}

The scope of what this paper decides has three tiers.  At a point
classified Type~I the energy conditions are decided \emph{exactly} and for every
observer by the eigenvalue inequalities
\eqref{eq:nec-type1}--\eqref{eq:dec-type1}, with no observer search and no rapidity
cap.  A point is identified as Type~IV \emph{numerically}, from the eigenstructure of
the mixed tensor, and that identification is subject to the classification tolerances
of Section~\ref{sec:classification}; where it holds, the absence of a causal
eigenvector forces a null violation, with the closed-form witness
\eqref{eq:null-witness} supplying the violating direction in the momentum-sourced
case.  Outside Type~I the continuous optimizer is a rapidity-capped, one-sided
\emph{severity diagnostic}, not a certificate: a positive capped margin does not
establish satisfaction for arbitrarily large boosts.  The Boolean verdict has
no such limitation (Theorem~\ref{thm:lmi}, Appendix~\ref{app:slemma}).
Note that \emph{certified} is used throughout in one narrow sense, and never of the
paper as a whole: it marks a single decision at a single point that is exact given its
input, and the input is one of exactly three, the Type~I eigenvalue inequalities, a
rational multiplier or observer, or an interval enclosure.  A rate, a fraction, a
ranking and an algebraic type are none of them certified in this sense, and none is
described as such.  The Boolean decision
is cap-free and type-free at \emph{every} algebraic type, including the Type~II and
Type~III strata that a float64 eigensolver cannot separate; the statement deciding
it is exact, and at the points where the interval curvature chain is run the
verdict is certified from the metric itself (Appendix~\ref{app:interval-lmi}),
while the grid-wide multiplier search is binary64 against the noise floor
\eqref{eq:noise-floor} and returns \emph{inconclusive} inside it.  The reported severities
are $\zeta_{\max}$-conditioned one-sided diagnostics and not
certificates.  The Hawking--Ellis type is a diagnostic with a measured error
rate, not an input the verdict depends on.  The classification uses no
preferred observer and is well posed at all warp speeds; we report it for
$v_s \in [0.1, 2.5]$, spanning the luminal transition
(Section~\ref{sec:results-ec}).  \emph{Observer-robust} here means the
frame-independence of the Boolean decision, not an uncapped numerical observer
search and not the reported severities.  The single-frame
miss fraction compares boosted observers against the Eulerian normal $n^a$, unit
timelike at every speed, so the comparison is well defined throughout the
transition.

For the Alcubierre form $g_{00} = -1 + (v_s f)^2$, so $\partial_t$ is null exactly on
the locus
\begin{equation}
  v_s\,f(r_s) = 1,
  \label{eq:ergo-locus}
\end{equation}
and spacelike where $v_s f > 1$.  Since $f(0)=1$ and $f'(r_s) < 0$ for every
$r_s > 0$, the shape function is strictly decreasing, so \eqref{eq:ergo-locus} has a
solution only for $v_s \ge 1$, and at $v_s = 1$ that solution is the single point
$r_s = 0$, not a surface and not the wall's inner edge.  The locus lies inside the
active wall mask $0.1 \le f \le 0.9$ precisely when $f = 1/v_s$ falls in that band,
that is for
\begin{equation}
  10/9 \le v_s \le 10,
  \label{eq:ergo-range}
\end{equation}
and not for every $v_s > 1$.  Either way this
is a loss of the coordinate-stationary observer $\partial_t$ alone.  The spacetime stays Lorentzian ($\det g < 0$ throughout; $\det g = -1$ for
the flat-slice unit-lapse drives Alcubierre, Nat\'ario, and Rodal, and a positive
spatial factor for Van~den~Broeck), the Eulerian normal
$n^a = (1/\alpha)(\partial_t - \beta^i \partial_i)$ stays unit timelike
($\alpha = 1$, $g^{00} = -1$ at all $v_s$), and the eigenvalues of $T^a{}_b$ are
boost invariants that vary smoothly across the transition; the Hawking--Ellis
type, Type~IV included, is as well posed as at $v_s < 1$.  The curvature and
energy-condition chain is finite (no NaN,
$\det g < 0$) at $v_s \in \{1.0, 1.5, 2.0, 2.5\}$, and we compute the
all-observer type and eigenvalue structure there
(Section~\ref{sec:results-ec}, Table~\ref{tab:velocity_type}).  The
continuous optimizer, used only at non-Type~I points, parameterizes
future-directed unit-timelike vectors of the metric at each point, so its
observer domain follows the local light cones through the transition rather
than a fixed background cone.  We also guard against the principal
numerical risk of the superluminal eigenproblem, a spurious Type~IV from an
ill-conditioned non-symmetric $T^a{}_b$: the Type-IV labels reported here
are confirmed by three independent eigensolvers, one of them a $50$-digit
arbitrary-precision recomputation, with zero disagreement.  What is decided at
$v_s>1$ is the energy-condition verdict, by Theorem~\ref{thm:lmi}; the algebraic
labels there are checked, not certified.  Neither speaks to physical realizability: a superluminal bubble has
the usual causal pathologies (closed causal curves), which we do not address here.

The direction in which the field equations are read also bounds the scope.  Every
stress-energy tensor here is obtained from a prescribed metric by
differentiation, the ``creative'' or Synge $G$-method, so what is decided is
the algebraic energy-condition character of the tensor a given metric induces, and
nothing beyond it: not that such a tensor is sourceable by any physical
matter model, that the corresponding initial data is well posed, or that the
spacetime is otherwise viable.  Barzegar, Buchert and
Vigneron~\cite{barzegar2026classification} catalogue how unrestricted use of the
$G$-method has produced warp-drive claims that fail on grounds independent of the
energy conditions, and prove no-go theorems that the present pointwise analysis
neither reproduces nor replaces.

What is decided is also \emph{pointwise and
classical}.  A pointwise NEC violation
is logically distinct from a violation of the averaged null energy
condition (ANEC)~\cite{grahamolum2007}, the line-integral version tied to
the topological-censorship and chronology results, and our Type-IV result does not
by itself establish an ANEC violation.  We therefore
evaluate the ANEC along each metric's null geodesics with a
symplectic integrator and a monitored on-cone witness
(Section~\ref{sec:averaged-quantum}); that witness is not
a validated-integrator error bound, so the averaged values are not certified.  A
Ford--Roman quantum-inequality
comparison \cite{ford1995,pfenning1997} is included as an explicitly
flat-space estimate; a rigorous curved-space (Fewster-type
\cite{fewster2000qi,fewstereveson1998}) quantum inequality for these geometries
remains open.

Separately, \warpax{} provides a differentiable shape-function parametrization of
the bubble profile with three basis families, optimized on the observer-robust margin
subject to a lapse floor, profile regularity and bubble finiteness; these are local
sufficient conditions enforced pointwise and do not establish global hyperbolicity.
Two companion notes use that parametrization
elsewhere~\cite{le2026warpshells,le2026steering}; neither is constructed or
certified here, and neither supplies a result to this paper or takes one from it.

\section{Methods}
\label{sec:methods}

\subsection{Stress-energy via automatic differentiation}
\label{sec:autodiff}

The Einstein field equations in geometric units ($G = c = 1$) are
\begin{equation}
  G_{ab} = 8\pi\, T_{ab}\,,
  \label{eq:efe}
\end{equation}
where the Einstein tensor $G_{ab} = R_{ab} - \frac{1}{2} R\, g_{ab}$
is constructed from the Ricci tensor $R_{ab}$ and scalar curvature
$R = g^{ab} R_{ab}$.  Computing $T_{ab}$ from a given metric requires
the full curvature chain:
\begin{equation}
  g_{ab} \;\xrightarrow{\partial}\;
  \Gamma^a{}_{bc} \;\xrightarrow{\partial}\;
  R^a{}_{bcd} \;\xrightarrow{\text{contract}}\;
  R_{ab},\, R \;\xrightarrow{\text{EFE}}\; T_{ab}\,.
  \label{eq:chain}
\end{equation}
The Christoffel symbols of the second kind are
\begin{equation}
  \Gamma^a{}_{bc} = \frac{1}{2}\, g^{ad}
    \left(\partial_c\, g_{bd} + \partial_b\, g_{cd}
    - \partial_d\, g_{bc}\right)\,,
  \label{eq:christoffel}
\end{equation}
and the Riemann curvature tensor is
\begin{equation}
  R^a{}_{bcd} = \partial_c\,\Gamma^a{}_{bd}
    - \partial_d\,\Gamma^a{}_{bc}
    + \Gamma^a{}_{ce}\,\Gamma^e{}_{bd}
    - \Gamma^a{}_{de}\,\Gamma^e{}_{bc}\,.
  \label{eq:riemann}
\end{equation}

\warpax{} implements this chain with JAX's forward-mode automatic
differentiation.  The metric function $g_{ab}(x^\mu)$ maps a coordinate 4-vector
to a $4 \times 4$ symmetric matrix, and a single forward-mode pass yields
$\partial_c\, g_{ab}$ as exact derivatives of the implemented
program, up to roundoff, with no step size and no finite-difference truncation
error; nesting it (forward-over-forward) gives the second derivatives
needed for the Riemann tensor.  At $d = 4$ forward mode is the cheap direction:
4 directional derivatives against the $16$ reverse-mode passes a
$4 \times 4 \to 4 \times 4$ Christoffel computation would need.  The derivative
convention places the differentiation index last,
$(\partial_c\, g)_{abc} = \partial g_{ab} / \partial x^c$, which simplifies the
einsum contractions.

\paragraph{Three energy-density quantities.}
Three energy densities recur throughout this paper.  The \emph{invariant energy
density} $\rho$ is the timelike eigenvalue of the mixed stress-energy tensor
$T^a{}_b$, hence frame-independent; it equals $T_{ab}\,u^a u^b$ on the principal
timelike eigenvector.  The \emph{observer-dependent energy density}
$T_{ab}\,u^a u^b$ is what an observer with four-velocity $u^a$ measures.  The
\emph{Eulerian energy density} $T_{ab}\,n^a n^b$ is the case of $n^a$ the unit
normal to the ADM spatial hypersurface, the single-frame baseline used by prior
work that reports ``energy density'' without qualification.

\paragraph{Notation.}
$R_b$ denotes the warp bubble radius throughout, distinct from the Ricci scalar
$R$ of the curvature expressions.

\subsection{Hawking--Ellis classification}
\label{sec:classification}

The algebraic classification of $T^a{}_b$ \cite{hawking1973} determines
the structure of the energy-condition constraints.  At each point we compute the
eigenvalues of the mixed stress-energy tensor
$T^a{}_b = g^{ac} T_{cb}$ and classify into one of four types:

\begin{itemize}
  \item \textbf{Type~I:} Four real eigenvalues with a timelike eigenvector, the
    generic case for physically reasonable matter.  For signature
    $(-+++)$ they are
    $(-\rho, p_1, p_2, p_3)$; the timelike eigenvector is identified
    by $g_{ab}\,v^a v^b < 0$, with $\rho = -\lambda_{\text{timelike}}$,
    $p_i = \lambda_{\text{spacelike},i}$.
  \item \textbf{Type~II:} Pure radiation: a
    defective $2\times 2$ null Jordan block, a double null eigenvector
    with no second independent null eigenvector.
    Canonical example: null dust
    $T_{ab} = \Phi^2 k_a k_b$ with $k^a$ null
    \cite{martinmoruno2018core}.
  \item \textbf{Type~III:} $3\times 3$ null Jordan structure, a
    triple null eigenvector in nilpotent configuration.  No known
    classical or semiclassical source produces Type~III stress-energy
    \cite{martinmoruno2018type3}.
  \item \textbf{Type~IV:} Complex eigenvalues.  Absent for most
    classical matter models, but possible for exotic stress-energy
    (the labels are numerical and tolerance-bound: Section~\ref{sec:scope}).
\end{itemize}

A Type~IV tensor has no causal eigenvector, so some null direction
has a negative contraction and the null energy condition fails; this is the
Hawking--Ellis Type-IV property, made precise in the essential-core classification
\cite{hawking1973,martinmoruno2017lnp,martinmoruno2018core}.  The violating
direction is exhibited in closed form for the momentum-sourced case below.  The defective
null cases Type~II and Type~III are \emph{not} decided by that contraction;
Appendix~\ref{app:slemma} decides them instead.

Type~II is not absent from these geometries, and cannot be.  Under the splitting
hypothesis, with $\hat\jmath$ a principal axis of $S$, the momentum plane decouples
and its $\eta_2$-self-adjoint $2\times2$ block $A=\bigl(\begin{smallmatrix}-\rho_n & j\\ -j &
S_\parallel\end{smallmatrix}\bigr)$, $\Delta=(\rho_n+S_\parallel)^2-4j^2$, has its
type settled completely by
\begin{equation}
  \Delta>0 \;\Rightarrow\; \text{I},\qquad
  \Delta<0 \;\Rightarrow\; \text{IV},\qquad
  \Delta=0,\ j\neq0 \;\Rightarrow\; \text{II},\qquad
  \Delta=0,\ j=0 \;\Rightarrow\; \text{I (degenerate)} .
  \label{eq:momentum-trichotomy}
\end{equation}
At $\Delta=0$ with $j\neq0$ the repeated eigenvalue $(S_\parallel-\rho_n)/2$ leaves
$A-\lambda I$ of rank one with the single null eigendirection $(1,\pm1)$:
a defective null $J_2$, hence Type~II.  At $j=0$ the block collapses to $-\rho_n
\mathbb{1}$ and stays diagonalizable.  Since $\Delta$ is continuous, every path from
$\Delta>0$ to $\Delta<0$ at fixed $j\neq0$ crosses the Type~II locus, so Types~I
and~IV do not exhaust the channel.  That locus is the separating stratum the
essential-core analysis calls structurally fragile: under perturbation of
$T^a{}_b$, Types~II and~III are unstable while Type~IV and non-degenerate Type~I
are stable~\cite{martinmoruno2018core}.  Unavoidable as a boundary, it is a measure-zero
set that an untargeted grid misses and an arbitrarily small perturbation moves
every point off; a sweep built for it does reach it (Appendix~\ref{app:slemma}).
Type~III is excluded \emph{under the splitting hypothesis, and only under it}: in a
true $2\oplus2$ split the momentum block admits Jordan blocks of size at most two
and the transverse block is Euclidean-self-adjoint, hence diagonalizable, so no $J_3$
can be assembled; where the split fails, in the transverse channel of
Corollary~\ref{cor:conformal}, it is not.  An explicit
Lorentz-self-adjoint example is $T_{ab}=\ell_a m_b+m_a\ell_b$ with $\ell^a=(1,1,0,0)$
null, $m^a=(0,0,1,0)$ unit spacelike and $\ell\cdot m=0$, whose mixed form
\begin{equation}
  T^a{}_b=
  \begin{pmatrix}
    0 & 0 & 1 & 0\\
    0 & 0 & 1 & 0\\
    -1 & 1 & 0 & 0\\
    0 & 0 & 0 & 0
  \end{pmatrix}
  \label{eq:typeIII-example}
\end{equation}
satisfies $(T^a{}_b)^3=0$ with $(T^a{}_b)^2\ne0$.  Its eigenvalues all vanish, its
eigenspace is spanned by the null $\ell^a$ and the spacelike $(0,0,0,1)$, and the
Jordan form is $J_3\oplus J_1$: Segre type $[3,1]$, Hawking--Ellis Type~III.  Its
Eulerian data are $\rho_n=0$, $\hat\jmath^a=(0,0,1,0)$, $|j|=1$ and
$S_\parallel=0$ with $S_{xy}=1$, so $\hat\jmath$ is not a principal axis of $S$ and
$\Delta=-4<0$: the momentum discriminant would report Type~IV at a point that is
Type~III.  The splitting hypothesis is therefore necessary, both for the exclusion
of Type~III and for the $\Delta<0$ branch of \eqref{eq:momentum-trichotomy}.  The
same trichotomy in the same discriminant form has been reached independently for an
unrelated stress-energy, in Gergely's $2+1+1$ analysis of the proper kinetic
gravity braiding tensor~\cite{gergely2026braiding}, so
\eqref{eq:momentum-trichotomy} is the standard statement.  Being codimension one
and perturbation-unstable~\cite{martinmoruno2018core}, the locus is not resolved by
refining a grid, and we claim no more than the classifier reports at the points it
samples.

The Hawking--Ellis \emph{label} is discontinuous across
$\Delta=0$; the energy-condition \emph{verdict} is not.  The multiplier margin
$\max_\sigma\lambda_{\min}(\widehat T+\sigma\eta)$ of Theorem~\ref{thm:lmi} is
exactly half the null-cone minimum (Corollary~\ref{cor:margin}) and hence
$1$-Lipschitz in the entries of $\widehat T$, so it cannot jump where the label
does; on the locus the same rational null vector $k=n\pm\hat\jmath$ that
witnesses the Type-IV violation contracts to $\rho_n+S_\parallel-2|j|=0$, so
$\Delta=0$ is exactly the saturation that $\Delta<0$ violates
strictly.  A Type~III point violates every standard condition, and a Type~II point with
canonical parameters $(\mu,f,p_2,p_3)$ violates the null condition exactly when
$f<0$ or $\mu+p_i<0$ for a transverse
$i$~\cite{hawking1973,martinmoruno2017lnp,martinmoruno2018core}: on this locus
the Eulerian frame is already canonical, with $f=|j|>0$ and $\mu=\rho_n-|j|$, so
the null part cannot fail and the transverse pair carries the verdict alone; and
Theorem~\ref{thm:lmi} decides both without reference to the
type.  Both parts are needed.  The block $(\mu,f,p_2,p_3)=(0,1,-2,0)$ has $f>0$
and still violates the null condition, its null-cone minimum being $-4/3$
(Appendix~\ref{app:slemma}), so a positive null part is not sufficient; positive
null dust $T_{ab}=\Phi k_ak_b$, $\Phi>0$, has $f=\Phi>0$, $p_2=p_3=0$ and
satisfies all four ($T(u,u)=\Phi(k\!\cdot\!u)^2\ge0$, vanishing trace so
$\Theta=T$, future-null flux), the exterior
of the photon-rocket construction of Ref.~\cite{le2026steering} being Type~II
everywhere outside its shell.

The mixed tensor $T^a{}_b = g^{ac}T_{cb}$ is generally non-symmetric, so
\warpax{} uses a general eigensolver; eigenvalues count as real when
$|\mathrm{Im}\,\lambda_i| < \epsilon s$ ($\epsilon=10^{-10}$,
$s=\max(|\mathrm{Re}\,\lambda_j|,\|T^a{}_b\|_\infty)$)
(Section~\ref{sec:scope} validates the resulting Type-IV labels against three
solvers).  Setting $s$ by the tensor rather than flooring it at unity keeps the
test covariant under $T\to cT$: with a floor the criterion is absolute below
$\|T\|=1$, and the far tail of every drive, where $\|T\|\sim10^{-12}$, would be read
as Type~I.  Only exact vacuum ($\max|\lambda_i|<\epsilon$ \emph{and}
$T^a{}_b=0$ identically, e.g.\ the Minkowski/Schwarzschild exterior) is assigned
Type~I with $\rho=p_i=0$; the classifier uses the eigenvalue modulus, not the
real part, since a pure momentum flux has eigenvalues $\pm iq$ and is Type~IV,
and tests the tensor as well as the spectrum, since a nilpotent
stress-energy has an identically zero spectrum at any amplitude.
For Type~I, all energy conditions reduce to algebraic
inequalities in the eigenvalues $\rho, p_1, p_2, p_3$:
\begin{alignat}{2}
  \text{NEC:} & \quad \rho + p_i \geq 0
    & \quad & \forall\, i\,,
    \label{eq:nec-type1} \\
  \text{WEC:} & \quad \rho \geq 0 \;\text{and}\;
    \rho + p_i \geq 0
    & \quad & \forall\, i\,,
    \label{eq:wec-type1} \\
  \text{SEC:} & \quad \rho + p_i \geq 0 \;\;\forall\, i
    \;\;\text{and}\;\;
    \rho + {\textstyle\sum_i} p_i \geq 0\,,
    & &
    \label{eq:sec-type1} \\
  \text{DEC:} & \quad \rho \geq |p_i|
    & \quad & \forall\, i\,.
    \label{eq:dec-type1}
\end{alignat}
These inequalities are \emph{exact}: for Type~I stress-energy a point
satisfies (or violates) an energy condition if and only if the
corresponding eigenvalue inequality holds, whichever observer
is chosen.  The classification decision itself is tolerance-based in floating
point (Section~\ref{sec:limitations}), so every ``exact'' statement here is exact
algebra applied to the numerically classified type, and is only as good as the
eigenvalue accuracy, the complex/degeneracy tolerances and the timelike eigenvector
test.
\paragraph{Deciding outside Type~I.}
Decompose $T$ in the Eulerian frame into the Eulerian energy density
$\rho_n = T_{ab}\,n^a n^b$ (equal to the invariant Type-I eigenvalue $\rho$
only where the momentum vanishes), the momentum density
$j^i = -h^i{}_a\,T^a{}_b\,n^b$ and
the spatial stress $S_{ij} = h_i{}^a h_j{}^b\,T_{ab}$, with
$h^a{}_b = \delta^a{}_b + n^a n_b$ the projector onto the slice.  In the timelike
plane spanned by $n^a$ and the momentum direction $\hat{\jmath}^a$ the mixed tensor
is the block
$\left(\begin{smallmatrix} -\rho_n & |j| \\ -|j| & S_\parallel \end{smallmatrix}\right)$,
$S_\parallel = S(\hat{\jmath},\hat{\jmath})$, whose eigenvalues
$\tfrac12(S_\parallel-\rho_n) \pm \tfrac12\sqrt{\Delta}$ are complex exactly when the
momentum density dominates,
\begin{equation}
  \Delta = (\rho_n + S_\parallel)^2 - 4|j|^2 < 0,
  \label{eq:discriminant}
\end{equation}
that is $2|j| > |\rho_n + S_\parallel|$.  Such a pair makes the point Type~IV
when $\hat{\jmath}^a$ is a
principal axis of $S_{ij}$ (the generic wall; Section~\ref{sec:shift-vorticity}),
and one of the Eulerian null vectors
$k^a = n^a \pm \hat{\jmath}^a$ then witnesses the violation,
\begin{equation}
  T_{ab}\,k^a k^b = \rho_n + S_\parallel - 2|j| < 0,
  \label{eq:null-witness}
\end{equation}
with no rapidity and no optimizer.  A Type-IV point whose complex pair lies outside
that plane, for example one opened by a conformal spatial factor, still has
no causal eigenvector, so the classification guarantees its null-energy violation
even where this witness is nonnegative.  At Type~II the same
contraction decides \emph{nothing}: it probes the momentum plane only, leaving a
transverse violation undetected, and on the canonical block
$(\mu,f,p_2,p_3)=(0,1,-2,0)$ it is exactly zero while the null-cone minimum is $-4/3$.
Type~II is therefore decided by Appendix~\ref{app:slemma}, not by
\eqref{eq:null-witness}.  The discriminant
\eqref{eq:discriminant} returns in Section~\ref{sec:shift-vorticity}, controlling
the wall null-energy deficit as well.

\paragraph{The per-observer decision at Type I and Type IV.}
\label{par:per-observer}
These give a per-observer decision, with no rapidity cap, at every Type~I and
Type~IV point and none at Type~II or Type~III, which Theorem~\ref{thm:lmi} decides
instead.  What follows is conditional on the label the classifier assigns.

If the point is Type~I with eigenvalues $(-\rho,p_1,p_2,p_3)$, then each of the NEC,
WEC, SEC and DEC holds for \emph{every} observer if and only if the corresponding
eigenvalue inequality \eqref{eq:nec-type1}--\eqref{eq:dec-type1} holds. Let
$\{e_0^a,e_i^a\}$ be the orthonormal eigenframe of $T^a{}_b$ ($e_0$ timelike with
$T_{ab}e_0^ae_0^b=\rho$, $T_{ab}e_i^ae_i^b=p_i$; in general $e_0^a\ne n^a$). A unit
timelike observer is a boost $u^a=\cosh\zeta\,e_0^a+\sinh\zeta\,\hat s^a$ of it,
so $T_{ab}\,u^au^b=\rho+\sinh^2\!\zeta\,(\rho+p(\hat s))$ with
$p(\hat s)=\sum_i p_i(\hat s\cdot e_i)^2$, the cross term vanishing because $e_0$ is an
eigenvector; this is nonnegative for all $\zeta$ and $\hat s$ if and only if $\rho\ge0$
and $\rho+p_i\ge0$, the WEC line \eqref{eq:wec-type1}. The same computation on the
same eigenframe yields the \emph{different} inequalities
\eqref{eq:nec-type1}--\eqref{eq:dec-type1} for the other three: the null
contraction $T_{ab}k^ak^b=\rho+p(\hat s)$ with $k^a=e_0^a+\hat s^a$ has no
$\cosh^2\!\zeta$ term, so the NEC requires $\rho+p_i\ge0$ but \emph{not} $\rho\ge0$;
the SEC adds the trace condition $\rho+\sum_i p_i\ge0$; and the DEC requires
$\rho\ge|p_i|$ \cite{hawking1973,martinmoruno2017lnp}.

If the point is Type~IV it has no causal eigenvector (a complex-conjugate eigenvalue
pair), which forces the null direction of negative contraction described above
\cite{hawking1973,martinmoruno2018core}, so the NEC, and hence the WEC, SEC and DEC,
is violated for some observer; in the momentum plane that direction is the explicit
$k^a=n^a\pm\hat{\jmath}^a$ of \eqref{eq:null-witness}, exact because $k^a$ is a
fixed null vector.

\paragraph{The momentum discriminant.}
\label{par:discriminant}
The complex pair of \eqref{eq:discriminant} has imaginary part
$\tfrac12\sqrt{-\Delta}$.

A shift with vanishing wall momentum ($j=0$, in particular any irrotational shift)
makes $T^a{}_b$ block diagonal, $(-\rho)\oplus S$ with $S$ symmetric, so its spectrum
is real and the point stays Type~I. For a flat-slice unit-lapse drive whose shift
is $\beta^i=v_s F^i(\mathbf{x}-\mathbf{x}_s(t))$ the scalings
$\rho_n,S_{ij}\propto v_s^2$ and $|j|\propto v_s$ are exact for the Einstein
tensor rather than leading order, and the $3+1$ equations give them in one pass.
The extrinsic curvature $K_{ij}=-\partial_{(i}\beta_{j)}$ is linear in $v_s$; on a
flat slice the Hamiltonian constraint $16\pi\rho_n=K^2-K_{ij}K^{ij}$ is quadratic in
$K$ and the momentum constraint $8\pi j^i=D_j(K^{ij}-\gamma^{ij}K)$ linear in it;
and in the evolution equation for $K_{ij}$, with $\alpha=1$ and
$\gamma_{ij}=\delta_{ij}$, every surviving term is quadratic in $K$ or is
$\pounds_\beta K$, while $\partial_tK_{ij}=-v_s\partial_xK_{ij}$, so
$S_{ij}-\tfrac12\delta_{ij}(S-\rho_n)$ is quadratic as well and with $\rho_n$ so is
$S_{ij}$.  Such a wall margin is therefore purely quadratic,
$\min(\rho+p_i)=-C\,v_s^2$, with
$C>0$ once the drive's unavoidable null violation lies on the wall
($\min(\rho+p_i)<0$; Section~\ref{sec:construction-exoticity}). A vortical shift opens
the complex pair in the momentum plane, under the splitting hypothesis, wherever the
momentum density dominates,
$2|j|>|\rho_n+S_\parallel|$; writing $|j|=d\,v_s$ and $\rho_n+S_\parallel=a\,v_s^2$
this is $\Delta=v_s^2(a^2 v_s^2-4d^2)<0$, so at each vortical point with $a\ne0$ the
momentum-plane pair closes at the exact speed
$v_s=2d/|a|$, above which that point is Type~I in the momentum-plane channel and the
wall Type-IV fraction falls as the bubble speeds up
(Section~\ref{sec:invariant-benchmark}). The null-witness deficit
$\rho_n+S_\parallel-2|j|$ then has a linear-in-$|j|$ term absent in the
irrotational case.

Because $a$ and $d$ are independent of $v_s$, the field $v_\star(\mathbf{x})=2d/|a|$ is a
property of the geometry alone, computed once and read at every speed: where
$a\neq0$ the momentum channel is open for $v_s<v_\star(\mathbf{x})$
and closed above it, and where $a=0$ with $d>0$ it never closes.
Its predicted volume fraction is
\begin{equation}
  F_{\rm mom}(v_s)=\frac{\operatorname{vol}\{\,v_\star>v_s\,\}}{\operatorname{vol}(\text{wall})}.
  \label{eq:closing-fraction}
\end{equation}
$F_{\rm mom}$ bounds the Type-IV fraction only under the splitting hypothesis, with
$\hat\jmath^a$ a principal axis of $S_{ij}$: $\Delta<0$ is sufficient for Type~IV
under that hypothesis and not otherwise, since switching on
the transverse coupling of Corollary~\ref{cor:conformal} can \emph{recapture} the
momentum pair.  On the block of that corollary the cubic
discriminant is even in $m$ with leading coefficient $+4m^6$, so at fixed
$\Delta<0$ and nonvanishing minor there is a finite $|m|$ past which $D>0$: three
real eigenvalues, Type~I, with $\Delta<0$.  At $(\rho_n,S_x,S_y,|j|)=(1,1,0,2)$,
where $\Delta=-12$, the discriminant is negative at $m=3$ and positive at $m=4$.
So $F_{\rm mom}\le F_{\rm IV}$ is not an identity of the algebra but an implication
with a premise, checked pointwise
against the full four-dimensional eigensolver; the gap between $F_{\rm mom}$
and the measured Type-IV fraction is the share carried by the transverse channel
where the premise holds, and Section~\ref{sec:closing-speed} measures both.

\begin{corollary}[Conformal Type-IV boundary]
\label{cor:conformal}
Let the momentum direction $\hat{\jmath}^a$ fail to be a principal axis of the spatial
stress $S_{ij}$, so a transverse coupling $m$ enters the block along $n^a$,
$\hat{\jmath}^a$, and the coupled transverse axis $\hat t^a$; and let the remaining
spatial axis $\hat z^a$ decouple from that block, $S(\hat{\jmath},\hat z)=S(\hat
t,\hat z)=0$. The decoupling is a hypothesis and not a normalisation: without it the
characteristic polynomial of $T^a{}_b$ is quartic and the cubic discriminant of the
block does not decide the type. Its cubic discriminant
$D(\rho_n,S_x,S_y,|j|,m)$ is even in $m$ and satisfies
$D|_{m=0}=(\mathrm{minor})^2\,\Delta$, so a transverse complex pair (Type~IV) opens
where $D<0$, its boundary lying on the algebraic surface $D=0$, and it can open with
$\Delta\ge0$: the momentum discriminant of Section~\ref{sec:classification} is
sufficient under the splitting hypothesis, and in no case necessary, for
Type~IV. On the witness family $\rho_n=S_x=1$, $|j|=\tfrac12$, $S_y=0$ the transverse
pair is complex on the band $m_\star<|m|<m_2$ between the two positive roots of $D$,
with $m_\star\approx0.782$ and $m_2\approx2.015$, and at $m=1$ an Eulerian null vector
of the orthonormal frame exhibits the null violation directly.
\end{corollary}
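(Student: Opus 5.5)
The plan is to reduce everything to an explicit $3\times3$ real matrix and read off its cubic discriminant. Work in the orthonormal Eulerian frame $(n^a,\hat\jmath^a,\hat t^a,\hat z^a)$ with $\eta=\mathrm{diag}(-1,1,1,1)$. Raising the index with $\eta$ flips the sign of the first row, so by the same computation that produced the $2\times2$ block of \eqref{eq:momentum-trichotomy} the mixed tensor is
\begin{equation*}
  T^a{}_b=\begin{pmatrix}-\rho_n & |j| & 0\\ -|j| & S_x & m\\ 0 & m & S_y\end{pmatrix}\oplus\bigl(S_{zz}\bigr),
\end{equation*}
with $S_x=S(\hat\jmath,\hat\jmath)$, $S_y=S(\hat t,\hat t)$ and $m=S(\hat\jmath,\hat t)$. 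The entry $T^0{}_2$ vanishes because $\hat t$ is chosen orthogonal to the momentum.

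The direct sum with $(S_{zz})$ is exactly where the decoupling hypothesis $S(\hat\jmath,\hat z)=S(\hat t,\hat z)=0$ enters. It makes the characteristic polynomial factor as $(S_{zz}-\lambda)\,p_3(\lambda)$, with $p_3$ a real cubic, so the type is decided by $p_3$ alone. Without the hypothesis no such factorization is available, and that is the remark in the statement. A real cubic has a complex-conjugate pair iff its discriminant $D<0$. Such a pair means $T^a{}_b$ has no causal eigenvector, i.e.\ Type~IV by Section~\ref{sec:classification}. The boundary of that region is therefore contained in $\{D=0\}$.

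Next I would establish the two structural properties of $D$.

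\emph{Evenness in $m$.} Conjugate by $P=\mathrm{diag}(1,1,-1)$. This is a similarity, so it leaves $p_3$ unchanged, and it maps $m\mapsto-m$ while fixing every other entry. Hence $D(\cdot,m)=D(\cdot,-m)$.

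\emph{Value at $m=0$.} The block splits as the $2\times2$ momentum block $A$ plus the scalar $S_y$, so $p_3=p_2(\lambda)\,(S_y-\lambda)$ with $p_2$ the characteristic polynomial of $A$. The discriminant of a product of coprime factors is $\mathrm{disc}(p_2)\,\mathrm{Res}(p_2,\lambda-S_y)^2$. This gives $D|_{m=0}=\Delta\cdot p_2(S_y)^2$, where the minor is
\begin{equation*}
  p_2(S_y)=(-\rho_n-S_y)(S_x-S_y)+|j|^2 .
\end{equation*}

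I would also expand $D$ fully as a polynomial in $m^2$ and confirm the leading term $4m^6$ quoted in Section~\ref{sec:classification}. The $m^6$ coefficient comes only from the $m^2$ term of the linear coefficient of $p_3$ and can be isolated directly.

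The \emph{sufficient, not necessary} clause then follows from one witness. For the family $\rho_n=S_x=1$, $|j|=\tfrac12$, $S_y=0$ one has $\Delta=(1+1)^2-1=3>0$, so the momentum channel is closed. Substituting into $D$ gives an explicit even polynomial in $m$, a cubic in $m^2$. I would locate its two positive roots numerically and verify the bracketing signs with exact rational evaluations at rational points on either side of $0.782$ and of $2.015$, which justifies $m_\star\approx0.782$ and $m_2\approx2.015$. I would also check $D<0$ at the interior point $m=1$, giving Type~IV there with $\Delta>0$.

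For the direct null violation at $m=1$, take $k=n+\tfrac{1}{\sqrt2}(\hat\jmath\mp\hat t)$, with the sign chosen to match the orientation of $j$. Then
\begin{equation*}
  T_{ab}k^ak^b=\rho_n-\sqrt2|j|+\tfrac12(S_x+S_y)-m=1-0.707+0.5-1\approx-0.207<0 .
\end{equation*}
This is an Eulerian null vector built from the orthonormal frame axes and needs no optimizer.

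The main obstacle is the bookkeeping of the explicit discriminant. The sign conventions for $T^0{}_i$ versus $T^i{}_0$ under $\eta$ must be fixed consistently with \eqref{eq:null-witness}, the full cubic discriminant must be expanded correctly, and its roots on the witness family must then be certified. I would do the expansion symbolically and cross-check $D|_{m=0}$ against $\Delta\,(\mathrm{minor})^2$ and against the paper's test point $(1,1,0,2)$, where $D<0$ at $m=3$ and $D>0$ at $m=4$, before trusting the quoted roots.
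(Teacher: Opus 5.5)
Your proposal is correct and is essentially the paper's proof: the same $3\times3$ block on $(n,\hat\jmath,\hat t)$ with $\hat z$ split off, the cubic discriminant with $D|_{m=0}=(\mathrm{minor})^2\Delta$ (your minor $p_2(S_y)$ is exactly the negative of the paper's, which is harmless once squared), and the witness family, which the paper carries through explicitly to $D=4u^3-18u^2+\tfrac{27}{4}u+\tfrac{27}{16}$ in $u=m^2$ with positive roots $u_\star\approx0.611$, $u_2\approx4.059$ and the third root negative. The null vector is also the same, $k=n+(\hat\jmath-\hat t)/\sqrt2$, giving $\tfrac12-\tfrac1{\sqrt2}$. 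Your conjugation by $\mathrm{diag}(1,1,-1)$ for evenness and your resultant identity at $m=0$ only make explicit two steps the paper asserts; the one small correction is that the sign of the $\hat t$ component in $k$ must be chosen opposite to the sign of $m$, not matched to the orientation of $j$.
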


\noindent The witness family is a representative stress ratio chosen to exhibit the
transverse channel, not a sampled Van~den~Broeck wall point; the overall scale is
irrelevant to the type.
\begin{proof}
In the orthonormal Eulerian frame $(n^a,\hat{\jmath}^a,\hat t^a,\hat z^a)$, with $\hat t^a$
the transverse axis coupled to $\hat{\jmath}^a$ by $m=S(\hat{\jmath},\hat t)$ and $\hat z^a$
decoupled (real eigenvalue $S_z$), the mixed tensor on the $(n,\hat{\jmath},\hat t)$ block is
\begin{equation*}
  T^a{}_b\big|_{\rm block}=\begin{pmatrix} -\rho_n & |j| & 0\\ -|j| & S_x & m\\ 0 & m & S_y\end{pmatrix},
  \qquad S_x=S_\parallel,
\end{equation*}
with characteristic polynomial $\lambda^3+c_2\lambda^2+c_1\lambda+c_0$ and cubic
discriminant
\begin{equation*}
  D=18c_2c_1c_0-4c_2^3c_0+c_2^2c_1^2-4c_1^3-27c_0^2 ,
\end{equation*}
a polynomial even in $m$. The surface $D=0$ also contains real-eigenvalue
degeneracies away from the Type-I/Type-IV transition. Setting $m=0$ gives
$D|_{m=0}=(\mathrm{minor})^2\,\Delta$ with
$\mathrm{minor}=S_xS_y-S_y^2-|j|^2+(S_x-S_y)\rho_n$ and $\Delta=(\rho_n+S_x)^2-4|j|^2$,
recovering Section~\ref{sec:classification} on a principal axis. For the witness
family $\rho_n=S_x=1$, $|j|=\tfrac12$, $S_y=0$, the discriminant
becomes, in $u=m^2$, $D=4u^3-18u^2+\tfrac{27}{4}u+\tfrac{27}{16}$, whose two positive
roots $u_\star\approx0.611$ and $u_2\approx4.059$ bound the Type-IV band
$m_\star=\sqrt{u_\star}\approx0.782<|m|<\sqrt{u_2}=m_2\approx2.015$; for $m$ in this band the
block carries a complex pair, and at $m=1$ the null vector $k^a=n^a+(\hat{\jmath}^a-\hat t^a)/\sqrt2$ gives
$T_{ab}\,k^ak^b=\rho_n+\tfrac12 S_x+\tfrac12 S_y-\sqrt2\,|j|-m=\tfrac12-\tfrac1{\sqrt2}<0$.
\end{proof}

The full $T^a{}_b$ inherits this complex pair when the momentum plane is its
source and $\hat{\jmath}^a$ is a principal axis of $S_{ij}$, a coupled transverse
stress opening the conformal channel of Corollary~\ref{cor:conformal} instead.  The
Alcubierre wall has measured imaginary eigenvalue equal to
$\tfrac12\sqrt{-\Delta}$ and the Nat\'ario wall close to it, while the
Van~den~Broeck wall opens its Type-IV pair through the conformal amplification of
$S$ (Table~\ref{tab:shift_vorticity}(b)).  These two channels, the momentum plane
and the single transverse axis, are the only ones through which a wall of this
family opens Type~IV; neither is necessary, so no single scalar criterion decides
the type for the family as a whole.  The
type labels themselves are taken from the full four-dimensional eigensolver.  On
the four benchmark drives the classifier returns
Type~I and Type~IV labels only; the one construction on which it returns Type~II is
the Garattini--Zatrimaylov wall, where the momentum block is exactly degenerate and
the label is a rounding artefact (Section~\ref{sec:construction-exoticity}).

The algebraic slack, $\min_i(\rho+p_i)$ and its WEC/SEC/DEC analogs,
measures proximity to the boundary.  The rapidity-capped optimizer,
$\min_{\zeta \leq \zeta_{\max}} T_{ab}\,u^a u^b$, only displays severity at a
chosen boost scale and does not decide, the
true WEC/SEC/DEC infimum at a NEC-violating point being $-\infty$; with
$\zeta_{\max}=5$ it agrees in sign with the eigenvalue truth at every Type~I grid
point.

\subsection{The decision used at each point}
\label{sec:lmi}

The classification above decides Type~I and Type~IV only.  What
decides a point of \emph{any} type, with no eigendecomposition and no rapidity
cap, is a $4\times4$ linear matrix inequality; the proof, the exact
certificates and the accompanying checks are in Appendix~\ref{app:slemma}.

Let $\{n^a,e_i^a\}$ be an orthonormal tetrad with $n^a$ the unit slice normal, and
decompose the stress-energy in it as $\rho_n=T_{ab}n^an^b$, $b_i=-T_{ab}n^ae_i^b$,
$S_{ij}=T_{ab}e_i^ae_j^b$.  Every future timelike $u^a$ is
$u^a=\gamma(n^a+w^ie_i^a)$ with $|w|<1$ and $\gamma=(1-|w|^2)^{-1/2}$, and every
future null $k^a$ is $k^a=c\,(n^a+s^ie_i^a)$ with $|s|=1$, $c>0$.  Then
\begin{equation}
  T_{ab}u^au^b=\gamma^2 q(w),\qquad
  q(w)=\rho_n-2\,b\!\cdot\!w+w^{\mathsf T}Sw ,
  \label{eq:lmi-quadratic}
\end{equation}
with $\gamma^2>0$.  The apparently unbounded quantification over boosted observers
is therefore not one: the divergent factor is a positive prefactor that cannot
change a sign, leaving a quadratic on a \emph{compact} set, the closed unit ball
for the timelike conditions and the unit sphere for the null one.  That is the
situation the S-procedure was built for, and nothing in it refers to the
Hawking--Ellis type, which demotes the type from an input to a diagnostic.

Write $\eta=\mathrm{diag}(-1,1,1,1)$ for the tetrad-frame metric and
$\widehat T$ for the tetrad-frame component matrix of $T_{ab}$, so that
$\widehat T_{00}=\rho_n$, $\widehat T_{0i}=-b_i$, $\widehat T_{ij}=S_{ij}$.  Define
the one-parameter family
\begin{equation}
  M(\sigma)\;=\;\widehat T+\sigma\,\eta\;=\;
  \begin{pmatrix} \rho_n-\sigma & -b^{\mathsf T}\\[2pt] -b & S+\sigma\,\mathbb{1}\end{pmatrix},
  \label{eq:lmi}
\end{equation}
which is nothing but the tetrad-frame matrix of $T_{ab}+\sigma g_{ab}$.

\begin{theorem}[Decision by a linear matrix inequality]
\label{thm:lmi}
Let $\Theta_{ab}=T_{ab}-\tfrac12 T g_{ab}$ with $T=g^{ab}T_{ab}$, and let
$(T^2)_{ab}=T_{ac}g^{cd}T_{db}$.  Then, with no hypothesis on the Hawking--Ellis
type of $T^a{}_b$,
\begin{enumerate}
  \item the NEC holds at the point $\iff$ $M(\sigma)\succeq0$ for some
    $\sigma\in\mathbb{R}$;
  \item the WEC holds $\iff$ $M(\sigma)\succeq0$ for some $\sigma\ge0$;
  \item the SEC holds $\iff$ the WEC holds for $\Theta_{ab}$, i.e.\ $\iff$
    $\widehat\Theta+\sigma\eta\succeq0$ for some $\sigma\ge0$;
  \item the DEC holds $\iff$ the WEC holds \emph{and} the WEC holds for
    $-(T^2)_{ab}$.
\end{enumerate}
\end{theorem}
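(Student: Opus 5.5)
The plan is to prove the four items through the homogenized quadratic forms on $\mathbb{R}^4$ of the tetrad frame and the S-lemma in its two standard forms: the equality-constrained (null) version and the inequality-constrained (timelike) version. Write $x=(x^0,\vec x)$ for tetrad components, $Q_T(x)=x^{\mathsf T}\widehat T x$ and $Q_\eta(x)=x^{\mathsf T}\eta x=-x_0^2+|\vec x|^2$. The first step is the reduction already set up in \eqref{eq:lmi-quadratic}. By homogeneity and continuity, the NEC is equivalent to $Q_T\ge0$ on $\{Q_\eta=0\}$; both signs of $x^0$ are allowed since $Q_T$ is even. The WEC is equivalent to $Q_T\ge0$ on $\{Q_\eta\le0\}$, because the timelike cone is dense in the closed causal cone and $Q_T$ is even. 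The reverse inclusion is trivial.

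\textbf{Items 1 and 2.} The easy directions are immediate. If $M(\sigma)\succeq0$, then $Q_T(x)\ge-\sigma Q_\eta(x)$. This is $0$ on null vectors, and it is $\ge0$ on causal vectors when $\sigma\ge0$. For the converse of item 2, apply the inequality S-lemma (Yakubovich) to $f=Q_T$ and $g=Q_\eta$. The Slater point $x=n$ has $Q_\eta(n)=-1<0$, so $Q_T\ge0$ on $\{Q_\eta\le0\}$ gives $\tau\ge0$ with $Q_T+\tau Q_\eta\succeq0$, which is $M(\tau)\succeq0$. For item 1, use the equality form of the S-lemma (Pólik--Terlaky). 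It requires that $Q_\eta$ take both signs, which $n$ and $e_1$ witness. It then yields $\sigma\in\mathbb{R}$ with $Q_T+\sigma Q_\eta\succeq0$. If I want a self-contained argument instead of citing, I would use Finsler's lemma/Dines convexity: the joint image $\{(Q_T(x),Q_\eta(x))\}$ of two real quadratic forms in dimension $\ge3$ is a convex cone. Separating it from the open ray $\{(a,0):a<0\}$ (null case), or from $\{(a,b):a<0,b\le0\}$ (causal case), gives the multiplier, with its sign fixed by the Slater point.

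\textbf{Item 3.} This is item 2 applied verbatim to the symmetric tensor $\Theta_{ab}$. By definition, SEC means $\Theta_{ab}u^au^b\ge0$ for all timelike $u$, which is the WEC for $\Theta$. Nothing in items 1 and 2 used any property of $T$ beyond symmetry.

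\textbf{Item 4.} This is the step I expect to need the most care. Set $V^a=-T^a{}_bu^b$. For future timelike $u$, the DEC asks that $T(u,u)\ge0$ and that $V$ be future causal. Causality reads $g_{ab}V^aV^b=(T^2)_{ab}u^au^b\le0$, i.e. the WEC for $-(T^2)_{ab}$. The remaining point is that future-directedness comes for free once causality holds: $g(V,u)=-T(u,u)\le0$ by the WEC. If $g(V,u)<0$, then $V$ is future. If $g(V,u)=0$ with $V$ causal and $u$ timelike, then $V=0$, which the standard convention counts as allowed. So the conjunction of the two WEC tests is exactly the DEC. I would also check that the limiting null observers cause no trouble, by passing to closure via continuity as in the first step. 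Item 4 then follows by applying item 2 twice, once to $\widehat T$ and once to the tetrad matrix of $-(T^2)_{ab}$.

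The main obstacle is the converse in item 1. The equality-constrained S-lemma genuinely needs the sign-indefiniteness hypothesis and dimension $\ge3$, and that is the place where a careless citation could fail. I would verify it via the convexity of the joint numerical range rather than rely on the citation alone.
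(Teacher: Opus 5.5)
Your proposal is correct and takes essentially the same route as the paper. The paper also uses a Slater-point S-lemma for the weak condition and the sign-free equality-constrained (Calabi/Finsler/Dines) S-lemma on the homogenized forms in $\mathbb{R}^4$ for the null condition; it treats the strong condition as the weak one for $\Theta_{ab}$, and the dominant one as the conjunction with the weak condition for $-(T^2)_{ab}$, with future-directedness following from $T_{ab}u^au^b=-J_au^a$. The differences are cosmetic: the paper applies the inhomogeneous S-lemma on the closed unit ball $1-|w|^2\ge0$ in $\mathbb{R}^3$ and reads $M(\sigma)\succeq0$ off $(1,w)M(\sigma)(1,w)^{\mathsf T}$, where you homogenize the weak case too, and you make explicit the vanishing-flux case of the dominant condition, which the paper leaves implicit.
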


Type~I points keep the eigenvalue inequalities
\eqref{eq:nec-type1}--\eqref{eq:dec-type1}, which are already necessary and
sufficient there.  Every non-Type-I point takes all four margins from
Theorem~\ref{thm:lmi}, with one presentational exception: where the Eulerian null
witness \eqref{eq:null-witness} is negative it is reported instead.  At the points where the interval curvature chain is run, the same
inequality is evaluated on enclosures of $T_{ab}$ built from the metric, so the
verdict is about the spacetime and not about a floating-point copy of its
stress-energy (Appendix~\ref{app:interval-lmi}).

\subsection{Observer parameterization}
\label{sec:observer}

To display the severity of a violation we parameterize the observer space, boosting
the Eulerian normal for timelike contractions and scanning the celestial sphere for
null directions.  Note that these enter the severity diagnostic only, not the
Boolean decision.

\paragraph{Timelike observers.}
Any unit timelike vector $u^a$ at a point can be written as a Lorentz
boost of the Eulerian normal $n^a$:
\begin{equation}
  u^a = \cosh\zeta\; n^a + \sinh\zeta\; \hat{s}^a(\theta, \phi)\,,
  \label{eq:rapidity}
\end{equation}
where $\zeta \geq 0$ is the rapidity (boost magnitude),
$(\theta, \phi)$ specify the spatial boost direction, and
$\hat{s}^a(\theta, \phi)$ is a unit spatial vector constructed from an
orthonormal tetrad adapted to the ADM decomposition.  The tetrad is
built by applying Gram--Schmidt orthogonalization to the coordinate
basis vectors projected into the spatial hypersurface orthogonal to
$n^a$.

\paragraph{Null observers.}
Any future-directed null vector can be written as
\begin{equation}
  k^a = n^a + \hat{s}^a(\theta, \phi)\,,
  \label{eq:null-observer}
\end{equation}
where $(\theta, \phi)$ parameterize the null direction on the celestial
sphere.  This fixes $k_a n^a = -1$, removing the null-vector rescaling
ambiguity and making the NEC margin $T_{ab}\, k^a k^b$ scale-fixed.
The NEC is thus a two-dimensional optimization problem over
$(\theta, \phi)$.

In the rapidity parameterization, $\zeta = 0$
recovers the Eulerian observer, so the optimization always
includes the Eulerian result as a baseline.  The WEC and DEC involve
three-dimensional optimization over $(\zeta, \theta, \phi)$, while the
NEC requires only $(\theta, \phi)$.  The true timelike
observer manifold has unbounded rapidity ($\zeta \in [0, \infty)$);
in practice, we optimize over $\zeta \in [0, \zeta_{\max}]$ with a
fixed cap $\zeta_{\max} = 5$ (corresponding to $\gamma \approx 74$;
see Section~\ref{sec:optimization}).

\subsection{Observer optimization}
\label{sec:optimization}

For each spacetime point and energy condition, \warpax{} solves the
optimization problem
\begin{equation}
  \text{margin}^* = \min_{(\zeta, \theta, \phi) \in \mathcal{O}}
    \; m(T_{ab}, u^a(\zeta, \theta, \phi))\,,
  \label{eq:opt}
\end{equation}
where $m$ is the signed margin function: $m \geq 0$ means the
condition is satisfied, and $m < 0$ indicates a violation.  For the NEC,
$m = T_{ab}\, k^a k^b$ with optimization over $(\theta, \phi)$ only.
For the DEC, the diagnostic at each spacetime point returns
\begin{equation}
  m_{\text{DEC}} = \min\bigl(
    m_{\text{flux}},\;
    m_{\text{future}},\;
    m_{\text{WEC}}
  \bigr)\,,
  \label{eq:dec-diagnostic}
\end{equation}
where $m_{\text{flux}} = -g_{ab}\, J^a J^b$ (positive when the
energy flux $J^a = -T^a{}_b\, u^b$ is causal),
$m_{\text{future}} = -J_a n^a$ (positive when $J^a$ is
future-directed, with $n^a$ the future-pointing unit normal), and
$m_{\text{WEC}} = T_{ab}\, u^a u^b$ (positive when the energy
density is non-negative).  A negative value of any component
indicates a DEC violation for that observer.
The three sub-terms have mixed dimensions, so we use them only for sign-based
detection ($m_{\text{DEC}}^*<0$), not as a combined severity.  For Type~I
stress-energy the algebraic condition (equation~\ref{eq:dec-type1}) guarantees
$J^a$ causal and future-directed for every future-directed $u^a$, so the
algebraic slack $\min_i(\rho-|p_i|)$ is the primary DEC result; the three-term
optimizer diagnostic is used only at the residual non-Type~I points.

Optimization uses BFGS \cite{nocedal2006} in Optimistix
\cite{optimistix2024} over an unconstrained boost vector
$\mathbf{w}\in\mathbb{R}^3$ ($\zeta=|\mathbf{w}|$, direction
$\mathbf{w}/|\mathbf{w}|$, Eulerian at $\mathbf{w}=0$), with a soft cap
$\zeta\mapsto\zeta_{\max}\tanh(|\mathbf{w}|/\zeta_{\max})$ at
$\zeta_{\max}=5$ ($\gamma\approx 74$); null directions use a stereographic
parameterization of $S^2$.  Multi-start
($N_{\text{starts}}$ seeds: Eulerian, six axis-aligned,
and random) is evaluated in parallel, batched over the starts.

\paragraph{Closed-form worst observer at Type~I points.}
At a Type~I point no search is needed.  In the eigenframe
$T^a{}_b=\mathrm{diag}(-\rho,p_1,p_2,p_3)$, an observer boosted from the
rest frame with rapidity $\zeta$ along principal axis $i$ measures
\begin{equation}
  \rho_{\rm obs}(\zeta)
    = \rho\cosh^2\zeta + p_i\sinh^2\zeta
    = \rho + (\rho+p_i)\sinh^2\zeta\,,
  \label{eq:worst-observer}
\end{equation}
monotone in $\sinh^2\zeta$.  If $\rho+p_i\ge0$ for every axis the rest
frame already measures the least energy density; if
$\rho+p_{i^*}<0$ for $i^*=\arg\min_i(\rho+p_i)$, the worst boost lies
along the principal eigenvector $e_{i^*}$ of the most-violating
pressure, every observer past the threshold rapidity
$\sinh^2\zeta_{\rm th} = \rho/|\rho+p_{i^*}|$ (for $\rho>0$) measures
negative energy density, and $\rho_{\rm obs}$ is unbounded below as
$\zeta\to\infty$.  The closed form therefore supplies the worst
direction, the violation threshold, and the asymptotic
sign; there is no finite worst observer to find.  The implementation is
cross-checked against the BFGS optimizer: at the
optimizer's reported rapidity the boosted density matches
equation~\eqref{eq:worst-observer}, and the capped search never undercuts
the closed-form value.

\subsection{Geodesic integration}
\label{sec:geodesics}

\warpax{} integrates the geodesic equation
\begin{equation}
  \frac{\dd^2 x^\mu}{\dd \lambda^2}
    + \Gamma^\mu{}_{\alpha\beta}\,
      \frac{\dd x^\alpha}{\dd \lambda}\,
      \frac{\dd x^\beta}{\dd \lambda} = 0
  \label{eq:geodesic}
\end{equation}
in either of two schemes.  The coordinate-ray diagnostic of
Section~\ref{sec:averaged-quantum} uses the Tsitouras 5(4) Runge--Kutta method
(Tsit5) \cite{tsitouras2011} via Diffrax \cite{kidger2022diffrax}, with adaptive
step-size control (PID controller, $\mathrm{rtol}=\mathrm{atol}=10^{-10}$).  Every
\emph{geodesic} ANEC value reported here instead uses the structure-preserving
symplectic scheme of Section~\ref{sec:averaged-quantum}, an extended-phase-space
method with a fourth-order Yoshida composition, because an adaptive Runge--Kutta
tangent leaves the null cone over a long crossing of a strong-shift bubble and a
symplectic step does not.

\section{Results: Energy conditions}
\label{sec:results-ec}

We evaluate energy conditions on $50^3$ spatial grids for each metric,
with the time coordinate fixed at $t = 0$ (for constant-velocity
bubbles whose metric depends on $x - v_s t$, the $t = 0$ slice is
representative up to spatial translation).  The grid extends $\pm 5R_b$ for metrics with $R_b = 1$
(Alcubierre, Van~den~Broeck, Nat\'ario) and $\pm 3R_b$ for
the large-radius Rodal metric ($R_b = 100$), centered on
the bubble.
Because the curvature chain is evaluated by forward-mode automatic
differentiation (Appendix~\ref{app:toolkit}), the stress-energy tensor is exact
at every sampled point, to floating-point roundoff. Grid resolution
therefore does not set the \emph{accuracy} of the field; it sets only the
\emph{sampling} of the discrete wall summaries we report: a grid-restricted
extremum, a thresholded volume fraction, a geodesic line integral.  That residual
sampling error is controlled separately below. On the
uniform grids above the Alcubierre/VdB/Nat\'ario walls span only ${\sim}1.35$
cells (10--90\% width $\Delta r\approx 2.2/\sigma$;
Table~\ref{tab:params}(b)). The central benchmark
(Section~\ref{sec:invariant-benchmark}) therefore evaluates every retained metric
at matched parameters ($R_b=1$, $\sigma=8$) on one family of wall-clustered
graded grids that resolve the $10$--$90\%$ wall by $5.9$ to $8.9$ cells on a radial
crossing ($\Delta r/\Delta x_{\rm wall}$), a factor of $4.4$ to $6.6$ over the $1.35$ cells of
the uniform grid: box $\pm 3R_b$, clustering strength $a_\star=2$, three levels
$N=80,100,120$ giving $5.9$, $7.6$, $8.9$ cells at wall spacings $\Delta x_{\rm
wall}=0.0465, 0.0362, 0.0309$, every level clearing the four-cell resolution
criterion.  (The spacing is the \emph{clustered} one at the wall, not the nominal
$6R_b/N$.)
On this ladder each principal diagnostic has the convergence evidence its
regularity allows (Section~\ref{sec:convergence}): the smooth wall extrema are
refined off-grid within the basin the ladder samples, giving a one-sided bound and
not a global extremum, and the non-smooth volume fractions have a measured grid
stability spread.
Table~\ref{tab:params}(a) lists the per-metric native parameters;
unless a caption states otherwise, every figure and table below uses them,
and matched-parameter results use $R_b=1$, $\sigma=8$,
$v_s=0.5$ on the wall-resolved graded grids just described.

\begin{table}[tbp]
  \centering
  \footnotesize
  \caption{\emph{(a)}~Per-metric parameters and grid configuration; all grids are
    $50^3$ at $t = 0$, centered on the bubble.
    \protect\protect\linebreak\vspace*{3pt} This table lists native per-metric parameters for the uniform-grid diagnostics; the primary cross-metric comparison instead evaluates every retained metric at common parameters ($R_b = 1$, $\sigma = 8$) on smoothly graded compact grids (Section~\ref{sec:invariant-benchmark}).

    $^{\dagger}$Additional Van~den~Broeck parameters: $\tilde{R}=1$,
    $\alpha_{\text{vdb}}=0.5$, $\sigma_B=8$.
    \emph{(b)}~Wall resolution analysis on the same grids.  Transition width is the
    10--90\% threshold distance of each metric's shape function, root-solved from
    the profile rather than taken from the large-$\sigma R_b$ limit
    $\Delta r = 2\,\mathrm{atanh}(0.8)/\sigma$ ($73.24$ against $72.53$ at the
    native Rodal parameters).  The \emph{Convergence} column records the
    resolution-support tier on the native uniform grid (Stability-only: minimum
    margin stable under refinement; -- : no study).}
  \label{tab:params}

  \emph{(a)}\;\begin{tabular}{l c c c c}
    \toprule
    Metric & $R_b$ & $\sigma$ & Domain & $\zeta_{\max}$ \\
    \midrule
    Alcubierre      & 1   & 8.0  & $(\pm 5)^3$   & 5 \\
    Van~den~Broeck$^{\dagger}$  & 1   & 8.0  & $(\pm 5)^3$   & 5 \\
    Nat\'ario       & 1   & 8.0  & $(\pm 5)^3$   & 5 \\
    Rodal           & 100 & 0.03 & $(\pm 300)^3$ & 5 \\
    Schwarzschild   & $M\!=\!1$ & -- & $(\pm 20)^3$ & 5 \\
    \bottomrule
  \end{tabular}

  \medskip
  \emph{(b)}\;\begin{tabular}{@{}lccccc@{}}
    \toprule
    Metric & Wall width & $\Delta x$ & Cells & Resolved & Convergence \\
    \midrule
    Alcubierre & 0.2747 & 0.2041 & 1.35 & No & Stability-only \\
    Nat\'ario & 0.2747 & 0.2041 & 1.35 & No & -- \\
    Van~Den~Broeck & 0.2747 & 0.2041 & 1.35 & No & -- \\
    Rodal & 72.5335 & 12.2449 & 5.92 & Yes & Stability-only \\
    \midrule
    Schwarzschild & -- & 0.82 & -- & -- & -- \\
    \bottomrule
\end{tabular}

\end{table}

\noindent A wall spanning fewer than four cells \emph{on a
uniform grid} is excluded from the uniform-grid comparison (this removes the
native thin-shell and soliton cases); the matched benchmark instead resolves the
wall on the graded grids above, with adequacy confirmed by the convergence study
of Section~\ref{sec:convergence}.
Cell counts are quoted as the \emph{worst} case over every crossing the axis makes
of the wall, measured on the grid actually used: a closed wall is crossed twice per
axis, and reporting the better crossing would overstate the resolution.

For each grid point we compute the Eulerian baseline (the ADM normal observer for
WEC/SEC/DEC and six axis-aligned null directions for NEC) and the exact margins
of Section~\ref{sec:classification}: the eigenvalue inequalities at Type~I and the
Eulerian null witness (equation~\ref{eq:null-witness}) at Type~IV.
Away from the wall the stress-energy is near-vacuum and trivially Type~I, so an
unrestricted grid fraction is controlled by the size of the exterior box rather
than by the wall; we therefore report wall-restricted fractions throughout
(Section~\ref{sec:invariant-benchmark}).

\subsection{Frame-independent benchmark across the luminal transition}
\label{sec:invariant-benchmark}

The principal result is the all-observer energy-condition
structure of the matched-parameter ($R_b=1$, $\sigma=8$) bubble walls,
computed frame-independently from the eigenstructure of $T^a{}_b$ on
the graded benchmark grids, and tracked across the luminal transition.  Every
quantity here is defined without a preferred timelike observer: a Hawking--Ellis
type or eigenvalue margin (both Lorentz invariants), or a proper-volume integral
on the constant-$t$ slice.  None requires $\partial_t$ to be timelike, so the
analysis is identically defined below, at, and above $v_s=1$.

\paragraph{The wall is not Type~I.}
Table~\ref{tab:invariant_benchmark} restricts to the active wall
($f \in [0.1, 0.9]$, proper-volume weighted).  The Alcubierre, Nat\'ario, and
Van~den~Broeck walls are \emph{Type-IV dominated}
($81$--$99\%$ at $v_s = 0.5$), with Type-I fractions $1.3\%$, $11.5\%$, $18.9\%$: there the stress-energy has a
complex-eigenvalue pair, hence \emph{no rest frame (no timelike eigenvector) and
no rest-frame energy density}, and it violates every energy condition for some observer
unconditionally (Hawking--Ellis; Mart\'in-Moruno--Visser~\cite{martinmoruno2018core}).
The Rodal geometry is the lone exception: it is $100\%$ Type~I, so its
all-observer energy conditions are decided exactly by the eigenvalues.
The wall type fractions there come from the standalone benchmark grid and agree
with the velocity sweep of Table~\ref{tab:velocity_type} to within grid resolution.

\begin{table}[tbp]
  \centering
  \footnotesize
  \caption{Matched-parameter ($R_b=1$, $\sigma=8$, $v_s=0.5$),
    wall-restricted, proper-volume-weighted invariant benchmark on the
    wall-resolved graded grid ($N=100$, ${\sim}7.6$ cells across the 10-90\%
    wall).  The invariant peak NEC deficit $\min(\rho+p_i)$ is over Type~I points,
    refined off-grid; for the Type-IV-dominated walls it samples only the residual
    real-eigenvalue points and is not a severity bound.
    ``Missed by Eulerian'' is the fraction of all-observer violations at which the
    Eulerian frame reports none, at $v_s=0.5$.}
  \label{tab:invariant_benchmark}
  \begin{tabular}{@{}l cc c ccc@{}}
  \toprule
  & Type~I & Type~IV & $\min(\rho+p_i)$ & \multicolumn{3}{c}{Missed by Eulerian (\%)} \\
  \cmidrule(lr){5-7}
  Metric & (\%) & (\%) & (Type~I) & WEC & NEC & DEC \\
  \midrule
  Alcubierre & 1.3 & 98.7 & -0.158 & 0.0 & 0.0 & 0.0 \\
  Natário & 11.5 & 88.5 & -2.078 & 0.0 & 0.0 & 0.0 \\
  Van den Broeck & 18.9 & 81.1 & -0.521 & 54.8 & 14.1 & 51.3 \\
  Rodal & 100.0 & 0.0 & -0.194 & 72.6 & 13.2 & 74.0 \\
  \bottomrule
\end{tabular}

\end{table}

\paragraph{The all-observer reading of the positive-energy result.}
Rodal recently put forward a warp drive with predominantly positive
invariant energy density and global Type~I structure, reported at
$v/c=1$~\cite{rodal2026}.  Lemma~\ref{lem:rodal} confirms that structure as a
continuum proof, which lets us state the all-observer energy conditions exactly
and illustrate how observer choice enters: the Eulerian frame reports no violation
at ${\approx}73\%$ of the wall points where a boosted observer sees a
weak-energy violation, and at ${\approx}74\%$ for the dominant energy
condition (Table~\ref{tab:invariant_benchmark}).  Each such point is decided exactly
by the Type~I eigenvalue inequalities rather than by an optimizer; the fractions
themselves are sampled and proper-volume weighted, and move by about two points
across the resolution ladder.
Global Type~I fixes the algebraic character, not energy-condition satisfaction: the
irrotational wall still has a residual null (hence weak, dominant, and strong)
violation, reduced but nonzero in Rodal's own construction~\cite{rodal2026} and
consistent with the drive-by-drive NEC violation we report below.
The Eulerian-frame and invariant statements coincide only for Rodal, since it
is the sole drive with a global rest frame; for the Type-IV-walled metrics the
frame-independent energy density that published energy-reduction comparisons
optimize does not exist in the wall at all.

\paragraph{Type and severity across $v_s=1$.}
Table~\ref{tab:velocity_type} and Figure~\ref{fig:velocity_type} track the
wall type structure from deep subluminal to superluminal.  The dichotomy is
stable across the transition: Rodal remains $100\%$ Type~I to $v_s = 2.5$,
while the Alcubierre and Nat\'ario walls remain Type-IV dominated
(Van~den~Broeck only above its transition, below).  The Type-IV fraction
\emph{decreases} with $v_s$ (more of the wall acquires a rest frame as the bubble
speeds up), monotonically for Alcubierre and Nat\'ario across the whole sweep and
for Van~den~Broeck only above its transition, and there to within a final rise from
$74.6\%$ to $76.8\%$ at $v_s=2.5$: the wall Type~I fraction
reaches $24.9\%$
(Alcubierre), $37.5\%$ (Nat\'ario), and $23.2\%$ (Van~den~Broeck) at
$v_s=2.5$ (Figure~\ref{fig:velocity_type}), even as the
invariant violation severity \emph{grows} sharply.
Van~den~Broeck runs the other way first, passing through a velocity-driven
Type-I$\,\to\,$Type-IV transition that takes its wall Type-IV fraction from
$7.1\%$ at $v_s=0.1$ to a peak of $92.6\%$ at $v_s=0.7$, its wall Type~I fraction
crossing $50\%$ at $v_s\approx\vdbTransitionVS$; the decline sets in only past that
peak.  For the everywhere-Type-I Rodal geometry the
invariant NEC and DEC margins (Figure~\ref{fig:velocity_type}b) deepen
monotonically from $\min(\rho+p_i)\approx-0.0077$ at $v_s=0.1$ to $-4.799$
at $v_s=2.5$, following the quadratic law $\min(\rho+p_i)=-\rodalNECscaling\,v_s^2$
($R^2=1$ over the full sweep $v_s\in[0.1,2.5]$; the subluminal-only fit of
Table~\ref{tab:speed_scaling}(b) returns the same coefficient to machine precision, as the
exactness of the law for an irrotational shift requires, $|\min(\rho+p_i)|/v_s^2$
being the constant $0.7679023$ at every sampled speed).  Because Rodal is irrotational its wall
momentum density vanishes, so by Section~\ref{sec:shift-vorticity} the deficit is
the single-term law exactly, with no momentum correction; the shift is linear in
$v_s$ and the energy density is quadratic.  The strictly negative coefficient is
consistent with the Santiago--Schuster--Visser no-go~\cite{santiago2021}, whose null
conclusion holds under the subsidiary conditions its authors state; the $v_s^2$ speed
dependence is our calculation for this family, not a feature of the theorem.

\begin{table}[tbp]
  \centering
  \footnotesize
  \caption{Wall-restricted Hawking--Ellis Type~I / Type~IV fractions (\%,
    proper-volume weighted) at subluminal, luminal, and superluminal speeds.
    Computed frame-independently from $T^a{}_b$ on the wall-resolved benchmark grid
    ($N=100$, wall resolved to $7.6$ cells); Type-IV labels cross-checked
    (Section~\ref{sec:scope}); the full sweep to $v_s=2.5$ is shown in
    Figure~\ref{fig:velocity_type}.}
  \label{tab:velocity_type}
  \begin{tabular}{@{}l cc cc cc@{}}
  \toprule
  & \multicolumn{2}{c}{$v_s=0.5$ (sub)} & \multicolumn{2}{c}{$v_s=1.0$ (luminal)} & \multicolumn{2}{c}{$v_s=2.0$ (super)} \\
  \cmidrule(lr){2-3}\cmidrule(lr){4-5}\cmidrule(lr){6-7}
  Metric & Type~I & Type~IV & Type~I & Type~IV & Type~I & Type~IV \\
  \midrule
  Alcubierre & 1.3 & 98.7 & 5.7 & 94.3 & 16.7 & 83.3 \\
  Natário & 11.5 & 88.5 & 20.6 & 79.4 & 34.9 & 65.1 \\
  Van den Broeck & 18.9 & 81.1 & 21.0 & 79.0 & 25.4 & 74.6 \\
  Rodal & 100.0 & 0.0 & 100.0 & 0.0 & 100.0 & 0.0 \\
  \bottomrule
\end{tabular}

\end{table}

\begin{figure}[tbp]
  \centering
  \includegraphics[width=\textwidth]{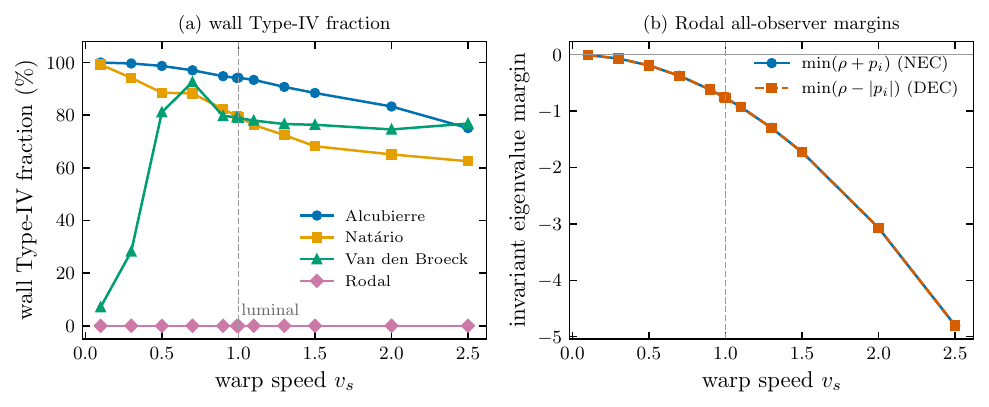}
  \caption{Wall type structure across the luminal transition ($v_s=1$,
    dashed).  (a)~Wall Type-IV fraction versus warp speed.  The vortical walls shed
    Type-IV volume as the bubble speeds up, the momentum channel closing point by
    point above the geometry's own $v_\star=2d/|a|$
    (Section~\ref{sec:closing-speed}); Van~den~Broeck first acquires it, through
    the transverse channel its conformal factor opens.  Rodal is identically zero
    at every speed, by Lemma~\ref{lem:rodal}.
    (b)~Invariant all-observer margins for the everywhere-Type-I Rodal geometry,
    $\min(\rho+p_i)$ (NEC) and $\min(\rho-|p_i|)$ (DEC).  The two coincide because
    the extremal principal pressure is negative, so $\rho-|p_i|=\rho+p_i$ there;
    both follow $-C\,v_s^2$ exactly (equation~\ref{eq:wall-deficit}) and deepen
    through and beyond $v_s=1$.}
  \label{fig:velocity_type}
\end{figure}

\subsection{Shift vorticity and the type of the momentum-sourced walls}
\label{sec:shift-vorticity}

The type map of Section~\ref{sec:invariant-benchmark} raises a sharper question:
why is one geometry globally Type~I while the others have Type-IV walls?  One
channel of that question has an exact answer, and only one.  The Eulerian wall
momentum density that opens the momentum-plane Type-IV pair is, in magnitude and
to all orders in $\beta$, exactly the curl of the shift vorticity,
$8\pi|j|=\tfrac12|\nabla\times(\nabla\times\beta)|$ (Lemma~\ref{lem:jcurl}), where
$\beta_i=g_{0i}$ is the ADM shift; it holds
pointwise for every flat-slice unit-lapse metric.  It controls the momentum
channel exactly: the wall momentum vanishes precisely on the
shifts that are gradients up to a rigid rotation
(Lemma~\ref{lem:gradient}).  The algebraic type is not a function of it, since
Type~I also requires the transverse channel to stay closed, which Van~den~Broeck's
conformal factor does not (treated below).  On the slice we split the covariant gradient of the
shift into expansion, shear, and vorticity parts, as a fluid
velocity field is decomposed,
\begin{equation}
  D_i\beta_j = \tfrac{1}{3}\,\theta_\beta\,\gamma_{ij}
             + \sigma^\beta_{ij} + \omega^\beta_{ij},
  \qquad
  \omega^\beta_{ij} = D_{[i}\beta_{j]} = \partial_{[i}\beta_{j]},
  \label{eq:shift-decomp}
\end{equation}
where the vorticity part is the exterior derivative of the shift one-form,
so it is exact from first metric derivatives and invariant under spatial
coordinate changes.  We summarize each drive by the dimensionless
vorticity fraction
$\mathcal{R}_\omega = \omega^2_\beta/(\tfrac{1}{3}\theta_\beta^2
+ \sigma^2_\beta + \omega^2_\beta)$, the share of the shift-gradient norm
carried by rotation.  Because the shift scales linearly with the warp speed,
$\mathcal{R}_\omega$ is independent of $v_s$.

Table~\ref{tab:shift_vorticity}(a) and Figure~\ref{fig:shift-vorticity} report
the wall-restricted, proper-volume-weighted decomposition, from which three
features stand out.  First, the Rodal shift is irrotational \emph{identically},
not to machine precision: the measured vorticity fraction is
$\mathcal{R}_\omega = \rodalVortFrac$, but what makes Rodal the lone globally
Type~I drive is Lemma~\ref{lem:rodal}, which settles it on the continuum.  The
sufficient direction is classical: an irrotational, unit-lapse, spatially flat shift
admits a scalar potential and gives Type~I
matter~\cite{santiago2021,martinmoruno2018core}, here verified in closed form; Lemma~\ref{lem:gradient} supplies the converse, so no other shift in
this class could have taken its place.  Second, the remaining drives have nonzero
shift vorticity
($\mathcal{R}_\omega = \alcubierreVortFrac$ Alcubierre,
$\vdbVortFrac$ Van~den~Broeck, $\natarioVortFrac$ Nat\'ario) and all have
Type-IV walls.  Third, the Nat\'ario drive has \emph{zero}
shift expansion (its expansion fraction is $\natarioExpFrac$ to machine
precision, its defining property) yet the largest vorticity fraction, and it
has a Type-IV wall: vorticity, not expansion, is the operative obstruction.

$\mathcal{R}_\omega$ is not the criterion.  The geometric law is
Lemma~\ref{lem:gradient}, and it is a statement about $j$ and not about the type: the
discriminant $\Delta$ that decides the momentum channel involves the Eulerian energy
density and parallel stress as well as $|j|$, and the transverse channel of
Corollary~\ref{cor:conformal} is not governed by $\Delta$ at all.  Nor is
$\mathcal{R}_\omega$ the functional of the shift that the law names.  What sources
$|j|$ is $\nabla\times\omega$, not $\omega$: a spatially uniform rotation has a
maximal $\mathcal{R}_\omega$ with $\nabla\times\omega=0$ and $j=0$, which is the
$\Omega\times\mathbf{x}$ case the lemma isolates, and it is Minkowski in rotating
coordinates.  Within this family, whose walls localize their rotation over a finite
thickness so that a large $\mathcal{R}_\omega$ comes with the spatially varying
$\nabla\times\omega$ that sources $|j|$, $\mathcal{R}_\omega$ is an empirical
fingerprint of what the lemma decides exactly.  We therefore state it in the form we
use it: shift vorticity is an empirical structural trend within this tested family
and is not a sufficient or universal controller of the Hawking--Ellis type, of the
null-energy deficit, or of the curvature scaling.  What is exact is stated of the
functional it is exact for: $\nabla\times\omega$ for the momentum density
(Lemma~\ref{lem:jcurl}, Lemma~\ref{lem:gradient}) and $\partial\omega$ for the
leading Kretschmann coefficient (Section~\ref{sec:construction-exoticity}).  Neither
is $\omega$, and a uniform rotation separates them.

The mechanism is the momentum discriminant \eqref{eq:discriminant} of
Section~\ref{sec:classification}, read on the wall: the antisymmetric (vorticity)
part of the shift gradient sources the Eulerian momentum density $j$, the symmetric
part (expansion and shear) the energy density $\rho_n$ and the spatial stress
$S$.  A vortical shift drives $\Delta<0$ and opens the Type-IV
wall wherever its momentum dominates (a low-to-moderate
speed condition at each point, Section~\ref{sec:classification}), so the Type-IV
fraction falls as the bubble speeds up (the Van~den~Broeck conformal channel of
Corollary~\ref{cor:conformal} is the exception).  At the matched wall the ratio
$2|j|/|\rho_n+S_\parallel|$ vanishes for the
irrotational Rodal drive and exceeds unity for the shift-driven Alcubierre and
Nat\'ario walls, whose imaginary part $\operatorname{Im}\lambda$ then matches the
momentum-plane value $\tfrac12\sqrt{-\Delta}$
(Table~\ref{tab:shift_vorticity}(b)); Van~den~Broeck is near the type boundary.

The Type-IV imaginary eigenvalue is set by the discriminant,
\begin{equation}
  \operatorname{Im}\lambda = \tfrac12\sqrt{-\Delta}
    = \tfrac12\sqrt{4|j|^2 - (\rho_n+S_\parallel)^2}\,,
  \label{eq:imag-discriminant}
\end{equation}
which reduces to $f\approx|j|$ where the momentum dominates.  On a flat unit-lapse
slice the Eulerian momentum density is fixed by the shift through an exact identity.

\begin{lemma}[Momentum density is the curl of the shift vorticity]
\label{lem:jcurl}
On a flat unit-lapse slice ($\alpha=1$, $\gamma_{ij}=\delta_{ij}$, static) the
Eulerian momentum density $j^i=-h^i{}_a T^a{}_b n^b$ of the drive $\beta_i=g_{0i}$
obeys, in magnitude and to all orders in $\beta$,
\begin{equation}
  8\pi\,|j|=\tfrac12\,|\nabla\times(\nabla\times\beta)|=\tfrac12\,|\nabla\times\omega|,
  \qquad \omega=\nabla\times\beta.
  \label{eq:j-curl}
\end{equation}
Consequently an irrotational shift ($\omega=0$) and a spatially uniform rotation
($\nabla\times\omega=0$) both have $j=0$; only a spatially varying wall vorticity
sources $j$.
\end{lemma}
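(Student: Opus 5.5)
The plan is to read the identity straight off the momentum (Gauss--Codazzi) constraint, which Section~\ref{sec:classification} already invoked for the $v_s$-scaling argument. The point is that on a flat unit-lapse slice that constraint is exactly linear in the extrinsic curvature, and the extrinsic curvature is exactly linear in the shift. So no expansion in $\beta$ is ever made, and the result holds ``to all orders''.

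\emph{Step one: the extrinsic curvature.} With $\alpha=1$ and $\gamma_{ij}=\delta_{ij}$ independent of $t$, the ADM definition
\[
K_{ij}=\tfrac{1}{2\alpha}\left(-\partial_t\gamma_{ij}+D_i\beta_j+D_j\beta_i\right)
\]
collapses to $K_{ij}=\pm\partial_{(i}\beta_{j)}$; the paper's convention is $K_{ij}=-\partial_{(i}\beta_{j)}$. Three remarks settle the bookkeeping.
\begin{itemize}
\item Any explicit time dependence of $\beta$ (for example through $x-v_st$) does not enter, because only $\partial_t\gamma_{ij}$ appears and it vanishes.
\item $D_i=\partial_i$ on the flat slice.
\item Indices are raised and lowered with $\delta_{ij}$, so $\beta_i=g_{0i}=\beta^i$. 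Here I would check that $g_{0i}=\gamma_{ij}\beta^j$, which holds for the ADM form with unit lapse.
\end{itemize}
Consequently $K=-\theta$, where $\theta=\nabla\!\cdot\!\beta$.

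\emph{Step two: the momentum constraint.} The contracted Codazzi equation, combined with $G_{ab}=8\pi T_{ab}$, gives
\[
8\pi j_i=D_jK^j{}_i-D_iK,
\]
where $j_i=-h_i{}^aT_{ab}n^b$ is exactly the Eulerian momentum density of the statement. This identity is exact geometry; it is not a field equation to be solved. Substituting step one gives
\[
8\pi j_i=-\tfrac12\left(\nabla^2\beta_i+\partial_i\theta\right)+\partial_i\theta
=\tfrac12\left(\partial_i(\nabla\!\cdot\!\beta)-\nabla^2\beta_i\right).
\]
By the flat-space vector identity $\nabla\times(\nabla\times\beta)=\nabla(\nabla\!\cdot\!\beta)-\nabla^2\beta$, the right-hand side is $\tfrac12[\nabla\times(\nabla\times\beta)]_i$.

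Taking the Euclidean norm then gives \eqref{eq:j-curl}. Since $\gamma_{ij}=\delta_{ij}$, the Euclidean norm coincides with the proper norm of the spatial covector. Both consequences are immediate:
\begin{itemize}
\item If $\omega=0$, then $\nabla\times\omega=0$.
\item For a uniform rotation $\beta=\Omega\times\mathbf{x}$, $\omega=2\Omega$ is constant, so $\nabla\times\omega=0$.
\end{itemize}
In both cases $j=0$.

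\emph{Main obstacle: signs and normalizations.} I expect the only real difficulty to be keeping signs and normalizations consistent. Four choices interact:
\begin{itemize}
\item the sign convention for $K_{ij}$;
\item the sign in $j^i=-h^i{}_aT^a{}_bn^b$;
\item the orientation $n_a=(-1,0,0,0)$;
\item whether the curl is taken of $\beta^i$ or of $\beta_i$.
\end{itemize}
Because the lemma is stated in magnitude, an overall sign slip is harmless. A factor-of-two slip, for instance from writing $\omega_{ij}=\partial_{[i}\beta_{j]}$ rather than $\nabla\times\beta$, would not be harmless.

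I would therefore fix these choices first by checking a single explicit case: the Alcubierre shift $\beta=(-v_sf,0,0)$ on the slice. For it, $\nabla\times\nabla\times\beta$ is computable by hand, and $T_{ab}n^ae_i^b$ can be compared against the autodiff output of Section~\ref{sec:autodiff} at a few wall points.
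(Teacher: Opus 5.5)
Your proposal is correct and is essentially the paper's own proof: take $K_{ij}=-\partial_{(i}\beta_{j)}$ on the flat static slice, substitute into the momentum constraint $8\pi j^i=D_j(K^{ij}-\gamma^{ij}K)$, and recognise $\tfrac12[\nabla(\nabla\cdot\beta)-\nabla^2\beta]$ as $\tfrac12\nabla\times(\nabla\times\beta)$. The paper likewise leaves the overall sign as a convention, since only $|j|$ enters $\Delta$ and the null witness, and the result holds to all orders for the reason you give: $K$ is exactly linear in $\beta$, and the flat-slice constraint is exactly linear in $K$.
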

\begin{proof}
On the flat static slice the extrinsic curvature is
$K_{ij}=-\tfrac12(\partial_i\beta_j+\partial_j\beta_i)$, with trace
$K=-\nabla\!\cdot\!\beta$, so the momentum constraint
$8\pi j^i=D_j(K^{ij}-\gamma^{ij}K)$ expands to
\begin{equation*}
  8\pi j_i=\partial_jK_{ij}-\partial_iK
  =-\tfrac12\partial_i(\nabla\!\cdot\!\beta)-\tfrac12\nabla^2\beta_i
   +\partial_i(\nabla\!\cdot\!\beta)
  =\tfrac12\bigl[\nabla(\nabla\!\cdot\!\beta)-\nabla^2\beta\bigr]_i
  =\tfrac12\bigl[\nabla\times(\nabla\times\beta)\bigr]_i ;
\end{equation*}
the overall sign is a
convention and only $|j|$ enters $\Delta$ and the null witness.  An irrotational
$\beta=\nabla\phi$ or a uniform-vorticity $\beta$ both give
$\nabla\times(\nabla\times\beta)=0$.
\end{proof}

\noindent This identity is \emph{not} new: it is equation~(4.8) of
Santiago--Schuster--Visser~\cite{santiago2021}, written there as
$f_i=\tfrac1{16\pi}[\nabla\times(\nabla\times v)]_i$, and the corollary that a
zero-vorticity shift gives block-diagonal, Hawking--Ellis Type~I stress-energy is
theirs as well.  What their identity also yields, and what they do not state, is the
converse: the momentum channel is not merely closed by an irrotational shift, it is
closed by nothing else.  The rotation term is pure gauge, a rigid rotation of a flat
unit-lapse slice being Minkowski in rotating coordinates, so the momentum channel is
closed exactly on the gradient shifts.

\begin{lemma}[The momentum channel is the non-gradient part of the shift]
\label{lem:gradient}
On a flat unit-lapse slice, let the shift $\beta$ be smooth on $\mathbb{R}^3$ with
bounded vorticity $\omega=\nabla\times\beta$.  Then
\begin{equation}
  j\equiv0
  \iff
  \beta=\nabla\phi+\Omega\times\mathbf{x}
  \quad\text{for a scalar }\phi\text{ and a constant vector }\Omega,
  \label{eq:gradient-iff}
\end{equation}
and the rotation term gives $K_{ij}=0$, hence $G_{ab}=0$, identically.
\end{lemma}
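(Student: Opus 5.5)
By Lemma~\ref{lem:jcurl}, $j\equiv0$ is equivalent to $\nabla\times\omega=0$ on $\mathbb{R}^3$, where $\omega=\nabla\times\beta$. The plan is to work entirely with this vector identity and then return to the shift.

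\emph{Sufficiency ($\Leftarrow$).} For $\beta=\nabla\phi+\Omega\times\mathbf{x}$ we compute $\omega=2\Omega$, which is constant, so $\nabla\times\omega=0$ and $j=0$. For the second claim we need only $\partial_i(\Omega\times\mathbf{x})_j=\epsilon_{jki}\Omega^k$. This is antisymmetric in $(i,j)$, so its contribution to $K_{ij}=-\partial_{(i}\beta_{j)}$ vanishes. With unit lapse, a flat static slice and $K_{ij}=0$, the Gauss--Codazzi relations give
\begin{itemize}
  \item vanishing spatial Riemann tensor;
  \item vanishing Codazzi term;
  \item vanishing evolution-equation source, since $\partial_tK_{ij}$ and $\pounds_\beta K$ are then zero.
\end{itemize}
The full Riemann tensor therefore vanishes and $G_{ab}=0$. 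Put more directly, the rotation term is the metric of Minkowski space in rigidly rotating coordinates, so it is flat.

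\emph{Necessity ($\Rightarrow$).} Suppose $\nabla\times\omega=0$. Since $\omega$ is itself a curl, it is also divergence-free, $\nabla\cdot\omega=0$. A smooth vector field on $\mathbb{R}^3$ with zero curl and zero divergence is harmonic componentwise: $\nabla^2\omega=\nabla(\nabla\cdot\omega)-\nabla\times(\nabla\times\omega)=0$. By hypothesis $\omega$ is bounded, so Liouville's theorem applied to each component makes $\omega$ a constant, say $\omega=2\Omega$. Then $\beta-\Omega\times\mathbf{x}$ has zero curl on the simply connected domain $\mathbb{R}^3$, so by the Poincar\'e lemma it equals $\nabla\phi$ for a smooth $\phi$. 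This gives \eqref{eq:gradient-iff}.

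The main obstacle is the Liouville step, because it is the only place where a global hypothesis enters. Without boundedness of $\omega$ the converse fails: a linear vorticity such as $\omega=\nabla h$ with $h$ a harmonic quadratic is curl-free and divergence-free but not constant, and gives $j=0$ for a non-gradient, non-rigid shift. I will therefore state explicitly that boundedness of $\omega$ on all of $\mathbb{R}^3$ is exactly the hypothesis used. I will also note that wall-localized drives meet it trivially, since their vorticity is compactly supported.
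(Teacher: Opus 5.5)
Your proposal is correct and follows the paper's proof step for step: you reduce $j\equiv0$ to $\nabla\times\omega=0$ via Lemma~\ref{lem:jcurl}, show that the divergence-free, curl-free $\omega$ is harmonic, use Liouville to make it the constant $2\Omega$, apply the Poincar\'e lemma to $\beta-\Omega\times\mathbf{x}$, and use the antisymmetry of $\partial_i(\Omega\times\mathbf{x})_j$ to kill $K_{ij}$. Your Gauss--Codazzi justification of $G_{ab}=0$ and your harmonic-quadratic counterexample, which shows that boundedness of $\omega$ is genuinely needed, are sound additions the paper leaves implicit; one small correction is that the $\tanh$ walls have exponentially decaying rather than compactly supported vorticity, though boundedness is all your argument uses.
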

\begin{proof}
If $\beta$ has that form then $\omega=2\Omega$ is constant, so $\nabla\times\omega=0$
and $j\equiv0$ by Lemma~\ref{lem:jcurl}.  Conversely $j\equiv0$ gives
$\nabla\times\omega=0$, while $\nabla\!\cdot\!\omega=0$ holds for any curl, so
$\nabla^2\omega=\nabla(\nabla\!\cdot\!\omega)-\nabla\times(\nabla\times\omega)=0$:
each Cartesian component of $\omega$ is harmonic on $\mathbb{R}^3$ and bounded, hence
constant by Liouville's theorem.  Writing $\omega=2\Omega$ and using
$\nabla\times(\Omega\times\mathbf{x})=2\Omega$ leaves
$\nabla\times(\beta-\Omega\times\mathbf{x})=0$ on a simply connected domain, so that
difference is a gradient.  Finally
$\partial_i(\Omega\times\mathbf{x})_j=\epsilon_{jki}\Omega_k$ is antisymmetric in
$(i,j)$, so $K_{ij}=-\partial_{(i}\beta_{j)}$ vanishes on it.
\end{proof}

\noindent Since $\beta\propto v_s$, equation~\eqref{eq:j-curl} gives
$|j|\propto v_s$.  In the controlled differential-rotation
construction, with $\omega\equiv(\omega_{ij}\omega^{ij})^{1/2}$ the wall
vorticity scalar (the same $\omega$ as in the kinematic decomposition
$\theta^2/3+\sigma^2+\omega^2$; it equals $|\nabla\times\beta|/\sqrt2$), the
imaginary part is linear \emph{to leading order},
$\operatorname{Im}\lambda=\kappa\,\omega+O(\omega^3)$, with a closed-form slope
$\kappa=|j|/\omega$ fixed by the vorticity profile through
equation~\eqref{eq:j-curl}; the measured wall value $\kappa\approx\kappaVorticity$
($R^2=1$) confirms the slope.  It is a slope and not an identity: $\Delta$ contains
an $\omega^4$ term $f'(1)^4/(64\pi^2)$, so the $O(\omega^3)$ correction is nonzero
whenever the profile is non-flat.  For the full metrics the nonzero
$\rho_n+S_\parallel$ enters through \eqref{eq:imag-discriminant}, and expanding
the square root gives a leading fractional over-prediction
$(\rho_n+S_\parallel)^2/(8|j|^2)$ of the exact
$\tfrac12\sqrt{-\Delta}\le|j|=\kappa\,\omega$ by the localized slope: a term set by
$\rho_n+S_\parallel$.  The slope $\kappa$ is
wall-geometry dependent and not a single constant across drives; what is exact and
drive-independent is the sign of $\Delta$ and the imaginary part
$\tfrac12\sqrt{-\Delta}$.  One consequence is exact: an irrotational shift
($\nabla\times\beta=0$) has $j=0$, hence $\Delta=(\rho_n+S_\parallel)^2\ge0$ and
global Type~I, the mechanism behind Rodal's positive-energy
construction~\cite{rodal2026}.

\begin{lemma}[The irrotational drives are Type~I by identity]
\label{lem:rodal}
Write the Rodal shift in the orthonormal spherical frame of
\eqref{eq:rodal-shift}, $\hat\beta_r=-v_sF(r_s)\cos\theta$,
$\hat\beta_\theta=v_sG(r_s)\sin\theta$.  Its only possibly nonvanishing curl
component is the azimuthal one, and
\begin{equation}
  (\nabla\times\beta)_\phi=\frac{v_s\sin\theta}{r_s}\bigl[(r_sG)'-F\bigr],
  \label{eq:rodal-curl}
\end{equation}
so the shift is curl-free exactly when $F=(r_sG)'$.  The profiles
\eqref{eq:shape-tanh} and \eqref{eq:rodal-G} satisfy that relation identically in
$(r_s,\sigma,R_b)$.  Hence $\nabla\times\beta\equiv0$, so $j\equiv0$ by
Lemma~\ref{lem:jcurl}, so $T^a{}_b$ is block diagonal $(-\rho_n)\oplus S_{ij}$
with $S$ symmetric: its spectrum is real and the slice normal is a timelike
eigenvector.  The drive is Hawking--Ellis Type~I at every point of every slice,
at every $v_s$.  The same holds for the Garattini--Zatrimaylov bubble on its
matching condition, whose shift is a sum of two radial gradients
(Appendix~\ref{app:metric_definitions}) and so is covered by
Lemma~\ref{lem:gradient}.
\end{lemma}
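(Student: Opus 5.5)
The plan is to prove the statement in three links: a curl computation, a profile identity, and then the earlier lemmas. The curl computation is routine once the frame is fixed. The shift \eqref{eq:rodal-shift} is axisymmetric about the direction of motion. In the orthonormal spherical frame centred on the bubble it has no azimuthal component, and its components do not depend on $\phi$.

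\textbf{Curl components.} In the standard orthonormal spherical curl formula, the $r$ and $\theta$ components of $\nabla\times\beta$ contain only $\partial_\phi$ derivatives and the azimuthal component $\hat\beta_\phi$. Both vanish, so those two curl components are zero identically. The azimuthal component is
\begin{equation*}
(\nabla\times\beta)_\phi=\frac1{r_s}\bigl[\partial_{r_s}(r_s\hat\beta_\theta)-\partial_\theta\hat\beta_r\bigr].
\end{equation*}
Substituting $\hat\beta_\theta=v_sG\sin\theta$ and $\hat\beta_r=-v_sF\cos\theta$ gives $\partial_\theta\hat\beta_r=v_sF\sin\theta$. This yields \eqref{eq:rodal-curl} directly. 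Since $\sin\theta\neq0$ off the axis, the shift is curl-free there if and only if $F=(r_sG)'$. On the axis, continuity of the smooth field carries the conclusion over.

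\textbf{Profile identity.} Next I would verify $F=(r_sG)'$ for \eqref{eq:shape-tanh} and \eqref{eq:rodal-G}. My expectation is that \eqref{eq:rodal-G} is built as a radial average, $r_sG(r_s)=\int_0^{r_s}F$, written in closed form through $\log\cosh$ terms of the $\tanh$ profile. If so, the plan is to differentiate that closed form symbolically and simplify with $\tfrac{\dd}{\dd x}\log\cosh x=\tanh x$. The result should reproduce the normalised $\tanh$ difference $F$ exactly, for every $\sigma$ and $R_b$.

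This is the step I expect to be the main obstacle. It needs the exact forms of the two profiles, including any normalisation constants. It also needs a check of behaviour at $r_s=0$: $G$ must stay finite, with $G(0)=F(0)$, so that the field, and hence its vanishing curl, is smooth through the bubble centre and the axis $\sin\theta=0$. I would treat the origin by a Taylor expansion of $r_sG$.

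\textbf{Conclusion for Rodal.} With $\nabla\times\beta\equiv0$ on all of $\mathbb{R}^3$, Lemma~\ref{lem:jcurl} gives $j\equiv0$. The Eulerian decomposition then makes $T^a{}_b$ block diagonal, $(-\rho_n)\oplus S_{ij}$, with $S$ symmetric in an orthonormal spatial frame. Its spectrum is therefore real, $S$ is diagonalizable, and $n^a$ is a timelike eigenvector. That is Type~I at every point. Neither the curl nor the profile identity involves $v_s$ or $t$; time enters only through the translation $x_s(t)$. The conclusion therefore holds on every slice and at every speed.

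\textbf{Garattini--Zatrimaylov bubble.} Here I would write the shift on the matching condition as $\nabla\phi_1+\nabla\phi_2$, each term a radial function times $\hat r$ about its own centre, as given in Appendix~\ref{app:metric_definitions}. Each such term is a gradient, so the shift is a gradient with $\Omega=0$. Lemma~\ref{lem:gradient}, or directly Lemma~\ref{lem:jcurl}, then gives $j\equiv0$, and the same block-diagonal argument yields Type~I.
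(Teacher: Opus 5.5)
Your proposal is correct and follows the paper's proof. Both use the termwise vanishing of the radial and polar curl components and the same azimuthal formula. Both check the profile identity by differentiating the $\ln\cosh$ closed form with $\tfrac{\dd}{\dd x}\ln\cosh x=\tanh x$, then apply Lemma~\ref{lem:jcurl}, use the $j=0$ block-diagonal step, and treat the Garattini--Zatrimaylov shift as a sum of gradients.

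The step you flag as the main obstacle is short. The leading $1$ in \eqref{eq:rodal-G} cancels exactly, leaving $r_sG=-\Delta_\sigma/(2\sigma\tanh\sigma R_b)$ with $\Delta_\sigma(r_s)=\ln\cosh\sigma(r_s-R_b)-\ln\cosh\sigma(r_s+R_b)$ and $\Delta_\sigma(0)=0$. So $r_sG=\int_0^{r_s}F$, as you guessed, and a single differentiation returns $F$.
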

\begin{proof}
With $\beta_\phi=0$ and nothing depending on $\phi$, the radial and polar curl
components vanish termwise and the azimuthal one is
$r_s^{-1}[\partial_{r_s}(r_s\hat\beta_\theta)-\partial_\theta\hat\beta_r]$, which
is \eqref{eq:rodal-curl}.  Writing
$\Delta_\sigma(r_s)=\ln\cosh\sigma(r_s-R_b)-\ln\cosh\sigma(r_s+R_b)$,
equation~\eqref{eq:rodal-G} is $G=-\Delta_\sigma/(2r_s\sigma\tanh\sigma R_b)$, so
$(r_sG)'=-\Delta_\sigma'/(2\sigma\tanh\sigma R_b)
=[\tanh\sigma(r_s+R_b)-\tanh\sigma(r_s-R_b)]/(2\tanh\sigma R_b)=f(r_s)=F$.  The
block-diagonal step is the $j=0$ case of the momentum decomposition of
Section~\ref{sec:classification}.
\end{proof}

\noindent The conclusion for the irrotational drive is Rodal's own
Proposition~2~\cite{rodal2026}, proved there under the same assumptions; the
relation $F=(r_sG)'$ is the irrotationality condition his construction is built to
satisfy, written in the conventions used here.  In this form the label is a
continuum property: no refinement of a finite sample could establish it.

\noindent This also settles a question the recent no-go literature raises.
Barzegar, Buchert and Vigneron~\cite{barzegar2026classification} prove that a
warp model in this class whose shift is the gradient of a \emph{harmonic}
function is Minkowski.  The Rodal shift is a gradient, so the theorem has to be
checked; it does not apply, because the potential is not
harmonic: $\nabla\cdot\beta=0.8912$ at
$(r_s,\theta)=(1.1,\pi/3)$ for $R_b=1$, $\sigma=8$, $v_s=\tfrac12$, consistent
with the measured expansion fraction of Table~\ref{tab:shift_vorticity}(a).  The
irrotational family is therefore a knife edge: non-trivial exactly off the
harmonic locus, and Minkowski on it.  The Garattini--Zatrimaylov potential is
likewise not harmonic, so the theorem does not reach it either.

Rodal's numerical vorticity ablation~\cite{rodal2026}, in which added vorticity
worsens the negative-energy content, is on our reading the same $\Delta$ turning
negative; that reading is ours and not a result Rodal states.

The momentum-plane mechanism is thus captured by a pointwise algebraic criterion,
the sign of $\Delta$ (with the noted Van~den~Broeck exception).  The reported Type-IV labels are always taken from the full
four-dimensional eigensolver, not from the sign of $\Delta$; as an independent
check, the algebraic sign $\Delta<0$ agrees with the eigensolver at $100\%$ of Rodal
wall points, $97.5\%$ of Alcubierre's, $87.3\%$ of Nat\'ario's and $76.2\%$ of
Van~den~Broeck's, and the agreement is \emph{higher} off the wall ($98.9\%$, $99.8\%$
and $81.6\%$ on the full grids of Alcubierre, Nat\'ario and Van~den~Broeck), where
$\Delta<0$ is a NEC-violation witness broader than Type~IV.  The wall
disagreements are points where
$\hat{\jmath}$ is not a principal axis of $S$ (the transverse channel), most
prominently Van~den~Broeck's conformal channel ($752$ wall points where the
eigensolver opens a transverse pair while $\Delta\ge0$).  The Type-IV
\emph{fraction} still varies with speed
and wall geometry (only $7\%$ of the Van~den~Broeck wall is Type-IV at $v_s=0.1$, before
its Type-I$\,\to\,$Type-IV transition), because $\Delta$ is evaluated across a wall
whose $\rho_n$, $S$, and $|j|$ vary from point to point.  Maeda~\cite{maeda2020nonstatic}
showed that in any static spacetime solving the Einstein equations the stress-energy
is Type~I wherever $t$ is timelike, a result extended to horizons and stationary
cases by Mart\'in-Moruno and Visser~\cite{martinmoruno2021backreaction}; the warp
wall is non-static, as the bubble translates, consistent with the Type-IV result.

\begin{table}[tbp]
  \centering
  \footnotesize
  \caption{\emph{(a)}~Wall-restricted, proper-volume-weighted decomposition of the
    ADM shift gradient (Eq.~\ref{eq:shift-decomp}) into expansion, shear, and
    vorticity fractions, with the resulting wall Type-IV range over
    $v_s\in[0.1,2.5]$.  The vorticity fraction $\mathcal{R}_\omega$ is
    independent of $v_s$.  \emph{(b)}~The momentum-density discriminant at the
    matched wall sample ($v_s=0.5$, off-axis wall point).  The ratio
    $2|j|/|\rho_n+S_\parallel|$ exceeds unity where the momentum dominates and the
    Eulerian boost plane goes complex ($\Delta<0$, Type~IV).  The measured
    $\operatorname{Im}\lambda$ is set beside the momentum-plane prediction
    $\tfrac12\sqrt{-\Delta}$ at this sample; the wall-maximum is in
    Table~\ref{tab:diagnostic_convergence}.}
  \label{tab:shift_vorticity}

  \emph{(a)}\;\begin{tabular}{@{}l ccc c@{}}
  \toprule
  & \multicolumn{3}{c}{Shift-gradient fraction} & Wall Type~IV \\
  \cmidrule(lr){2-4}
  Metric & Expansion & Shear & Vorticity $\mathcal{R}_\omega$ & range (\%) \\
  \midrule
  Alcubierre & 0.111 & 0.556 & 0.333 & 75--100 \\
  Natário & 0.000 & 0.492 & 0.508 & 63--99 \\
  Van den Broeck & 0.193 & 0.367 & 0.440 & 7--93 \\
  Rodal & 0.485 & 0.515 & 0.000 & 0--0 \\
  \bottomrule
\end{tabular}

  \medskip
  \emph{(b)}\;\begin{tabular}{l c c c c}
  \toprule
  Metric & Wall type & $2|j|/|\rho+S_\parallel|$ & $\operatorname{Im}\lambda$ & $\tfrac12\sqrt{-\Delta}$ \\
  \midrule
  Rodal & I & 0.00 & 0.000 & 0.000 \\
  Van~den~Broeck & IV & 0.89 & 0.058 & 0.000 \\
  Alcubierre & IV & 16.09 & 0.079 & 0.079 \\
  Nat\'ario & IV & 5.65 & 0.564 & 0.582 \\
  \bottomrule
\end{tabular}

\end{table}

\begin{figure}[tbp]
  \centering
  \includegraphics[width=\columnwidth]{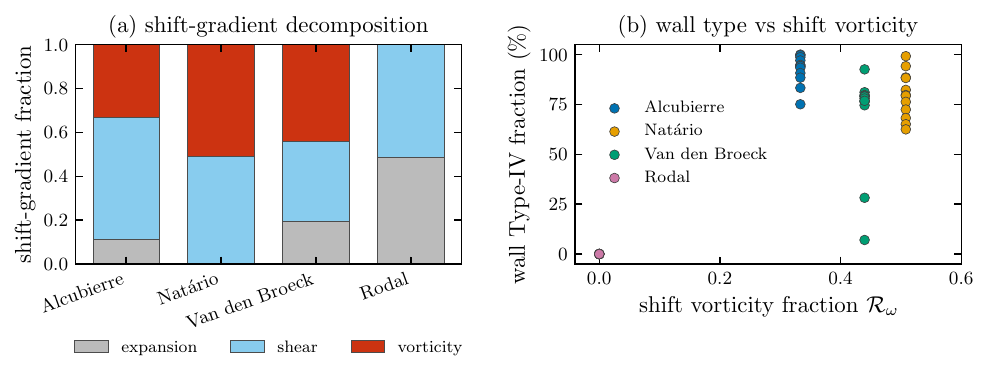}
  \caption{(a)~Irreducible decomposition of each drive's shift gradient,
    wall-restricted and proper-volume weighted: Nat\'ario has zero expansion,
    its defining property, and Rodal zero vorticity, which is
    Lemma~\ref{lem:rodal}.  (b)~Wall Type-IV fraction against the
    vorticity fraction $\mathcal{R}_\omega$ over the velocity sweep.  Since
    $\mathcal{R}_\omega$ is $v_s$-independent each drive is one vertical stripe,
    spread by speed, and the ordering is not monotone: Alcubierre sits at a lower
    $\mathcal{R}_\omega$ than Nat\'ario and a higher Type-IV fraction.  The panel is
    the fingerprint, not the law; the law is Lemma~\ref{lem:gradient}.}
  \label{fig:shift-vorticity}
\end{figure}

\subsection{The momentum channel closes at a speed fixed by the geometry}
\label{sec:closing-speed}

The exact scalings make the speed dependence of the momentum channel a property
of a single $v_s$-independent field.

At every wall point we compute $a=(\rho_n+S_\parallel)/v_s^2$ and $d=|j|/v_s$ on
the wall-resolved benchmark grid, form the closing speed $v_\star=2d/|a|$, and
evaluate $F_{\rm mom}$ of \eqref{eq:closing-fraction} at the twelve speeds of the
velocity sweep.  One grid evaluation fixes the whole curve.

We check that premise.  Between
$v_s=0.5$ and $v_s=1$ the coefficients agree to a relative
$3\times10^{-15}$ (Alcubierre) and $3\times10^{-14}$ (Nat\'ario), which is machine
precision.  Van~den~Broeck
does not have flat slices and its $a$ moves by a factor of order unity, so it
fails the premise, by the hypothesis that excludes it from
Lemma~\ref{lem:integrated}; its gaps are not read as the implication, and the
comparison below is over the flat-slice drives only.  For Rodal the momentum vanishes,
$v_\star\equiv0$, and both sides read $0\%$ at every speed, which is
Lemma~\ref{lem:rodal} read off the grid.

Table~\ref{tab:closing_speed} reports $F_{\rm mom}$ against the measured Type-IV
fraction.  The inequality holds at every sampled speed for every flat-slice drive,
and the gap grows with $v_s$: from $0.1$ to $26.3$ percentage points for Alcubierre
and from $0.8$ to $34.8$ for Nat\'ario as $v_s$ runs from $0.1$ to $2.5$.  As the
speed rises the momentum channel closes point by point, so what remains Type~IV is
increasingly the transverse channel, which $\Delta$ does not govern at all: the
gap measures how much of the Type-IV wall the momentum discriminant does
\emph{not} account for.

Section~\ref{sec:classification} showed $\Delta<0$ does not force Type~IV once the
splitting hypothesis fails.  The inequality is not a theorem about all flat-slice drives: it
is $F_{\rm mom}$ against the eigensolver's own Type-IV fraction on the same grid,
with the pointwise implication $\Delta<0\Rightarrow$~Type~IV checked, and holding,
at every point of every sampled flat-slice wall.  A negative gap would falsify
that premise on these drives, not a theorem.

\begin{table}[tbp]
  \centering
  \footnotesize
  \caption{The momentum channel as a speed-independent field.  $\delta a$ is the
    relative change in $a=(\rho_n+S_\parallel)/v_s^2$ between $v_s=0.5$ and
    $v_s=1$, the premise check.  $\tilde v_\star$ is
    the median closing speed over the wall.  $F_{\rm mom}$ (\%) is
    \eqref{eq:closing-fraction}, computed once at $v_s=1$ and read at each speed;
    ``gap'' is the measured wall Type-IV fraction minus $F_{\rm mom}$, in
    percentage points.}
  \label{tab:closing_speed}
  \begin{tabular}{@{}l cc cc cc cc cc@{}}
  \toprule
  & & & \multicolumn{2}{c}{$v_s=0.1$} & \multicolumn{2}{c}{$0.5$} & \multicolumn{2}{c}{$1.0$} & \multicolumn{2}{c}{$2.5$} \\
  \cmidrule(lr){4-5}\cmidrule(lr){6-7}\cmidrule(lr){8-9}\cmidrule(lr){10-11}
  Metric & $\delta a$ & $\tilde v_\star$ & $F_{\rm mom}$ & gap & $F_{\rm mom}$ & gap & $F_{\rm mom}$ & gap & $F_{\rm mom}$ & gap \\
  \midrule
  Alcubierre & 2e-15 & 2.48 & 99.9 & +0.1 & 95.5 & +3.3 & 84.6 & +9.6 & 48.8 & +26.3 \\
  Natário & 3e-14 & 1.15 & 98.4 & +0.8 & 75.9 & +12.7 & 57.1 & +22.2 & 27.7 & +34.8 \\
  Van den Broeck & 2e+00 & 1.74 & 100.0 & -92.9 & 99.0 & -17.9 & 90.8 & -11.9 & 35.2 & +41.6 \\
  Rodal & 0e+00 & 0.00 & 0.0 & +0.0 & 0.0 & +0.0 & 0.0 & +0.0 & 0.0 & +0.0 \\
  \bottomrule
\end{tabular}

\end{table}

\subsection{Validation against WarpFactory and the single-frame miss diagnostic}
\label{sec:warpfactory}
\label{sec:individual}

The invariant benchmark of Section~\ref{sec:invariant-benchmark} is the primary
cross-metric statement; here we validate the underlying computation against
WarpFactory's published Eulerian NEC/WEC margins \cite{warpfactory2024} for the
Alcubierre metric at matched parameters, then apply the observer-robust analysis
and report the single-frame miss diagnostic that motivates the invariant
reformulation.

Two per-metric readings set the scale.  For Alcubierre at $v_s=0.5$ the Eulerian
and observer-independent NEC violation \emph{sets} coincide (the capped all-wall
NEC minimum is $-0.628$ robust against $-0.625$ Eulerian on the native uniform
grid, distinct from the Type-I invariant deficit $\min(\rho+p_i)=-0.158$ of
Table~\ref{tab:invariant_benchmark}), yet the WEC shows the severity gap of
Section~\ref{sec:optimization}: the robust minimum at $\zeta_{\max}=5$ is
${\sim}{-}3460$ against $-0.038$ Eulerian, a severity display at fixed
$\zeta_{\max}$ and not an infimum.  And
Van~den~Broeck, whose conformal factor acts on the spatial metric, is the only
Type-IV-walled drive with missed violations across all four conditions on the
uniform-grid diagnostic (Table~\ref{tab:missed}); its SEC has the largest
conditional miss among its conditions ($f_{\text{miss}|\text{viol}}=22.6\%$;
Figure~\ref{fig:vdb-sec}), and for Alcubierre the SEC
is the only missed condition at all.

Our Eulerian results reproduce WarpFactory's published violation regions and
structure, with relative differences of $10^{-3}$ to $10^{-2}$, attributable to
autodiff versus fourth-order finite differences; that comparison is against the
published figures rather than a recomputation.
Where WarpFactory samples the observer manifold, \warpax{} decides Type-I points
algebraically (exact, observer-independent) and applies continuous optimization
only at the residual non-Type-I points (Section~\ref{sec:optimization}); the
narrow-cone failure mode of fixed sampling is quantified below
(Figure~\ref{fig:fibonacci}).

We define the \emph{missed violation fraction} for a given energy
condition as
\begin{equation}
  f_{\text{miss}} = \frac{\#\{x \in \text{grid} :
    m_{\text{Euler}}(x) \geq 0 \;\wedge\; m_{\text{rob}}(x) < 0\}}
    {\#\{x \in \text{grid}\}}\,,
  \label{eq:fmiss}
\end{equation}
i.e.\ the fraction of grid points the Eulerian frame reports as satisfied
($m_{\text{Euler}}\geq 0$) but the observer-independent verdict marks violated (the exact algebraic slack
at Type~I, and at Type-IV points the null violation the label implies;
Section~\ref{sec:classification}).
The unconditional $f_{\text{miss}}$ (Table~\ref{tab:missed}) is a
\emph{domain-restricted diagnostic}: it scales with each metric's wall
thickness and grid volume and is \emph{not} a cross-metric severity measure.
The appropriate cross-metric statistic is the wall-restricted conditional miss
rate $f_{\text{miss}|\text{viol}}$, the fraction of robust-violated points the
Eulerian analysis misses, $(\text{missed})/(\text{total violated})$, which does not
dilute the failure rate over the satisfied points.  It is
reported on the matched benchmark
(Table~\ref{tab:matched_benchmark}(a)) and, invariantly, in
Section~\ref{sec:invariant-benchmark}; the matched and invariant computations, on
the same wall-resolved graded family, agree to within grid resolution (Rodal wall
DEC miss $73.6$--$74.0\%$).  Table~\ref{tab:missed} gives the
within-metric Eulerian-vs-robust comparison; note SEC misses are independent of
the others (the trace $\rho+\sum_i p_i$ can be negative while each
$\rho+p_i\geq 0$).

\begin{table}[tbp]
  \footnotesize
  \caption{Energy condition analysis at $v_s = 0.5$ on a $50^3$ grid
    ($\zeta_{\max} = 5$ throughout).
    \emph{Total}: percentage of grid points violating each condition on the robust
    margin of Section~\ref{sec:classification}.
    \emph{Missed}: percentage of grid points where the Eulerian analysis reports no
    violation the robust check finds, so the Eulerian violated
    fraction is the total minus the missed; $0.0\%$ means all violations are
    detected to the displayed precision, not that there are none.  Bold marks
    missed fractions $\geq 1\%$.}
  \label{tab:missed}
  \centering
  \begin{tabular}{@{}l cccc cccc@{}}
  \toprule
  & \multicolumn{4}{c}{Total violated (\%)}
  & \multicolumn{4}{c}{Missed by Eulerian (\%)} \\
  \cmidrule(lr){2-5} \cmidrule(lr){6-9}
  Metric & NEC & WEC & SEC & DEC
         & NEC & WEC & SEC & DEC \\
  \midrule
  Schwarzschild & 0.0 & 0.0 & 0.0 & 0.0
                 & 0.0 & 0.0 & 0.0 & 0.0 \\
  Alcubierre & 6.2 & 9.4 & 9.4 & 9.6
                 & 0.0 & 0.0 & \textbf{7.2} & 0.0 \\
  Van~Den~Broeck & 6.6 & 7.8 & 7.9 & 8.1
                 & 0.1 & 0.4 & \textbf{1.8} & 0.3 \\
  Nat\'ario & 7.7 & 12.0 & 12.0 & 12.2
                 & 0.0 & 0.0 & 0.0 & 0.0 \\
  Rodal & 87.0 & 87.0 & 87.0 & 99.9
                 & \textbf{1.6} & \textbf{15.6} & \textbf{28.0} & \textbf{28.5} \\
  \bottomrule
\end{tabular}

\end{table}

\begin{figure}[tbp]
  \centering
  \includegraphics[width=\textwidth]{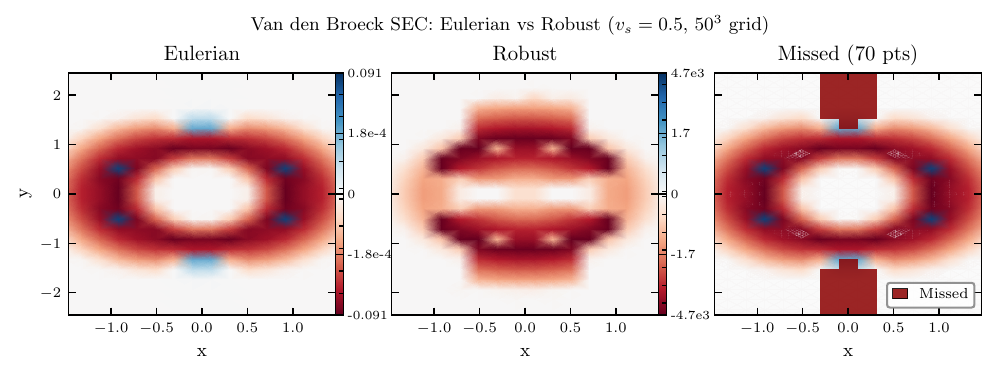}
  \caption{SEC evaluation for the Van~den~Broeck metric ($50^3$ grid, $v_s = 0.5$),
    its largest conditional miss across the four conditions.
    Left: Eulerian margin.  Center: robust margin, whose colour scale runs some five
    orders wider because the optimizer drives the boost to the rapidity cap
    $\zeta_{\max}=5$; its magnitude is therefore cap-set and diagnostic, while the
    \emph{sign} it disagrees on is not (Section~\ref{sec:scope}).
    Right: missed violations (red), points the Eulerian frame reports as satisfied
    and boosted observers do not.
    $f_{\text{miss}|\text{viol}} = 22.6\%$, with $1.8\%$ of grid points missed
    (Table~\ref{tab:missed}).  The panels show the $z=0$ mid-plane slice, while the
    quoted miss fractions are full-volume statistics, so the slice point counts need
    not match the volume fractions.}
  \label{fig:vdb-sec}
\end{figure}

\begin{figure}[tbp]
  \centering
  \includegraphics[width=0.7\textwidth]{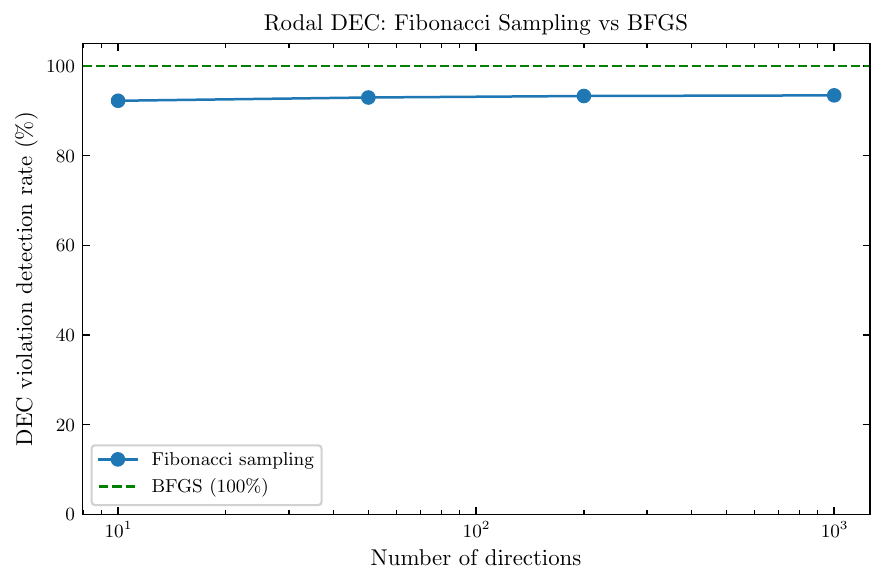}
  \caption{Fibonacci-lattice DEC sampling convergence for the Rodal
    metric ($25^3$ grid, $\zeta_{\max} = 5$).  Detection rate (left axis, circles)
    saturates at $\sim\!93$\% up to the tested density ($\leq 10^3$
    directions $\times$ 10 rapidities, i.e.\ $10^4$ observer samples per
    point).
    Dashed line: the exact Type-I test at $100\%$, by construction, over the
    $15\,624$ violating points of the $15\,625$ sampled.  Every
    one of those violations is shallower than $2.0\times10^{-6}$ in algebraic slack,
    which is why a fixed lattice misses them.}
  \label{fig:fibonacci}
\end{figure}

\begin{table}[tbp]
  \footnotesize
  \caption{\textbf{Matched-parameter cross-metric benchmark.}
    \emph{(a)}~All retained metrics at common family parameters ($R_b = 1$,
    $\sigma = 8$) on identical compact bounds, at the finest resolution of the
    ladder of part~(b).  The reported quantity is the volume-weighted
    \emph{wall-restricted} conditional miss rate: Eulerian-satisfied but
    robust-violated points, as a fraction of robust-violated points within the wall
    $f \in [0.1, 0.9]$.  \emph{(b)}~Per-metric resolution validation of that
    benchmark: the wall-restricted DEC miss rate at three levels of the
    wall-resolved graded ladder ($N = 80, 100, 120$, giving $5.9$, $7.6$ and $8.9$
    cells across the 10--90\% wall) and the maximum deviation across resolutions, in
    percentage points.  The miss rate is a non-smooth thresholded fraction, so it
    has a stability spread and not a Richardson order.}
  \label{tab:matched_benchmark}
  \centering

  \emph{(a)}\;\begin{tabular}{@{}l c cccc@{}}
  \toprule
  & Wall & \multicolumn{4}{c}{Missed by Eulerian, wall-restricted (\%)} \\
  \cmidrule(lr){3-6}
  Metric & pts & NEC & WEC & SEC & DEC \\
  \midrule
  Alcubierre & 22688 & 0.0 & 0.0 & 18.7 & 0.0 \\
  Natário & 22688 & 0.0 & 0.0 & 0.0 & 0.0 \\
  Van den Broeck & 22688 & 16.7 & 57.4 & 8.6 & 53.1 \\
  Rodal & 22688 & 12.5 & 72.2 & 19.5 & 73.6 \\
  \bottomrule
\end{tabular}

  \medskip
  \emph{(b)}\;\begin{tabular}{@{}l ccc c c@{}}
  \toprule
  & \multicolumn{3}{c}{Wall DEC miss (\%) at $N=$} & Max dev & Stable \\
  \cmidrule(lr){2-4}
  Metric & 80 & 100 & 120 & (pp) & \\
  \midrule
  Alcubierre & 0.0 & 0.0 & 0.0 & 0.00 & \checkmark \\
  Natário & 0.0 & 0.0 & 0.0 & 0.00 & \checkmark \\
  Van den Broeck & 56.1 & 51.3 & 53.1 & 2.58 & \checkmark \\
  Rodal & 75.7 & 74.0 & 73.6 & 1.27 & \checkmark \\
  \bottomrule
\end{tabular}

\end{table}

\paragraph{Velocity dependence of the Eulerian comparison.}
The full, frame-independent velocity story, including the superluminal regime
and the Hawking--Ellis type structure, is the sweep of
Section~\ref{sec:invariant-benchmark} (Table~\ref{tab:velocity_type},
Figure~\ref{fig:velocity_type}); here we record only the subluminal
Eulerian-versus-robust comparison underlying it.  For Alcubierre
(Table~\ref{tab:convergence_panel}(e)) the
Eulerian frame detects all NEC/WEC
points at every $v_s$ (missed $=0.0\%$); the robust margins deepen with $v_s$
(greater severity for boosted observers).  For Rodal at its native parameters
($R_b=100$, $\sigma=0.03$) the Eulerian frame misses
${\sim}\rodalDECmissVSfive\%$ of DEC and
${\sim}\rodalWECmissVSfive\%$ of WEC violations as a fraction of the full $50^3$ grid
(Table~\ref{tab:convergence_panel}(c)), with the missed points lying near the
satisfaction boundary (Figure~\ref{fig:fibonacci}).  This unconditional grid
fraction is not the wall-restricted conditional miss fraction of the
matched-parameter benchmark (${\approx}74\%$ DEC, ${\approx}73\%$ WEC;
Section~\ref{sec:invariant-benchmark}), which conditions on the violating points
within the active wall.  The discrepancy is systematic across the velocity range
and strongest for DEC and WEC.

\paragraph{Comparison with WarpFactory: sampling versus optimization.}
WarpFactory \cite{warpfactory2024,warpfactory_toolkit2024,warpfactory_docs2024}
takes the minimum energy-condition value over its sample of $\sim\!1000$ observer
directions and velocities at each grid point, agreeing with our Eulerian baseline
to the $\sim\!1$\% quoted above.  Discrete
sampling, however, can miss narrow violation cones in the observer manifold:
a deterministic Fibonacci-lattice sampler on the Rodal DEC saturates at
$\sim\!93$\% detection even at $10^4$ observer samples per point
(Figure~\ref{fig:fibonacci}).  All $15\,625$ points of
that grid are Type~I, so the DEC violating set is \emph{defined} by the eigenvalue
inequality $\rho\ge\max_i|p_i|$ with no observer search; the comparison is
between that exact set and what a fixed lattice recovers.  The remaining
$\sim\!7\%$ of violations, with algebraic slacks down to
$|\mathrm{margin}|\sim 10^{-6}$, lie in cones too narrow for the tested lattices
to resolve; a denser fixed lattice would eventually reach an open cone, but the
density required is not bounded by any quantity the sampling provides.

\subsection{Averaged and quantum conditions}
\label{sec:averaged-quantum}

The results above are \emph{pointwise} and \emph{classical}.  A pointwise NEC
violation is logically weaker than the averaged statement, and carries no
semiclassical information.  We close the energy-condition analysis by probing
both edges with the same engine: the averaged null energy along null rays, and
the Ford--Roman quantum inequality along a coordinate-static wall worldline
(Table~\ref{tab:averaged_quantum}, Figure~\ref{fig:averaged_quantum}).

\paragraph{Averaged null energy.}
We integrate the null contraction $T_{ab}k^ak^b$ along axial null rays at
varying perpendicular impact parameter $b$, with the per-point null-projected
tangent so the integrand is an exact null observable.  Every retained metric
admits rays with a \emph{negative} line integral: the minimum over $b$ is
$-0.79$ (Alcubierre, $b\!\approx\!0.86$), $-0.099$ (Van~den~Broeck,
$b\!\approx\!0.82$), $-0.014$ (Rodal, $b\!\approx\!1.27$), and $-0.0040$
(Nat\'ario, $b\!\approx\!0.86$).  The on-axis ray separates the geometries:
Alcubierre is
already negative on axis, whereas Rodal, Nat\'ario, and Van~den~Broeck turn
negative only off-axis, consistent with their wall
violations sitting away from the symmetry axis.  That separation is a
property of the coordinate ray, not of the geometry: along the actual null
geodesic (Table~\ref{tab:averaged_quantum}(a)) Van~den~Broeck is already negative on
axis, and only Rodal and Nat\'ario are positive on axis by both measures.  The line integral is
evaluated at $1024$ samples along the ray, far better resolved than the $50^3$
volume grid, and the
Minkowski ray integrates to zero to ${<}10^{-8}$.  This is a coordinate
null-ray diagnostic, not a geodesic ANEC: negative-average
rays exist for every drive, but cancellation along that arbitrary path (most
visible for the oscillatory Nat\'ario wall, whose ray integral nearly cancels)
makes it unsuitable for cross-metric ranking.

\paragraph{Geodesic-integrated ANEC, with two independent witnesses.}
We make the averaged statement precise by integrating the
\emph{actual} null geodesic of each metric and evaluating the ANEC along it.  An
adaptive Runge--Kutta integrator drifts off the null cone over a
long crossing of a strong-shift bubble; a structure-preserving
symplectic integrator for the canonical null Hamiltonian
$H=\tfrac12 g^{ab}p_a p_b$~\cite{tao2016symplectic,christian2021fantasy} removes that,
its bounded long-time error holding the on-cone constraint $g_{ab}k^a k^b$ at the
${\sim}10^{-8}$ level or better where the Runge--Kutta
tangent leaves the cone by ${\sim}0.2$.  Each geodesic ANEC is reported with the
worst off-cone deviation $\max|g_{ab}k^a k^b|$ along the path.  All four retained
walls clear $10^{-6}$: the three
smooth-wall drives by four to five orders, and the oscillatory Nat\'ario wall by a
factor of two at the refined minimum and by four orders elsewhere on its scan
(Table~\ref{tab:averaged_quantum}(a)), on every
ray of every fan.

A symplectic step keeps $g(k,k)$ almost by construction, so that witness alone says
little about the trajectory, and we report a second.  Each drive depends on $t,x$
only through $x-v_st$, so
$K=\partial_t+v_s\partial_x$ is a Killing vector and $E_K=-g_{ab}k^aK^b$ is exactly
conserved along every geodesic; its worst relative drift over the fan is
$\le4\times10^{-11}$ for the smooth-wall drives and $8.4\times10^{-8}$ for
Nat\'ario, measured per ray and sensitive to errors the on-cone witness does not detect.
Both are evaluated at
the quadrature nodes, which are every symplectic step.
Neither bounds the integrand between nodes, and conservation of one scalar is not a
bound on the full trajectory error, so the averaged values are not certified in the
sense of Section~\ref{sec:scope}.

The integrator is Tao's extended
phase-space symplectic method with a fourth-order Yoshida composition, run at
a fixed step density of $32768$ steps per $16/s$ of affine span, where $s$ is the
initial spatial scale, from the initial surface $x=-8R_b$ to the corresponding
exterior return.  The span itself is measured: the geodesic is
integrated once and the window is taken out to where it leaves $r_s=3$, the
radius beyond which one interval evaluation bounds the shape function by
$1.3\times10^{-14}$ (Appendix~\ref{app:enclosures}), with a factor-two margin.
That bounds $f$, not $T_{ab}k^ak^b$: a $\tanh$ wall has exponential curvature
tails and no bound on the omitted tail is computed, so this is a quantified
truncation margin and not a support theorem.  The affine spans are $43.6$ (Alcubierre),
$32.6$ (Nat\'ario), $47.6$ (Van~den~Broeck) and $43.6$ (Rodal); a window fixed at
$16/s$ instead ends at the receding bubble's centre and integrates only the rear
half of the crossing.  The
impact-parameter scan is $50$ values on $[10^{-3},\,5.0]$, uniformly spaced at
$\Delta b\approx0.102$, clustered around the wall at $r_s\approx R_b=1$.
A minimum narrower than
$\Delta b$ and centred between two nodes would be missed, and refining the step count
alone, which is what the convergence ladder of
Table~\ref{tab:convergence_panel}(a) varies, cannot detect it.  That scan is
therefore not the reported quantity.  Each drive's argmin is bracketed and refined
until the minimum stops moving, and on one drive the coarse scan had indeed missed a
feature between two nodes: Alcubierre, Van~den~Broeck and Rodal move by $0.1\%$,
$0.7\%$ and $0.1\%$ under refinement, while Nat\'ario moves from $-0.7010$ to
$-1.4599$ at $b=0.7049$, a factor of $2.1$ deeper, and is then stable to eight digits
under a further tenfold refinement.  Table~\ref{tab:averaged_quantum}(a) reports the
refined values.  What remains uncertified is global: refinement converges the
bracket it is given, and a second, still narrower minimum elsewhere in $b$ would
not be found by it.  The reported values are minima over the refined scan, and the
rankings inherit that qualification.

\smallskip\noindent\emph{Affine normalization.}
The affine parameter of a null geodesic is free, and under $k\mapsto c\,k$ the line
integral $\int T_{ab}k^ak^b\,\dd\lambda$ scales linearly in $c$.  Signs and the
existence of negative-average rays are therefore invariant, but \emph{magnitudes and
cross-metric rankings are undefined until the scale is fixed}.  We fix it once, for
every metric, by the common normalization
\begin{equation}
  -g_{ab}\,k^a n^b = 1
  \label{eq:anec-normalization}
\end{equation}
on a common initial surface: the plane $x = -8R_b$, which lies in the asymptotically
flat exterior of every retained drive (the shape function satisfies $f<10^{-10}$
there), with the ray directed along $+\hat x$ and integrated over a window that
covers the full crossing and returns to the exterior, where the integrand has decayed
into the $\tanh$ tail.  Because $n^a$ is the unit normal to the constant-$t$ slice, it
is unit timelike at every warp speed, so \eqref{eq:anec-normalization} does not
degenerate at $v_s=1$.

This normalization matters for one entry.  Alcubierre, Van~den~Broeck and Rodal are
written in the lab frame, with shift vanishing at infinity, and already satisfy
\eqref{eq:anec-normalization} for a unit spatial seed.  The Nat\'ario shift follows Nat\'ario's own bubble-at-rest
convention, tending to $-v_s\hat x$ at infinity
(Appendix~\ref{app:metric_definitions}), so a unit seed there gives
$-g(k,n)=1/(1+v_s)=\tfrac23$ at $v_s=\tfrac12$; left unrescaled that would place the
Nat\'ario magnitude on a different footing from the others by exactly $\tfrac32$.  The
reported line integral includes the rescaling, and the ordering is unaffected.
The geodesic-integrated ANEC is negative for
every retained metric, vanishes on the Minkowski sentinel, and fixes the
cross-metric ordering as an averaged-condition \emph{result}: the minimum over
the impact-parameter scan is $-1.46$ (Nat\'ario), $-0.15$ (Alcubierre),
$-0.063$ (Van~den~Broeck), and $-0.0017$ (Rodal), the irrotational drive again
the mildest: by a factor of $37$ against the next mildest and $873$ against the
largest (Table~\ref{tab:averaged_quantum}(a)).  Per-ray values differ from the
coordinate-ray diagnostic because the geodesic samples a different path; only the
negativity of the minimum and the cross-metric ordering are claimed as robust.

The scan is refined until the minimum stops moving, which
matters on one drive.  On a uniform scan of step $\delta b\approx0.10$ the
Nat\'ario minimum is $-0.70$; refining the bracket around it converges to $-1.46$
at $b=0.705$, a factor $2.1$ deeper, stable to eight digits under a further
tenfold refinement.  The other three move by less than a percent.  That deeper
Nat\'ario minimum is a narrow feature and the worst-conditioned ray in
the study: its off-cone deviation is $4.9\times10^{-7}$ against ${\sim}10^{-11}$
on its neighbours, clearing the $10^{-6}$ tolerance by only a factor of two.  We
therefore report the
Nat\'ario ANEC magnitude as the minimum the refined scan reaches, and rest the
cross-metric ordering on the sign and the ranking rather than on its value.

\paragraph{Quantum inequality.}
At a fixed wall point the passing bubble presents a coordinate-static
observer with a temporary negative-energy pulse, the situation constrained by
the Ford--Roman quantum inequality
$\int \rho(\tau)\,f(\tau)\,\dd\tau \ge -C\hbar/\tau_0^4$, with
$C=3/(32\pi^2)$ and $f$
the normalized Lorentzian sampling function of width $\tau_0$.  Only here does
$\hbar$ enter, and we keep it explicit: elsewhere we
use $G=c=1$ only, in which $\hbar=\ell_P^2$, so dropping it would silently
identify the bubble radius with the Planck length.  Balancing the wall pulse
against the bound gives a threshold set by the geometric mean of the
two scales,
\begin{equation}
  \tau_0^{\mathrm{th}} \;\simeq\; c_{\rm metric}\,\sqrt{\ell_P R_b},
  \label{eq:qi-threshold}
\end{equation}
with $c_{\rm metric}\approx0.69$ for Alcubierre ($\rho_{\min}\approx-0.088\,R_b^{-2}$)
and $\approx1.6$ for Rodal ($\rho_{\min}\approx-0.0070\,R_b^{-2}$)
(Figure~\ref{fig:averaged_quantum}b).  Only the \emph{ratio} of the two is
scale-independent: Rodal's milder negative energy density tolerates a
${\sim}2.3\times$ longer sampling window before the inequality fails.  The
absolute threshold is Planckian for any
macroscopic bubble ($\tau_0^{\mathrm{th}}\sim10^{-26}\,$s at $R_b=1\,$m), so
the inequality is violated at every physically accessible sampling time, as
Ford and Pfenning~\cite{pfenning1997} found for the
Alcubierre wall.  The threshold is set by the single most negative static-observer
energy density on the wall, so it is reported only where that extremum is
resolution-stable: Alcubierre and Rodal, and not the oscillatory Nat\'ario wall or
the conformally rescaled Van~den~Broeck slice.  The criterion is the regularity of
$\rho_n$ along the worldline, not the algebraic type, Alcubierre's wall being
Type-IV dominated with a smooth $\rho_n$.  For the excluded pair the
resolution-converged null-ray integral is the appropriate averaged probe.  We apply the flat-space inequality as a
sampling diagnostic along a curved-spacetime worldline; a rigorous curved-space
quantum inequality would carry curvature corrections, and the coordinate-static
worldline is timelike only for $v_s<1$.  The status of quantum inequalities as physicality
criteria is itself still debated~\cite{kontou2020}; Kontou~\cite{kontou2024qei}
surveys what a modern quantum energy inequality can and cannot constrain in an
exotic geometry, and the answer is not the pointwise threshold quoted here.  We
therefore read these thresholds as a relative, semiclassical ordering rather than
a feasibility verdict.
Neither the geodesic ANEC nor the Ford--Roman comparison exempts any drive;
Rodal is quantitatively the mildest on both, but still violates.

\begin{table}[tbp]
  \centering
  \footnotesize
  \caption{Averaged and quantum diagnostics at matched parameters ($R_b=1$,
    $\sigma=8$, $v_s=0.5$).  \emph{(a)}~Geodesic-integrated ANEC, with the
    symplectic integrator at $32768$ steps per $16/s$ of affine span and the
    quadrature taken at every step.  $\int T_{ab}k^a k^b\,\dd\lambda$ is given on
    axis and at its minimum over the impact-parameter scan (with the minimizing
    $b^\ast$); $\max|g(k,k)|$ is the worst off-cone deviation over the fan and
    $\max|\Delta E_K/E_K|$ the worst relative drift of the conserved Killing
    energy.  \emph{(b)}~The coordinate null-ray line integral
    $\int T_{ab}k^ak^b\,\dd\lambda$ on axis and at its minimum over the same scan,
    resolution-converged, with the Minkowski ray $=0$ to ${<}10^{-8}$, beside the
    Ford--Roman columns: the most-negative static-observer wall energy density
    $\rho_{\min}$ and the sampling-time threshold $\tau_0^{\mathrm{th}}$, reported
    only for Alcubierre and Rodal (a dash elsewhere) and quoted in units
    where $\hbar=G=c=1$.}
  \label{tab:averaged_quantum}

  \emph{(a)}\;\begin{tabular}{@{}l rr cc l@{}}
  \toprule
  Metric & on-axis & min ($b^\ast$) & $\max|g(k,k)|$ & $\max|\Delta E_K/E_K|$ & method \\
  \midrule
  Alcubierre & $-0.0772$ & $-0.1462$ ($0.41$) & $3.1e-11$ & $3.2e-11$ & symplectic \\
  Natário & $+0.2871$ & $-1.4599$ ($0.70$) & $4.9e-07$ & $8.4e-08$ & symplectic \\
  Van den Broeck & $-0.0536$ & $-0.0626$ ($0.60$) & $3.3e-11$ & $3.0e-11$ & symplectic \\
  Rodal & $+0.0424$ & $-0.0017$ ($2.57$) & $2.5e-11$ & $1.8e-11$ & symplectic \\
  \bottomrule
\end{tabular}

  \medskip
  \emph{(b)}\;\begin{tabular}{l rr rr}
  \toprule
  & \multicolumn{2}{c}{ANEC null-ray $\int T_{ab}k^ak^b\,\dd\lambda$} & \multicolumn{2}{c}{Ford--Roman QI} \\
  \cmidrule(lr){2-3}\cmidrule(lr){4-5}
  Metric & on-axis & min ($b^\ast$) & $\rho_{\min}$ & $\tau_0^{\mathrm{th}}$ \\
  \midrule
  Alcubierre & $-0.161$ & $-0.793$ (0.86) & $-0.0876$ & $0.69$ \\
  Natário & $+0.0574$ & $-0.00404$ (0.86) & -- & -- \\
  Van den Broeck & $+0.0217$ & $-0.0987$ (0.82) & -- & -- \\
  Rodal & $+0.0439$ & $-0.0136$ (1.27) & $-0.00704$ & $1.60$ \\
  \bottomrule
\end{tabular}

\end{table}

\begin{figure}[tbp]
  \centering
  \includegraphics[width=\textwidth]{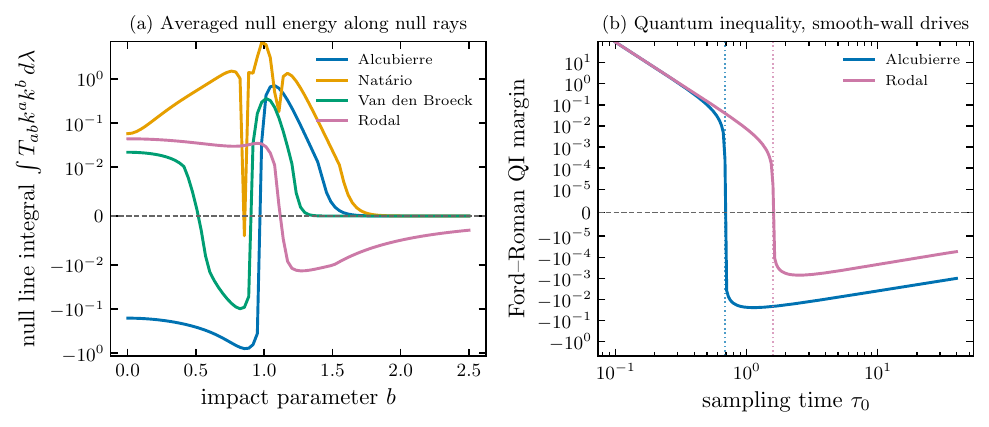}
  \caption{(a)~Averaged null energy: the null-ray line integral
    $\int T_{ab}k^ak^b\,\dd\lambda$ against perpendicular impact parameter $b$
    for the four retained metrics (matched parameters).  The cross-metric ranking
    follows the geodesic-integrated result of
    Table~\ref{tab:averaged_quantum}(a).
    (b)~Ford--Roman quantum inequality margin against sampling time $\tau_0$
    (symlog) for Alcubierre and Rodal, the two walls whose static-observer
    $\rho_n$ is smooth enough for a pointwise threshold, with each violation
    threshold $\tau_0^{\mathrm{th}}$ marked (dotted), in units where
    $\hbar=G=c=1$.  Rodal crosses later and less deeply.}
  \label{fig:averaged_quantum}
\end{figure}

\subsection{Cross-construction verification and observer-independent exoticity ranking}
\label{sec:construction-exoticity}

Two recent positive-energy
constructions are run through the \emph{same} frame-independent
test as Section~\ref{sec:invariant-benchmark}: the Fuchs constant-velocity
shell~\cite{fuchs2024constvel} and the
Garattini--Zatrimaylov de~Sitter bubble~\cite{garattini2025desitter},
alongside the Alcubierre and Rodal references.  The regularized
Bobrick--Martire/Fell--Heisenberg WarpShell~\cite{bobrick2021,fell2021} is a
related shell competitor, not certified here.  A wall
under-resolved on its uniform grid is left uncertified.
\paragraph{What is matched, and what cannot be.}
The common footing here can only be kinematic.  The Fuchs shell's
energy-condition compliance is designed at $v_s=0.02$, and the
Garattini--Zatrimaylov bubble is pinned by its own construction to $v=Hr_0$, the
Hubble speed at the bubble's own position $r_0$, so we match \emph{to them} rather than to the
$R_b=1$, $\sigma=8$, $v_s=0.5$ family used elsewhere in this paper.  Reading the
published Fuchs shift sigmoid (Appendix~\ref{app:metric_definitions}) at
$R_1=10$, $R_2=20$ gives $10$--$90\%$ crossings at $12.790$ and $17.210$, hence a
characteristic wall radius $R_c=15$ and a wall width $W=4.4199$, i.e.\
$W/R_c=0.2947$.  All four are then evaluated at
$v_s=0.02$ with that same $R_c$ and that same dimensionless wall width, which for
the $\tanh$ family fixes $\sigma=0.4971$, solved from the exact finite-$R$
profile, on a common box $r_{\max}/R_c=2$, on the same grid and with the same
wall band; the profiles themselves remain distinct.

The matching is of the shift kinematics and of nothing else.  Fuchs
has a two-boundary matter shell with $R_1/R_2=0.5$ and compactness
$2M/R_2=0.3334$, where Alcubierre and Rodal have no mass parameter at all;
Garattini necessarily has $\Lambda R^2=3(HR)^2$ while the others are
asymptotically Minkowski; and the Fuchs material stress is velocity-independent
where the flat-slice drives' stress scales as $v_s^2$.  Type fractions and
single-frame miss rates are therefore comparable under common sampling, but raw
stress severity and energy positivity are not like-for-like comparisons of matter
content.  Because curvature scales as $1/L^2$,
stress margins are reported dimensionlessly as $R_c^2\min(\rho+p_i)$.
Table~\ref{tab:cross_matched}(a) gives the matched kinematics and
Table~\ref{tab:cross_matched}(b) each
construction at its own published parameters, each with its own three-level
resolution ladder.  That ladder is coarser and unpolished, so the Alcubierre entry
there is its finest rung ($-0.151\to-0.137$) and differs from the polished $-0.158$
of Table~\ref{tab:invariant_benchmark}.

\paragraph{Result.}
The Fuchs constant-velocity shell is Type~I with a non-negative NEC margin
throughout its sampled shell wall and interior, the one construction in this
comparison that clears the all-observer NEC check on its sampled wall.  Two points.
First, its margin is small, and only nondimensionalisation makes it
interpretable: $R_c^2\min(\rho+p_i)=+0.021$ at the matched point and $+0.022$ at
its own parameters, not the bare $10^{-4}$ that the coordinate units of a
$15$-unit shell produce.  The two agree to $2\%$ across a change of box and
grid ($+0.02128$ and $+0.02174$ at full precision), which makes the sign
trustworthy.  It is the sign and not the value
that is stable: along the matched resolution ladder the margin drifts
monotonically toward zero ($+0.0223\to+0.0216\to+0.0213$, a $4.6\%$ decrease), and the
native ladder is not monotone ($+0.02045\to+0.01963\to+0.02174$, a $9.7\%$
swing).  Second, the domain of the claim
is the shift wall.  The companion note~\cite{le2026warpshells} reports
Hawking--Ellis Type-IV matter in the density \emph{smoothing tail} beyond the
nominal shell, outside the band measured here.  The irrotational Rodal drive is
likewise everywhere Type~I, yet
on this matched-kinematics panel the Eulerian frame still misses ${\approx}72\%$ of
its wall weak-energy violations.

The Garattini--Zatrimaylov bubble is retained in both blocks; its axisymmetric
reduction needs care.  Its published form places the bubble at $r_0=v_s/H$ rather than
at the coordinate origin; that is the matching condition, and it makes the
shift a sum of two radial gradients (Appendix~\ref{app:metric_definitions}).  The
configuration is axisymmetric about the propagation axis, which passes through both
points, so the reduction is exact about either.  Reduced about the origin
the radial nodes cluster on a sphere of
radius $R_c$ that the wall merely \emph{crosses}, and the $10$--$90\%$ band spans only
$1.2$--$2.3$ cells, below the four-cell floor; reduced about the bubble the coarsest
ladder level already spans $7.3$ cells in the matched block and $4.5$ in the native
one.

Under the matching condition the shift is
irrotational, so by Lemma~\ref{lem:jcurl} $j\equiv0$ identically: the momentum
channel is closed, and the wall has \emph{no} Type-IV volume at either parameter
set.  The sampled wall returns $|j|\le1.6\times10^{-17}$, or $5.6\times10^{-15}$ of
the tensor scale.

The wall is not, however, uniformly \emph{labelled} Type~I: the float64 classifier
returns Type~II on $1.9\%$ of the matched wall volume and $5.4\%$ of the native.
On this construction one spatial eigenvalue of $T^a{}_b$ equals $-\rho$ to machine
precision across the \emph{whole} wall, the measured $\min_i|\rho+p_i|$ being
$2.5\times10^{-16}$ of the tensor scale at the median point, so both terms of the
discriminant $\Delta=(\rho+S_\parallel)^2-4|j|^2$ are at roundoff and which one
wins decides the label.  The exact algebra does not: at $j\equiv0$ and
$\rho+S_\parallel\equiv0$ the block is a multiple of the identity, with a
two-dimensional eigenspace and no Jordan block, which is the $j=0$ branch of
\eqref{eq:momentum-trichotomy} and is Type~I.  The Type-II labels are a rounding
artefact of an exactly degenerate block, and no tolerance repairs them: the
perturbation is at $\epsilon$ where the quantity resolved
is zero.  No later step uses the label: the linear matrix inequality decides those
points from $\widehat T$ alone, and the verdict for the wall is what it would be if
every point were labelled Type~I.

It is the pathological case because $\rho+S_\parallel$
vanishes identically on its wall.  On the Rodal wall, also irrotational, the same
quantity is $3.3\times10^{-2}$ of the tensor scale at the median point, fourteen
orders above the noise.  Appendix~\ref{app:slemma} runs the type-free inequality at
all $683\,648$
non-Type-I points of the benchmark family and finds zero certified
misclassifications: the wall statistics of Table~\ref{tab:invariant_benchmark} are
volume fractions of a \emph{label}, while every margin quoted
beside them is decided without it.

The drive violates the pointwise null energy
condition there, and the Eulerian frame misses every one of those wall weak-energy
violations.  That is the Santiago--Schuster--Visser statement in its sharpest form:
achieving Type~I matter, which is exactly what an irrotational shift buys, does not
buy the null energy condition, and a single-frame reading of this wall returns no
violation at all.

\paragraph{Composite exoticity index.}
Table~\ref{tab:exoticity_index} reports a composite exoticity index on the benchmark
slice, combining the NEC severity, the Type-IV fraction, and the geodesic ANEC
minimum.  With $X$ ranging over the three
axes and $X_{\rm Alc}$ the Alcubierre value, the sub-scores are
\begin{equation}
  s_{\rm NEC}=\min\!\Bigl(1,\tfrac{|\min(\rho+p_i)|}{|\min(\rho+p_i)|_{\rm Alc}}\Bigr),
  \quad
  s_{\rm IV}=\phi_{\rm IV},
  \quad
  s_{\rm ANEC}=\min\!\Bigl(1,\tfrac{|{\rm ANEC}_{\min}|}{|{\rm ANEC}_{\min}|_{\rm Alc}}\Bigr),
  \label{eq:exoticity-subscores}
\end{equation}
with $\phi_{\rm IV}$ the wall Type-IV volume fraction (already in $[0,1]$, hence raw),
and the index is their floored geometric mean
\begin{equation}
  \mathcal{E}=\Bigl[\textstyle\prod_{X}\max(s_X,\epsilon_0)\Bigr]^{1/3},
  \qquad \epsilon_0=10^{-4}.
  \label{eq:exoticity-index}
\end{equation}
The $\min(\rho+p_i)$ values entering the severity ratio are the velocity-sweep grid
samples rather than the polished values of Table~\ref{tab:invariant_benchmark}, and
the two agree to about one percent.
The irrotational Rodal drive ranks lowest (index $0.010$ versus $0.70$--$1.0$).  Two
caveats.  First, $s_{\rm NEC}$ saturates at $1$ for all four drives, so
it adds no ordering information here.  Second, Rodal's $\phi_{\rm IV}=0$ leaves its
index set entirely by the arbitrary floor $\epsilon_0$: with $\epsilon_0=10^{-3}$ it
would read $0.023$ instead of $0.010$.  The index therefore \emph{orders} the drives on
the specified slice and does not measure the gap; the per-axis vector,
reported alongside, is the floor-independent statement.  The scale-independent
separation is the geodesic-integrated ANEC minimum, one to two orders of
magnitude milder for Rodal ($-0.0017$ against $-0.146$ for Alcubierre,
stable across the convergence ladder, Table~\ref{tab:averaged_quantum}(a)).  The gap is
informative because Rodal and Alcubierre have comparable pointwise Type-I NEC
deficits, both following the $v_s^2$ law with coefficients of the same order
($p=2.000$ at $R^2=1$ for Rodal, where the law is exact, and $p=2.034$ at
$R^2=0.99997$ for Alcubierre; Table~\ref{tab:speed_scaling}(a)): the separation comes
from Rodal's vanishing Type-IV fraction and small averaged-null energy, not from the
saturated severity axis.  Every input to the ranking is a boost invariant at
the point where it is evaluated, but two depend on a choice made outside the point:
the Type-IV
fraction is a proper-volume fraction on the constant-$t$ slice, and the
ANEC magnitude depends on the affine normalisation of
Section~\ref{sec:averaged-quantum}.  Both are held fixed across the drives.

\paragraph{Speed scaling of the wall NEC deficit.}
On the Type-I branch the Eulerian energy density and spatial stress scale as $v_s^2$
while the momentum density scales as $v_s$ (Section~\ref{sec:classification}); when the
shift is irrotational ($j=0$) the momentum term drops and the invariant deficit is the
single term,
\begin{equation}
  \min(\rho+p_i) = -C\,v_s^2\,,
  \label{eq:wall-deficit}
\end{equation}
with $C>0$ set by the wall $\rho$ and $S$.  The $v_s^2$ \emph{form} is proved: for an arbitrary irrotational flat-slice shift the spectrum is real
(Type~I) and every $(\rho+p_i)/v_s^2$ is exactly $v_s$-free, so the deficit is
quadratic identically.  The \emph{sign} $C>0$ is not proved at that generality; it is
established by direct evaluation on a concrete planar $\tanh$ wall, and
$C\approx\rodalNECscaling$ for the full three-dimensional Rodal profile is computed
from that geometry rather than a closed form.  For Rodal it also
follows from the algebra, a Type~I null violation being exactly some $\rho+p_i<0$.
The coefficient is a property of the wall at a given resolution: $C=\rodalNECscaling$
on the velocity sweep
(Table~\ref{tab:speed_scaling}(b)) against $C=0.759$ on the finest rung of the wall-resolved
ladder, where $\min(\rho+p_i)=-0.1898$ at $v_s=\tfrac12$
(Table~\ref{tab:diagnostic_convergence}), a $1.2\%$ spread; we quote the sweep value
throughout with that spread as its uncertainty.

The law is read on Type-I wall points, where the eigenvalues $p_i$ are real, and the
Type-IV wall points violate the NEC through the null witness
\eqref{eq:null-witness} instead.  On a vortical drive that Type-I branch is a
residual of a Type-IV-dominated wall which contracts as the speed falls.  Where it
is populated, at $v_s\ge0.5$, both vortical drives track $-C\,v_s^2$ to under $1\%$
($0.92\%$ for Alcubierre, $0.89\%$ for Nat\'ario).  The worst-case deviations over
the whole subluminal range, $4.09\%$ and $62.78\%$, are set entirely by the sparsest
speed sampled ($32$ and $80$ Type-I wall nodes respectively, against $696$ and
$2\,240$ near $v_s=1$), so they measure the sampling of a contracting branch and not
the law; no closed form is
assigned to the residual.  Van~den~Broeck, whose conformally
curved slice opens the wall Type~IV through the transverse channel rather than the
momentum plane, retains a Type-I branch but no single power law on it.
The clean momentum signature appears instead in the null witness: the Type-IV wall
violation is $\rho_n+S_\parallel-2|j|=a\,v_s^2-2d\,v_s$, a linear term $2d\propto|j|$
that is exactly zero for an irrotational shift and grows with the wall vorticity.
For the Nat\'ario geometry this is a theorem.  Its shift
is divergence-free ($\nabla\!\cdot\!\beta=0$, zero expansion), so
Lemma~\ref{lem:jcurl} collapses to $8\pi j=-\tfrac12\nabla^2\beta$ and the momentum
density vanishes identically only if the shift is harmonic; a harmonic field tending
to a constant at infinity equals that constant, which is Minkowski in translated
coordinates.  A divergence-free drive that approaches a uniform shift at infinity
without being uniform therefore has $|j|\neq0$
somewhere on its wall, and there the null-witness deficit acquires a nonzero
negative linear coefficient $-2d=-2|j|/v_s<0$.  Vorticity alone would not have
sufficed: a uniform rotation is vortical with $j\equiv0$.  The gradient case is the
mirror image, $j\equiv0$ outright by Lemma~\ref{lem:jcurl}, with the harmonic
sublocus Minkowski by Barzegar--Buchert--Vigneron.
Equation~\eqref{eq:wall-deficit} is pointwise; the unconditional
Santiago--Schuster--Visser no-go~\cite{santiago2021} is
integrated: on a compact flat unit-lapse slice, integrating the Hamiltonian
constraint $16\pi\rho_n=K^2-K_{ij}K^{ij}$ by parts, the boundary term vanishing for a
compact deformation of Minkowski, gives
\begin{equation}
  \int\rho_n\,dV=-\frac{1}{16\pi}\int\omega_{ij}\omega^{ij}\,dV\le0,
  \qquad \omega_{ij}=\partial_{[i}\beta_{j]},
  \label{eq:ssv-integral}
\end{equation}
with $\omega_{ij}$ the shift vorticity
($|\nabla\times\beta|^2=2\omega_{ij}\omega^{ij}$).  A nonzero shift vorticity makes
the right side strictly negative, so $\rho_n<0$ on a set of positive measure and the
weak energy condition fails, the null condition following, under a mild subsidiary
condition, as in Santiago--Schuster--Visser.  Vorticity is sufficient but not necessary for a
violation: the irrotational Rodal drive has $\int\rho_n\,dV=0$ yet still violates the
weak (and null) condition \emph{pointwise}, through the strictly negative wall deficit
$\min(\rho+p_i)=-C\,v_s^2$ of equation~\eqref{eq:wall-deficit}, our calculation of the
leading speed dependence of the pointwise minimum deficit for this family.  The same
$v_s^2$ scaling holds for the \emph{integrated} negative-energy
content, and there it is an exact equality
(Lemma~\ref{lem:integrated}).

\begin{lemma}[Exact integrated negative-energy scaling]
\label{lem:integrated}
For a flat-slice unit-lapse drive with shift $\beta^i=v_s\,F^i(\mathbf{x})$
($F$ independent of $v_s$) the Eulerian integrated negative energy
$E_-(v_s)=\int_{\rho_n<0}|\rho_n|\,dV$ (with $\rho_n=T_{ab}n^an^b$), on a fixed
integration domain, obeys $E_-(v_s)=v_s^2\,E_-(1)$ identically.
\end{lemma}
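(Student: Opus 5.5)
The plan is to show that the Eulerian energy density is exactly homogeneous of degree two in $v_s$ at every point of the fixed domain. Then the negative set and the integrand both scale trivially.

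On a flat static slice with unit lapse, the Hamiltonian constraint already used in Section~\ref{sec:classification} gives
\[
16\pi\rho_n=K^2-K_{ij}K^{ij},
\]
with $K_{ij}=-\partial_{(i}\beta_{j)}$. This expression involves no time derivatives and no lapse terms, so it holds exactly and not only to leading order. Substituting $\beta^i=v_sF^i(\mathbf{x})$, with indices lowered by $\delta_{ij}$, gives $K_{ij}=v_s\,k_{ij}(\mathbf{x})$, where $k_{ij}=-\partial_{(i}F_{j)}$ is independent of $v_s$. Hence
\[
\rho_n(\mathbf{x};v_s)=v_s^2\,\rho_n(\mathbf{x};1)
\]
pointwise and identically. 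I would note that the unit normal $n^a$ stays unit timelike at every $v_s$, since $\alpha=1$, so $\rho_n=T_{ab}n^an^b=G_{ab}n^an^b/8\pi$ is defined for all speeds. The Gauss--Codazzi identity $2G_{ab}n^an^b={}^{(3)}\!R+K^2-K_{ij}K^{ij}$ together with ${}^{(3)}\!R=0$ is purely kinematic, so it needs no field-equation input beyond $G=8\pi T$.

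Next, for $v_s>0$ the sign of $\rho_n$ is that of $\rho_n(\cdot;1)$, because $v_s^2>0$. The set $\{\rho_n<0\}$ is therefore the same subset of the fixed domain for every $v_s>0$. The proper volume element $dV=\sqrt{\det\gamma}\,d^3x=d^3x$ does not depend on $v_s$, since the slice is flat. It follows that
\[
E_-(v_s)=\int_{\{\rho_n(\cdot;1)<0\}}v_s^2|\rho_n(\cdot;1)|\,d^3x=v_s^2E_-(1).
\]
At $v_s=0$ both sides vanish, and negative $v_s$ is handled by $v_s^2=|v_s|^2$ with the same negative set.

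The only step needing care is the choice of slice and domain. The drives depend on $\mathbf{x}-\mathbf{x}_s(t)$, so I would fix $t=0$, or equivalently work in comoving coordinates. Then $F$ is genuinely $v_s$-independent and the domain does not move with $v_s$. I would also remark that the result is specific to speed-linear shifts, as the $\beta^x=v_s^2f(y)$ example in the introduction shows. No integration by parts is used, so compact support is not needed, only the fixed domain stated in the lemma.
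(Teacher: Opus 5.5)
Your proof is correct and is essentially the paper's own argument: the flat-slice Hamiltonian constraint $16\pi\rho_n=K^2-K_{ij}K^{ij}$, with $K_{ij}$ linear in $v_s$, gives $\rho_n=v_s^2\hat\rho(\mathbf{x})$ exactly, so the negative set and the proper volume element are $v_s$-independent and the factor $v_s^2$ comes out of the integral. Your additional care is sound and only makes explicit what the paper leaves implicit: fixing the slice at $t=0$ so that $F$ and the domain are genuinely $v_s$-independent, and handling the cases $v_s\le0$.
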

\begin{proof}
The extrinsic curvature $K_{ij}=-\partial_{(i}\beta_{j)}$ is linear in $v_s$, so on
the flat slice the Hamiltonian constraint $16\pi\rho_n=K^2-K_{ij}K^{ij}$ gives
$\rho_n=v_s^2\,\hat\rho(\mathbf{x})$ with $\hat\rho$ independent of $v_s$
(Section~\ref{sec:classification}).  The proper volume element is $v_s$-independent
there, so the negative set $\{\rho_n<0\}=\{\hat\rho<0\}$ is fixed and the $v_s^2$
prefactor comes out of the integral exactly.
\end{proof}

\noindent Unlike the leading-order pointwise deficit, this is an exact all-speed law,
and the free-exponent fit returns $p=2.00$, $R^2=1$ for Alcubierre, Nat\'ario, and
Rodal (Table~\ref{tab:speed_scaling}(c)), the integrated companion to the pointwise
deficit and the same slice-integrated measure Rodal reports~\cite{rodal2026}.
Van~den~Broeck is the exception: its conformally curved slice violates the flat-slice
hypothesis, so its negative energy has a $v_s$-independent conformal offset present
even at rest.  Neither law is claimed to be the
Santiago--Schuster--Visser theorem itself.

\paragraph{Curvature scaling.}
The gradient of the shift vorticity, which sources the wall momentum, also sets how
fast the wall curvature grows with speed.  Sweeping the wall-peak Weyl-squared and Ricci-squared
invariants over $v_s$ (Figure~\ref{fig:curvature_scaling},
Table~\ref{tab:speed_scaling}(d)), with the Kretschmann scalar sharing their
exponent, gives a clean split \emph{on the subluminal branch}, where the fits are
taken: the vortical Alcubierre
and Nat\'ario walls grow as $v_s^{2}$, the irrotational Rodal wall as
$v_s^{4}$ ($R^2_{\log}\ge0.99$ on every fitted branch, the coefficient of
determination of the log--log regression that produced the exponent).  The split
closes above $v_s=1$, and its closing checks the mechanism: over $v_s\in[1,2.5]$ the effective Weyl exponent is $3.70$
(Alcubierre), $3.89$ (Nat\'ario), $3.71$ (Van~den~Broeck) and $4.00$ (Rodal), all
four converging on the quartic growth the irrotational drive has throughout.  The
exponent is a statement about the \emph{leading} coefficient, so it is visible only
where the leading term dominates; once $v_s\gtrsim1$ the $O(v_s^4)$ term of the
same series takes over whatever the $O(v_s^2)$ coefficient does.  Curvature-invariant
studies of the individual Alcubierre and Nat\'ario
geometries~\cite{rodal2023invariants,rodal2024natario} already established their
Weyl-dominated, sharply wall-localized curvature; what the matched family adds is
that the growth \emph{exponent} tracks the shift
vorticity, in the one direction Lemma~\ref{lem:jcurl} makes exact and no further.  For the
Ricci-squared invariant it is the same momentum control made exact: the
trace-reversed Einstein equation gives
$R_{ab}R^{ab}=64\pi^2\,T_{ab}T^{ab}$ with
$T_{ab}T^{ab}=\rho^2-2|j|^2+|S|^2$ in the Eulerian frame, so the leading $v_s^2$ term
is $-2|j|^2$ (from $|j|\propto v_s$) and vanishes exactly at $j=0$, in particular for
an irrotational shift, leaving the $v_s^4$ growth of
the energy-stress terms.  This
contraction is not sign-definite in Lorentzian signature (its leading vortical term
is negative), so the fitted quantity is the wall-peak magnitude $|R_{ab}R^{ab}|$.  The
Weyl-squared and Kretschmann exponents involve the full Riemann tensor and are
\emph{not} fixed by that identity.  One direction of the mechanism is
elementary: on a flat unit-lapse slice the order-$v_s$ part of the metric is the shift
covector $h_{0i}=\beta_i$, and for an irrotational shift $\beta_i=v_s\,\partial_i\Phi$
this is pure gauge (the time relabeling $t\to t-v_s\Phi$), so the order-$v_s$
linearized Riemann tensor vanishes and every quadratic invariant starts at $v_s^4$ or
higher.  The converse does not follow: a nonzero shift vorticity
shows only that the pure-gauge reduction is unavailable, not that the order-$v_s^2$
coefficients are nonzero, because tensor cancellations are not excluded.  The two
coefficients behave oppositely.

Write $\omega_{jk}=\partial_{[j}\beta_{k]}$ for the shift vorticity.  The only
order-$v_s$ Riemann components that survive are
$R^{(1)}_{0ijk}=\partial_i\omega_{jk}$, and contracting them gives
\begin{equation}
  \bigl[K\bigr]^{(2)} = -4\sum_{ijk}\bigl(\partial_i\omega_{jk}\bigr)^2 ,
  \qquad
  \bigl[C^2\bigr]^{(2)} = \bigl[K\bigr]^{(2)}
     + 4\sum_i\Bigl(\partial^j\omega_{ji}\Bigr)^{2} .
  \label{eq:curv-coeffs}
\end{equation}
The Kretschmann coefficient is \emph{minus a sum of squares}.  On the eight-dimensional
space of admissible $\partial\omega$ (nine components of $\partial_i\omega_{jk}$ less
the one closedness relation $\partial_{[i}\omega_{jk]}=0$) it has signature $(0,8,0)$:
negative definite.  No cancellation is available, and the converse therefore
\emph{holds} for the Kretschmann scalar: $\partial\omega\neq0$ forces
$[K]^{(2)}\neq0$.  The Weyl-squared coefficient has signature $(0,5,3)$, negative
semi-definite with a three-dimensional kernel, so cancellation \emph{is} available and
the converse fails.  The shift $F=(0,0,-x^2-y^2)$ realizes it: $\partial\omega\neq0$
with $[K]^{(2)}=-16$ and $[C^2]^{(2)}=0$.

Naming the kernel says what cancels.  Write
$\omega_{jk}=\epsilon_{jkl}\upsilon_l$ with $\upsilon=\tfrac12\nabla\times\beta$ and
$M_{il}=\partial_i\upsilon_l$; the closedness relation is exactly $\operatorname{tr}M=0$,
so the admissible space is the eight-dimensional traceless matrices.
In those variables \eqref{eq:curv-coeffs} collapses to
\begin{equation}
  \bigl[K\bigr]^{(2)} = -8\,\lVert M\rVert^2 ,
  \qquad
  \bigl[C^2\bigr]^{(2)} = -8\,\lVert \operatorname{sym}M\rVert^2 ,
  \label{eq:curv-closed}
\end{equation}
because the antisymmetric part of $M$ contributes $2\lvert a\rvert^2$ to
$\lVert M\rVert^2$ and $4\lvert a\rvert^2$ to $\sum_i(\partial^j\omega_{ji})^2$, and
the two cancel identically.  The Kretschmann coefficient therefore depends on all
eight modes of $\partial\upsilon$, the Weyl-squared coefficient only on
the five shear modes: the converse fails for $C^2$ precisely when
$\partial\upsilon$ is antisymmetric, and the witness above is the simplest such
shift.  Signatures $(0,8,0)$ and $(0,5,3)$ are \eqref{eq:curv-closed} read off.

Two consequences follow.  The controlling quantity is
$\partial\omega$, not $\omega$: a spatially uniform rotation has $\omega\neq0$ with
$\partial\omega=0$, so it has $q\ge4$.  That is the same configuration that makes
$j=0$ in Lemma~\ref{lem:jcurl}, so a uniform rotation switches off the momentum
channel and the curvature growth together.  And the leading Kretschmann exponent is
not fitted: $q=2$ whenever $\partial\omega\neq0$.  A finite-speed Kretschmann
scalar is not a
pure power: for the shift $F=(0,x^2,0)$, exact computation gives
$K=4v_s^2\bigl(11v_s^2x^4-2\bigr)$, whose $O(v_s^2)$ coefficient is $-8\neq0$ as
\eqref{eq:curv-coeffs} requires, but which also has an $O(v_s^4)$ term and
therefore vanishes at the isolated speed $v_s=\sqrt{2/11}$ on the surface $x=1$.  What
$[K]^{(2)}\neq0$ excludes is a \emph{higher} leading exponent; the exponents fitted
over a finite $v_s$ range in Table~\ref{tab:convergence_panel}(b) carry the subleading
corrections, which is why they are quoted as fits and why the split closes above
$v_s=1$.  The worst
single-point relative deviation from the fitted law, over the subluminal branch the
fit covers, is
$1.7\times10^{-2}$ (Alcubierre, Weyl), $6\times10^{-4}$ (Nat\'ario, all three) and
$4\times10^{-15}$ (Rodal, all three), rising to $0.25$ on the one branch where the
exponent is analytic rather than fitted, Alcubierre Ricci-squared at $v_s=0.9$, where the
subleading $|S|^2$ terms of $T_{ab}T^{ab}$ are already comparable to the leading
$-2|j|^2$.  That branch is the reason its fitted $q=2.11$ exceeds the exact $q=2$; we quote the identity for it.
Van~den~Broeck, with its Type-I$\,\to\,$Type-IV transition, has no single power
law over the branch.  The exotic content is therefore intrinsic geometry that
grows smoothly through and beyond the luminal transition; within the tested family the
irrotational drive is mildest on every invariant axis at low speed, its steeper growth
overtaking the Alcubierre wall at $v_s=0.65$ on the Ricci axis (where
$1032.1\,v_s^{4}=454.4\,v_s^{2.113}$).

\begin{table}[tbp]
  \centering
  \footnotesize
  \caption{Cross-construction all-observer verification on the exact axisymmetric
    reduction, ladder $(N_r,N_\mu)=(32,32),(48,48),(64,64)$.  \emph{(a)}~At the
    \emph{matched shift kinematics} of the text: every construction at $v_s=0.02$,
    wall radius $R_c=15$, wall band $f\in[0.1,0.9]$.  Wall cells are the worst-case
    count across the wall normal at the coarsest and finest rung.
    $R_c^2\min(\rho+p_i)$ is the invariant peak NEC margin over wall Type-I points
    (positive~$\Rightarrow$~NEC-satisfying there).  Type-I and Type-IV need not sum
    to $100$; the balance is Type~II.  \emph{Only the shift kinematics are matched.}
    \emph{(b)}~The same comparison at each construction's \emph{own published
    parameters}: Alcubierre and Rodal at $R_b=1$, $\sigma=8$, $v_s=0.5$; the Fuchs
    shell at $R_1=10$, $R_2=20$, $2M=6.6687$, $v_s=0.02$; the
    Garattini--Zatrimaylov bubble at $H=0.1$ with $v=Hr_0$, whose Type-II balance
    is $5.4\%$ here.}
  \label{tab:cross_matched}

  \emph{(a)}\;\begin{tabular}{@{}l cc cc c cc@{}}
  \toprule
  & \multicolumn{2}{c}{Wall cells} & Type~I & Type~IV & $R_c^2\min(\rho+p_i)$ & WEC & NEC \\
  Metric & coarse & fine & (\%) & (\%) & (Type~I) & \multicolumn{2}{c}{miss (\%)} \\
  \midrule
  Alcubierre & 7.3 & 14.9 & 0.0 & 100.0 & -- & 0.0 & 0.0 \\
  Rodal & 7.3 & 14.9 & 100.0 & 0.0 & -2.7e-04 & 71.9 & 8.9 \\
  Fuchs & 7.3 & 14.9 & 100.0 & 0.0 & 0.021 & -- & -- \\
  Garattini & 7.3 & 14.9 & 98.0 & 0.0 & -3.5e-04 & 100.0 & 0.0 \\
  \bottomrule
\end{tabular}

  \medskip
  \emph{(b)}\;\begin{tabular}{@{}l c cc cc c cc@{}}
  \toprule
  & $v_s$ & \multicolumn{2}{c}{Wall cells} & Type~I & Type~IV & $R_c^2\min(\rho+p_i)$ & WEC & NEC \\
  Metric & & coarse & fine & (\%) & (\%) & (Type~I) & \multicolumn{2}{c}{miss (\%)} \\
  \midrule
  Alcubierre & 0.5 & 4.5 & 9.2 & 0.6 & 99.4 & -0.137 & 0.0 & 0.0 \\
  Rodal & 0.5 & 4.5 & 9.2 & 100.0 & 0.0 & -0.194 & 72.6 & 9.6 \\
  Fuchs & 0.02 & 7.5 & 15.4 & 100.0 & 0.0 & 0.022 & -- & -- \\
  Garattini & 0.1 & 4.5 & 9.2 & 93.4 & 0.0 & -0.010 & 100.0 & 0.0 \\
  \bottomrule
\end{tabular}

\end{table}

\begin{table}[tbp]
  \centering
  \footnotesize
  \caption{Composite exoticity index (lower~$=$~less exotic), defined by
    equations~\eqref{eq:exoticity-subscores}--\eqref{eq:exoticity-index}.  The NEC
    severity and ANEC columns are ratios to the Alcubierre baseline, capped at $1.0$
    (uncapped, Nat\'ario's pointwise NEC severity is ${\sim}13\times$ the Alcubierre
    baseline).  The Type-IV
    column is the wall Type-IV volume fraction itself, from the same sweep grid
    ($N=100$),
    $88.5\%$ here and $88.5\%$ in Table~\ref{tab:invariant_benchmark} for
    Nat\'ario.}
  \label{tab:exoticity_index}
  \begin{tabular}{@{}l cccc@{}}
  \toprule
  Metric & NEC & Type~IV & ANEC & Exoticity \\
  & severity & fraction & $|\min|$ & index \\
  \midrule
  Alcubierre & 1.000 & 0.987 & 1.000 & 0.996 \\
  Natário & 1.000 & 0.885 & 1.000 & 0.960 \\
  Van den Broeck & 1.000 & 0.811 & 0.428 & 0.703 \\
  Rodal & 1.000 & 0.000 & 0.011 & 0.010 \\
  \bottomrule
\end{tabular}

\end{table}

\begin{table}[tbp]
  \centering
  \footnotesize
  \caption{Speed scaling over the subluminal branch.  \emph{(a)}~Free-exponent fit
    $|\min(\rho+p_i)| = A\,v_s^{\,p}$ of the wall NEC deficit on the Type-I branch.
    The Alcubierre and Nat\'ario exponents are read off a contracting Type-I residual
    and carry the caveat set out in the text; $A$ is the per-drive fingerprint, and
    Van~den~Broeck's Type-I remainder admits no single power law ($R^2<0.99$).
    \emph{(b)}~The same deficit at fixed exponent,
    $\min(\rho+p_i)=-C\,v_s^2$ (equation~\ref{eq:wall-deficit}), so $C$ differs from
    the free-exponent $A$ of part~(a) wherever the fitted exponent is not $2$
    ($0.631$ against $0.634$ for Alcubierre, $8.351$ against $8.951$ for Nat\'ario).
    \emph{(c)}~Free-exponent fit of the Eulerian slice-integrated negative energy
    $E_-=\int_{\rho_n<0}|\rho_n|\,dV$, an Eulerian quantity throughout since
    $\rho_n=T_{ab}n^an^b$ is defined at every point regardless of Hawking--Ellis
    type.  \emph{(d)}~Wall-peak curvature invariants fit to $X = A\,v_s^{\,q}$, the
    Kretschmann scalar sharing these exponents; the Weyl-squared $q=2$ values are
    fitted rather than derived, because that coefficient is only semi-definite
    (Section~\ref{sec:construction-exoticity}).}
  \label{tab:speed_scaling}

  \emph{(a)}\;\begin{tabular}{@{}l ccc@{}}
  \toprule
  Metric & exponent $p$ & coefficient $A$ & $R^2$ \\
  \midrule
  Alcubierre & 2.03 & 0.634 & 1.0000 \\
  Natário & 2.20 & 8.951 & 0.9978 \\
  Van den Broeck & \multicolumn{3}{c}{no clean fit ($R^2 < 0.99$)} \\
  Rodal & 2.00 & 0.768 & 1.0000 \\
  \bottomrule
\end{tabular}

  \medskip
  \emph{(b)}\;\begin{tabular}{@{}l ccccc@{}}
  \toprule
  Metric & $C$ & dev.\ (all $v_s$) & dev.\ ($v_s\ge0.5$) & $R^2$ & NEC $\forall\,v_s$ \\
  \midrule
  Alcubierre & 0.631 & 4.09\% & 0.92\% & 1.0000 & violated \\
  Natário & 8.351 & 62.78\% & 0.89\% & 1.0000 & violated \\
  Van den Broeck & \multicolumn{4}{c}{no single power law ($R^2=0.56$)} & violated \\
  Rodal & 0.768 & 0.00\% & 0.00\% & 1.0000 & violated \\
  \bottomrule
\end{tabular}

  \medskip
  \emph{(c)}\;\begin{tabular}{@{}l cc@{}}
  \toprule
  & \multicolumn{2}{c}{$E_-=\int_{\rho_n<0}|\rho_n|\,dV$} \\
  \cmidrule(lr){2-3}
  Metric & exponent $p$ & $R^2$ \\
  \midrule
  Alcubierre & 2.00 & 1.0000 \\
  Natário & 2.00 & 1.0000 \\
  Van den Broeck & 0.04 & 0.7528 \\
  Rodal & 2.00 & 1.0000 \\
  \bottomrule
\end{tabular}

  \medskip
  \emph{(d)}\;\begin{tabular}{@{}l ccc ccc@{}}
  \toprule
  & \multicolumn{3}{c}{Weyl $C^2$} & \multicolumn{3}{c}{Ricci $|R_{ab}R^{ab}|$} \\
  \cmidrule(lr){2-4}\cmidrule(lr){5-7}
  Metric & $q$ & $A$ & $R^2_{\log}$ & $q$ & $A$ & $R^2_{\log}$ \\
  \midrule
  Alcubierre & 1.97 & 700.225 & 0.9999 & 2.11 & 454.384 & 0.9905 \\
  Natário & 2.00 & 72792.303 & 1.0000 & 2.00 & 33803.478 & 1.0000 \\
  Van den Broeck & \multicolumn{6}{c}{no clean fit (Type-IV-dominated wall)} \\
  Rodal & 4.00 & 393.554 & 1.0000 & 4.00 & 1032.108 & 1.0000 \\
  \bottomrule
\end{tabular}

\end{table}

\begin{figure}[tbp]
  \centering
  \includegraphics[width=\textwidth]{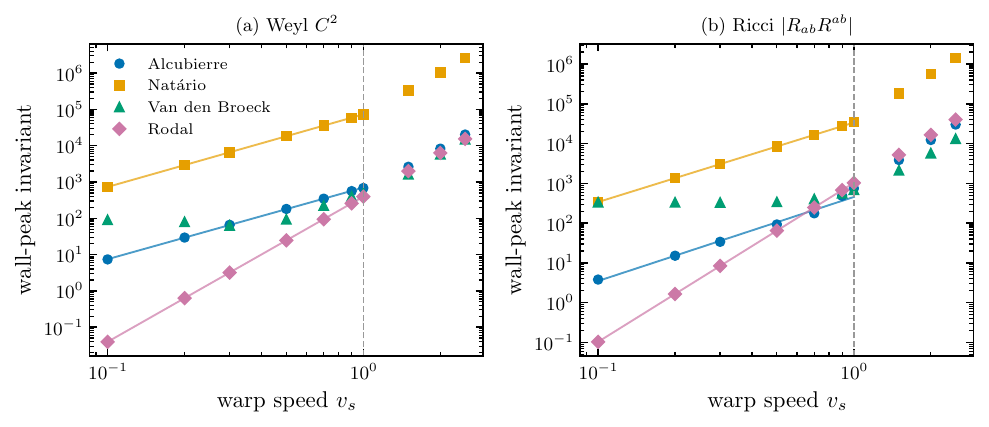}
  \caption{Wall-peak curvature invariants versus warp speed ($R_b=1$,
    $\sigma=8$; log--log; $v_s=1$ dashed).  (a)~Weyl-squared $C^2$,
    (b)~Ricci-squared magnitude $|R_{ab}R^{ab}|$.  The fitted power laws
    $X_{\max}=A\,v_s^{\,q}$ are drawn over the subluminal branch only, where they
    are fitted: $q\approx2$ for the vortical Alcubierre and Nat\'ario walls against
    $q=4$ for the irrotational Rodal wall.  Van~den~Broeck's
    Type-I$\,\to\,$Type-IV transition admits no single power law and is not fitted.
    The superluminal points are plotted unfitted and rise above the extrapolated
    subluminal laws, all four approaching $q\approx4$ as the quartic term of the
    same series takes over.}
  \label{fig:curvature_scaling}
\end{figure}

\section{Resolution stability}
\label{sec:convergence}

Because the curvature chain is autodiff-exact (Section~\ref{sec:results-ec}), the
stress-energy tensor has no discretization error: grid refinement changes only
how finely the smooth violation fields are \emph{sampled}, not the value of the
tensor at a point. This separates our resolution question cleanly from that of a
finite-difference toolkit such as Warp Factory. The reported wall diagnostics are
of two kinds, and each is validated in the way its regularity allows. All are
computed on the wall-clustered graded grids of Section~\ref{sec:results-ec}
($N=80,100,120$).

\emph{(i) Wall extrema}, the minimum NEC and DEC margins $\min(\rho+p_i)$,
$\min(\rho-|p_i|)$ and the Type-IV imaginary eigenvalue
$\max|\operatorname{Im}\lambda|$, are extrema of smooth autodiff-exact fields. A
grid-sampled extremum only bounds the true continuous one, and its residual gap
oscillates with grid alignment (aliasing), so it admits no clean Richardson order
and extrapolating it would be unsound. Seeding from
the deepest sample across the ladder, we instead refine the exact tensor on a shrinking
local grid and converge onto the continuum extremum of that basin (to a
$\sim\!10^{-4}$ spatial tolerance).

That refinement is
\emph{basin-local}: it cannot exclude a deeper basin the ladder never
sampled, so the polished value is a one-sided bound, at least as
extreme as every grid sample and \emph{not} a proof of a global continuum
extremum.  Global coverage is established
separately (Appendix~\ref{app:enclosures}), and its reach is limited: the
branch-and-bound covers \emph{one} quantity, the type-free null
deficit $\mathcal{N}$ on the axisymmetric reduction, for all four drives, and two
of the four brackets are informative.  The Type-I-restricted margins of the
vortical walls, the dominant-condition deficits and $\max|\operatorname{Im}\lambda|$
have no global bound.  \emph{(ii) Volume fractions}, the wall
Type-I/Type-IV fractions and the wall-restricted miss rates, are integrals of a
discontinuous indicator, whose grid error is an $O(h)$ boundary-slicing effect that
likewise admits no Richardson order; for these we report the measured stability
spread across the wall-resolved ladder. One row does admit a fitted order, the $L^2$
violation norm, and it comes out at $-0.9$ rather than at the second order of a
resolved field, because its uniform grids put $0.66$, $1.35$ and $2.72$
cells across the $0.2747$-wide wall.  No convergence order is assumed for any other
quantity.\footnote{A coordinate-invariance check passes: on a $41\times41$ window of
half-width $2.5R_b$ carried with the bubble, advancing the Alcubierre centre to
$x_s = v_s t = 1$ moves the deficit-weighted NEC-violation centroid by $1.000$ to eight
decimals, with a transverse component below $10^{-8}$, and finds the same $1677$
violating points of the $1681$ sampled on both sides.  The minimum margins are
\emph{not} compared: the window contains the bubble centre, where the spherical form
has a removable $1/r_s$ that the two grids resolve differently, so the minima there
differ by orders of magnitude for reasons of sampling and not of geometry.}

\paragraph{Per-diagnostic convergence.}
On the wall-resolved ladder each principal diagnostic has a measured stability
spread, reported per drive
(Table~\ref{tab:diagnostic_convergence}). The wall Type-IV fraction
deviates from its ladder mean by at most $0.15$ percentage points for Alcubierre and
$0.90$ for Nat\'ario, and is identically zero for Rodal, which is $100\%$ Type~I; Van
den Broeck is the widest at $2.90$ percentage points and is not
converged. The empirical vorticity fingerprint $\mathcal{R}_\omega$ is grid-stable to
$0.12$ percentage points or better across the ladder. The polished continuum
extrema bound the sampled values from below in magnitude, non-monotonically for the DEC
deficit: the
Type-IV imaginary eigenvalue peaks at $\max|\operatorname{Im}\lambda| = 2.588$
(Nat\'ario) down to $0$ (Rodal), and the invariant NEC and DEC deficits reach the
polished values tabulated. They are refined rather than extrapolated, with
no assumed convergence order.

\begin{table}[tbp]
  \centering
  \footnotesize
  \caption{Per-diagnostic convergence and stability of the wall diagnostics on the
    wall-resolved graded grids ($R_b=1$, $\sigma=8$, $v_s=0.5$; $N=80,100,120 \to
    5.9,7.6,8.9$ cells across the 10--90\% wall). The ladder columns are grid
    samples bounded below in magnitude by the polished extremum, and need not be
    monotone.}
  \label{tab:diagnostic_convergence}
  \resizebox{\columnwidth}{!}{\begin{tabular}{@{}l l ccc c@{}}
  \toprule
  Metric & Diagnostic & $N{=}80$ & $N{=}100$ & $N{=}120$ & Certification \\
  \midrule
  Alcubierre & Wall Type-IV frac. & 0.989 & 0.987 & 0.990 & stable (0.14~pp) \\
   & Wall $\max|\mathrm{Im}\,\lambda|$ & 0.267 & 0.269 & 0.271 & polished 0.272 \\
   & Wall $\min(\rho+p_i)$ & -0.134 & -0.156 & -0.157 & polished -0.158 \\
   & Wall $\min(\rho-|p_i|)$ & -0.174 & -0.177 & -0.164 & polished -0.196 \\
   & Vorticity $\mathcal{R}_\omega$ & 0.333 & 0.333 & 0.333 & stable (0.00~pp) \\
   & Wall WEC miss rate & 0.000 & 0.000 & 0.000 & stable (0.00~pp) \\
   & Wall DEC miss rate & 0.000 & 0.000 & 0.000 & stable (0.00~pp) \\
  \midrule
  Natário & Wall Type-IV frac. & 0.899 & 0.885 & 0.885 & stable (0.90~pp) \\
   & Wall $\max|\mathrm{Im}\,\lambda|$ & 2.579 & 2.581 & 2.584 & polished 2.588 \\
   & Wall $\min(\rho+p_i)$ & -2.001 & -2.069 & -2.061 & polished -2.078 \\
   & Wall $\min(\rho-|p_i|)$ & -2.001 & -2.069 & -2.061 & polished -2.078 \\
   & Vorticity $\mathcal{R}_\omega$ & 0.508 & 0.505 & 0.506 & stable (0.12~pp) \\
   & Wall WEC miss rate & 0.000 & 0.000 & 0.000 & stable (0.00~pp) \\
   & Wall DEC miss rate & 0.000 & 0.000 & 0.000 & stable (0.00~pp) \\
  \midrule
  Van den Broeck & Wall Type-IV frac. & 0.869 & 0.811 & 0.839 & stable (2.90~pp) \\
   & Wall $\max|\mathrm{Im}\,\lambda|$ & 0.218 & 0.218 & 0.218 & polished 0.219 \\
   & Wall $\min(\rho+p_i)$ & -0.517 & -0.518 & -0.519 & polished -0.521 \\
   & Wall $\min(\rho-|p_i|)$ & -0.872 & -0.872 & -0.872 & polished -0.873 \\
   & Vorticity $\mathcal{R}_\omega$ & 0.440 & 0.439 & 0.440 & stable (0.02~pp) \\
   & Wall WEC miss rate & 0.588 & 0.548 & 0.574 & stable (2.19~pp) \\
   & Wall DEC miss rate & 0.561 & 0.513 & 0.531 & stable (2.58~pp) \\
  \midrule
  Rodal & Wall Type-IV frac. & 0.000 & 0.000 & 0.000 & stable (0.00~pp) \\
   & Wall $\max|\mathrm{Im}\,\lambda|$ & 0.000 & 0.000 & 0.000 & polished 0.000 \\
   & Wall $\min(\rho+p_i)$ & -0.192 & -0.192 & -0.190 & polished -0.194 \\
   & Wall $\min(\rho-|p_i|)$ & -0.192 & -0.192 & -0.190 & polished -0.194 \\
   & Vorticity $\mathcal{R}_\omega$ & 0.000 & 0.000 & 0.000 & stable (0.00~pp) \\
   & Wall WEC miss rate & 0.745 & 0.726 & 0.722 & stable (1.41~pp) \\
   & Wall DEC miss rate & 0.757 & 0.740 & 0.736 & stable (1.27~pp) \\
  \bottomrule
\end{tabular}
}
\end{table}

\paragraph{Convergence of the remaining quantities.}
The two diagnostics not covered above have stability spreads
(Table~\ref{tab:convergence_panel}(a)), not orders: the exoticity index is a
composite that folds in a thresholded Type-IV fraction, and the per-metric ANEC
minimum is an extremum over impact parameter and a geodesic line integral (its
resolution parameter is the symplectic step count, not a spatial grid), so neither is a
valid Richardson target. Both are stable across the wall-resolved ladder and the
integrator refinement.  The ladder is run on the uniform impact-parameter scan, so
its ANEC minima converge to that scan's minimum, which for Nat\'ario is the
$-0.7010$ of Section~\ref{sec:averaged-quantum} and not the $-1.4599$ the refined
bracket reaches there.  Alcubierre and Rodal agree with their reported values to the
printed digits, and Van~den~Broeck's ladder reads $-0.0622$ against a reported
$-0.0626$, the residual of the same scan spacing. Two of the remaining quantitative statements are analytic and therefore
resolution-independent: the $v_s^2$ velocity scaling and the
Ricci-squared exponent, which is fixed metric by metric by the exact identity
$R_{ab}R^{ab}=64\pi^2 T_{ab}T^{ab}$ together with $|j|\propto v_s$.  The Weyl-squared
exponent is \emph{fitted}, not derived; the Kretschmann exponent is derived where
$\partial\omega\neq0$ (Section~\ref{sec:construction-exoticity}), and the fit recovers it.  To check that they are not a single-grid
artifact we refit them on the same wall-resolved ladder ($N=80,100,120$) and find the
fitted $q$ stable across resolutions (Table~\ref{tab:convergence_panel}(b)).  Every
reported statistic holds stable across that $1.5\times$ refinement in $N$, and the
native-parameter Rodal missed fractions are stable across $25^3/50^3/100^3$ as well
($<2$ pp over a $64\times$ volume change; Table~\ref{tab:convergence_panel}(c)), so
neither is a grid artifact.

\begin{table}[tbp]
  \centering
  \footnotesize
  \caption{Resolution stability of the diagnostics that admit no Richardson order.
    \emph{(a)}~Exoticity index on the wall-resolved grid ($N=80/100/120$) and the
    per-metric ANEC minimum against symplectic step count ($8192/16384/32768$), the
    resolution parameter of a line integral being the step count and not a grid.
    \emph{(b)}~Fitted curvature-invariant exponents $X_{\rm max}=A\,v_s^{\,q}$
    ($R_b=1$, $\sigma=8$), refit at each level; Van~den~Broeck has no resolved
    Type-I curvature branch.  \emph{(c)}~Rodal missed fractions at the native
    parameters ($v_s=0.5$, $R_b=100$, $\sigma=0.03$, $N_{\text{starts}}=8$) across
    uniform grids; all points are Type~I at $N\ge50$, so the truth is the algebraic
    slack.  \emph{(d)}~$N_{\text{starts}}$ ablation at $v_s=0.5$ on a $50^3$ grid
    ($\zeta_{\max}=5$).  \emph{(e)}~Grid refinement of the Alcubierre NEC minimum
    margin; the $100^3$ row is an Eulerian margin rather than an observer-robust one,
    and the two nearly coincide.}
  \label{tab:convergence_panel}

  \emph{(a)}\;\resizebox{0.95\columnwidth}{!}{\begin{tabular}{@{}l ccc l ccc l@{}}
  \toprule
  & \multicolumn{4}{c}{Exoticity index ($N$)} & \multicolumn{4}{c}{ANEC min (steps)} \\
  \cmidrule(lr){2-5}\cmidrule(lr){6-9}
  Metric & 80 & 100 & 120 & verdict & 8192 & 16384 & 32768 & verdict \\
  \midrule
  Alcubierre & 0.996 & 0.996 & 0.997 & stable (0.0008) & -0.1462 & -0.1462 & -0.1462 & stable (0.0000) \\
  Natário & 0.965 & 0.960 & 0.960 & stable (0.0048) & -0.7010 & -0.7010 & -0.7010 & stable (0.0000) \\
  Van den Broeck & 0.718 & 0.701 & 0.709 & stable (0.0162) & -0.0622 & -0.0622 & -0.0622 & stable (0.0000) \\
  Rodal & 0.010 & 0.010 & 0.010 & stable (0.0000) & -0.0017 & -0.0017 & -0.0017 & stable (0.0000) \\
  \bottomrule
\end{tabular}
}

  \medskip
  \emph{(b)}\;\resizebox{0.8\columnwidth}{!}{\begin{tabular}{@{}l l ccc c@{}}
  \toprule
  Metric & Invariant & \multicolumn{3}{c}{fitted $q$} & Spread \\
   & & $N{=}80$ & $N{=}100$ & $N{=}120$ & \\
  \midrule
  Alcubierre & Weyl $C^2$ & 1.97 & 1.97 & 1.97 & 0.00 \\
   & Ricci $|R_{ab}R^{ab}|$ & 2.11 & 2.11 & 2.11 & 0.00 \\
  \midrule
  Natário & Weyl $C^2$ & 2.00 & 2.00 & 2.00 & 0.00 \\
   & Ricci $|R_{ab}R^{ab}|$ & 2.00 & 2.00 & 2.00 & 0.00 \\
  \midrule
  Van den Broeck & Weyl $C^2$ & \multicolumn{4}{c}{no clean Type-I branch (Type-IV-dominated wall)} \\
   & Ricci $|R_{ab}R^{ab}|$ & \multicolumn{4}{c}{no clean Type-I branch (Type-IV-dominated wall)} \\
  \midrule
  Rodal & Weyl $C^2$ & 4.00 & 4.00 & 4.00 & 0.00 \\
   & Ricci $|R_{ab}R^{ab}|$ & 4.00 & 4.00 & 4.00 & 0.00 \\
  \bottomrule
\end{tabular}
}

  \medskip
  \begin{minipage}[t]{0.48\textwidth}
    \centering
    \emph{(c)}\\[2pt]
    \resizebox{\linewidth}{!}{\begin{tabular}{@{}r rrrr rrrr@{}}
    \toprule
    & \multicolumn{4}{c}{$f_{\text{miss}}$ (\%)} &
      \multicolumn{4}{c}{Total violated (\%)} \\
    \cmidrule(lr){2-5} \cmidrule(lr){6-9}
    $N$ & NEC & WEC & SEC & DEC & NEC & WEC & SEC & DEC \\
    \midrule
      25 & 1.8 & 15.7 & 28.4 & 28.4 & 87.3 & 87.3 & 87.3 & 100.0 \\
      50 & 1.6 & 15.6 & 28.0 & 28.5 & 87.0 & 87.0 & 87.0 & 99.9 \\
     100 & 1.5 & 15.6 & 27.8 & 28.7 & 86.9 & 86.9 & 86.9 & 99.9 \\
    \bottomrule
\end{tabular}
}
  \end{minipage}\hfill
  \begin{minipage}[t]{0.48\textwidth}
    \centering
    \emph{(d)}\\[2pt]
    \resizebox{\linewidth}{!}{\begin{tabular}{@{}l rrrrr@{}}
    \toprule
    & \multicolumn{5}{c}{$N_{\text{starts}}$} \\
    \cmidrule(l){2-6}
    Metric & 1 & 2 & 4 & 8 & 16 \\
    \midrule
    \multicolumn{6}{@{}l}{\textit{Min robust WEC margin}$^{\dagger}$} \\[2pt]
    Alcubierre & $-3460.2$ & $-3460.2$ & $-3460.2$ & $-3460.2$ & $-3460.2$ \\
    Rodal & $-2.6\times10^{-6}$ & $-2.6\times10^{-6}$ & $-2.6\times10^{-6}$ & $-2.6\times10^{-6}$ & $-2.6\times10^{-6}$ \\
    \midrule
    \multicolumn{6}{@{}l}{\textit{Missed WEC (\%)}} \\[2pt]
    Alcubierre & 0.00 & 0.00 & 0.00 & 0.00 & 0.00 \\
    Rodal & 15.60 & 15.60 & 15.60 & 15.60 & 15.60 \\
    \bottomrule
\end{tabular}
}\\[3pt]
    {\footnotesize $^{\dagger}$Robust margin: algebraic slack at
      Type~I points; capped extremum at non-Type~I points.}
  \end{minipage}

  \medskip
  \emph{(e)}\;\resizebox{0.9\columnwidth}{!}{\begin{tabular}{@{}lcccccc@{}}
    \toprule
    Quantity & $N\!=\!25$ & $N\!=\!49$ & $N\!=\!97$
      & Extrap.\ & $p$ & Error est.\ \\
    \midrule
    Min margin NEC
      & $-0.631$ & $-0.631$ & $-0.631$
      & $-0.631$ & $2.0$ & $0$ \\
    Integrated viol.\
      & $1.781$ & $1.910$ & $1.652$
      & ${1.652}^{\ddagger}$ & --$^{\dagger}$ & ${0.26}^{\ddagger}$ \\
    $L^2$ viol.\ norm
      & $0.82$ & $0.83$ & $0.73$
      & ${0.73}^{\ddagger}$ & --$^{\dagger}$ & ${0.11}^{\ddagger}$ \\
    \bottomrule
\end{tabular}
}

  \vspace{2pt}
  {\raggedright\footnotesize $^{\dagger}$Non-monotone triplet: the resolution spread
  is reported, not a fitted order (per-diagnostic convergence in
  Table~\ref{tab:diagnostic_convergence}), and the ``Extrap.'' entry is then the
  finest computed value rather than an extrapolated one.  $^{\ddagger}$The error
  estimate is the largest departure of the two coarser levels from the finest, not a
  Richardson error, which would require the order the series does not exhibit; for
  the $L^2$ norm,
  which does admit a fitted order, ``Extrap.'' is a true Richardson value and only its
  error bar is a spread.\par}
\end{table}

\subsection{Curvature identity validation}
\label{sec:identities}

As an independent check of the autodiff curvature chain, the computed Riemann
tensor satisfies the algebraic symmetries and the first Bianchi identity to
residual $<10^{-10}$, and the contracted Bianchi identity $\nabla_a G^{ab}=0$ to
$<10^{-6}$ (third-derivative cancellation) at representative Schwarzschild and
Alcubierre points; $T_{ab}=G_{ab}/8\pi$ is symmetric to a relative $<10^{-13}$.

\subsection{Optimization stability}
\label{sec:nstarts}

To establish that the multi-start observer optimization converges reliably, we
vary the number of restarts $N_{\text{starts}} \in \{1, 2, 4, 8, 16\}$ at
$v_s = 0.5$ for Alcubierre and Rodal, recomputing only the
observer search on a fixed curvature grid with independent (non-nested)
random draws per setting.  As Table~\ref{tab:convergence_panel}(d) shows, the minimum
margins and missed-violation fractions are insensitive to the multi-start
budget: Alcubierre and Rodal are $N_{\text{starts}}$-independent from a
single start (Rodal, $100\%$ Type~I, bypasses the optimizer entirely).
$N_{\text{starts}} = 8$, used throughout,
therefore carries a comfortable safety margin.  The matched benchmark of
Table~\ref{tab:matched_benchmark}(b) is computed at $N_{\text{starts}} = 2$, and the
ablation covers Alcubierre and Rodal only, which is evidence rather than a
guarantee for the other two.

\section{Discussion}
\label{sec:discussion}

Single-frame (Eulerian) evaluation can systematically understate the
energy-condition structure of warp drive spacetimes, and the way it does
so is geometry-dependent.  For the
everywhere-Type~I Rodal metric, presented in~\cite{rodal2026} as
predominantly positive in invariant energy density while still violating the
null energy condition, the Eulerian frame reports as satisfied the majority of the
wall points at which boosted observers register a weak- or dominant-energy
violation (Section~\ref{sec:invariant-benchmark}); because Rodal is
Type~I this is an exact eigenvalue statement, and the ablation
(Appendix~\ref{app:rodal_ablation}) confirms it is geometric, not numerical.
For the Alcubierre and Nat\'ario walls the failure is more basic: they are
Hawking--Ellis Type~IV (no rest frame), so no rest-frame energy density exists
for a single frame to estimate.  Van~den~Broeck is intermediate (WEC $55\%$,
DEC $51\%$ wall miss at $v_s = 0.5$).  Beyond extent, single-frame analysis
also understates \emph{severity}: at NEC-violating Type~I points the boosted
weak-energy density diverges as $\gamma^2 \sim e^{2\zeta}$ along the worst
null direction (Section~\ref{sec:optimization}), so the off-Eulerian
magnitude is unbounded while the Eulerian value stays small.  Violation
\emph{detection}, by contrast, is cap-independent (algebraic at Type~I).

\paragraph{Physical interpretation.}
The closed form of Section~\ref{sec:optimization} locates the worst-case
boost along the principal eigenvector of the most-violating pressure,
generically misaligned with the Eulerian normal, with violation onset at
$\sinh^2\zeta_{\rm th}=\rho/|\rho+p_{i^*}|$.  At Rodal's DEC-violating wall
points the optimizer runs to the rapidity cap ($\gamma\approx74$), so the
worst-case observers are overwhelmingly non-Eulerian.  Its arg-min
\emph{direction} there is not a validation axis, the DEC diagnostic being
sign-based with mixed-dimension sub-terms (Section~\ref{sec:optimization}), so at
the cap that direction need not coincide with the energy-density divergence axis;
the cross-check against the closed form is at the margin level.  WEC and DEC,
quantified over boosted observers, are accordingly more observer-sensitive than the
NEC, which quantifies over null directions alone.

\paragraph{Implications for warp drive engineering.}
Any claim that a warp metric satisfies an energy condition must specify
\emph{which} observers were tested.  Santiago--Schuster--Visser
\cite{santiago2021} proved that physically reasonable warp drives must
violate the NEC (hence WEC, SEC, DEC) under the subsidiary conditions they
state, and Celmaster--Rubin
\cite{celmaster2025} confirmed explicit Eulerian-frame WEC violations for
Lentz.  Our results sharpen the operational lesson: even where the Eulerian
frame is positive, the all-observer verdict need not be, so positive-energy
claims require the per-observer decision of
Section~\ref{sec:classification}, not a single-frame check.  The Type-IV diagnosis gives a structural reason for this in the
wall: a Type-IV stress-energy violates the NEC for some observer
unconditionally, so a wall-localized rotational shift supplies a pointwise,
wall-localized mechanism, with a closed-form null witness, for the violation that
Santiago--Schuster--Visser prove must occur somewhere.  The exoticity ranking of
Section~\ref{sec:construction-exoticity} is built from that Lorentz-invariant
per-point data, so it is independent of what the coordinate-stationary observer
measures.

\paragraph{Limitations.}
\label{sec:limitations}
The scope of what is decided, what is diagnosed and what is neither is set out in
three tiers in Section~\ref{sec:scope}.

\paragraph{Future directions.}
Several extensions are natural: (i)~metric-space optimization, where
the shape function $f(r_s)$ itself is optimized to minimize energy
condition violations; (ii)~a curved-space (Fewster-type) quantum-inequality
treatment, completing the semiclassical diagnostics of
Section~\ref{sec:averaged-quantum}; (iii)~application to the broader family of
warp drive metrics, including the Bobrick--Martire classification
\cite{bobrick2021}; (iv)~extension to non-vacuum backgrounds
(e.g., cosmological spacetimes); and (v)~characterizing which shift and spatial-stress
profiles lie in the transverse (conformal) Type-IV channel of
Corollary~\ref{cor:conformal}, and locating the intervening Type~II stratum, so as to
map physical shaping functions onto the Type-I/Type-IV boundary.

\section{Conclusion}
\label{sec:conclusion}

The energy-condition structure of the warp drives studied here rests on four exact
statements about $T^a{}_b$.  A per-observer decision
(Section~\ref{sec:classification}): the eigenvalue inequalities decide Type~I for
all observers, and an Eulerian null vector certifies the unconditional NEC
violation at momentum-sourced Type~IV in closed form.  A momentum discriminant
(Section~\ref{sec:classification}): under the splitting hypothesis, with
$\hat\jmath$ a principal axis of $S$, the sign of
$\Delta=(\rho_n+S_\parallel)^2-4|j|^2$ settles the momentum-plane type, and the
momentum density sets the wall NEC deficit.  An exact integrated law
(Lemma~\ref{lem:integrated}): $E_-(v_s)=v_s^2\,E_-(1)$ identically, the rigorous
companion to the leading-order pointwise deficit.  And a decision that uses no
algebraic type (Theorem~\ref{thm:lmi}): since Types~I and~IV do not exhaust the
four, and Types~II and~III are the strata a floating-point eigensolver cannot
resolve, each condition is decided instead by feasibility of a $4\times4$ linear
matrix inequality, one test for the null, weak and strong conditions and two for
the dominant one, exactly and without a rapidity cap; the null margin it returns is
half the null-cone minimum identically (Corollary~\ref{cor:margin}).  All four use only
Lorentz-invariant data and never the coordinate-stationary observer $\partial_t$,
so they hold through and beyond $v_s = 1$.  The Type~I reduction is classical; what
is new is the exact Type-I/Type-IV per-observer decision, its autodiff-exact
superluminal-capable realization, and the identification of the Eulerian momentum
density as the control of the momentum channel and of the wall deficit.  That
density is not a universal controller of the Hawking--Ellis type: the transverse
channel of Corollary~\ref{cor:conformal} is not governed by $\Delta$ at all.

\smallskip\noindent\textbf{What we find across the four matched drives.}
The Rodal irrotational geometry is globally Type~I at every point and every
speed, and that one is a theorem
(Lemma~\ref{lem:rodal}): its shift is curl-free by an identity, so the momentum
density vanishes and no refinement can overturn the label; and by
Lemma~\ref{lem:gradient} no shift outside that class could have closed the channel
in its place.  On matched grids the
Alcubierre and Nat\'ario walls are Type-IV dominated at every sampled speed, with
no rest frame and no rest-frame energy density, and Van~den~Broeck is so above its
velocity-driven Type-I$\,\to\,$Type-IV transition
(Table~\ref{tab:velocity_type}, Figure~\ref{fig:velocity_type}); past the
transition the Type-IV fraction falls with $v_s$ while the invariant severity
grows.

Single-frame positivity is not all-observer positivity.  For the Rodal geometry,
advanced as a global Type~I drive with predominantly positive invariant energy
density~\cite{rodal2026}, the Eulerian reading misses the majority of the wall
points at which boosted observers register a weak- or dominant-energy violation, as
an exact eigenvalue consequence of the Type~I structure
(Section~\ref{sec:invariant-benchmark}); and for the Type-IV-walled metrics the
frame-independent energy density such comparisons optimize does not exist in the
wall at all.

A ranking built from Lorentz-invariant per-point data (NEC severity, Type-IV
fraction, geodesic ANEC minimum) places Rodal lowest, well below the bubble-wall
drives, driven by its vanishing Type-IV fraction and small averaged-null energy
rather than by the capped pointwise severity, which saturates for every drive.
The wall NEC deficit obeys $\min(\rho+p_i)=-C\,v_s^2$
(equation~\ref{eq:wall-deficit}) on the Type-I branch, exact for the irrotational
drive, and the wall curvature splits along the same vorticity that accompanies
the type change, with exponents $v_s^2$ for the vortical walls and $v_s^4$ for the
irrotational one.  On the Kretschmann side the order-$v_s^2$ coefficient
$-4\sum(\partial_i\omega_{jk})^2$ is negative definite of signature $(0,8,0)$, so
$\partial\omega\ne0$ forces the $v_s^2$ exponent and its vanishing forces $q\ge4$;
that the irrotational exponent is exactly $4$ rather than higher is measured, not
derived.  For
Weyl-squared it is only semi-definite, signature $(0,5,3)$, and the converse is
\emph{false}: the witness $F=(0,0,-x^2-y^2)$ has $\partial\omega\ne0$ and
$[C^2]^{(2)}=0$, so those exponents are fitted
(Section~\ref{sec:construction-exoticity}).  A geodesic-integrated ANEC and a
Ford--Roman comparison preserve the ordering, with Rodal mildest by one to two
orders of magnitude yet still violating.

Across the four matched drives the invariant NEC margin has a negative minimum over
the Type-I wall points, and the wall is Type~IV elsewhere, so all four violate the
null energy condition at every sampled speed.  For the irrotational drive that
extends to every speed, though not by Lemma~\ref{lem:rodal} alone: the lemma fixes
the type, and it is the measured positivity of $C$ in
equation~\eqref{eq:wall-deficit}, together with the exactly quadratic speed
dependence, that makes the deficit negative at all $v_s$.  This is consistent with the
Santiago--Schuster--Visser
no-go~\cite{santiago2021} drive by drive.  The one cross-construction entry whose
wall margin comes out \emph{positive}, the Fuchs shell
(Table~\ref{tab:cross_matched}(a)), lies outside that theorem's hypotheses rather
than contradicting them: what leaves the framework is its non-unit lapse and curved
spatial slice (Appendix~\ref{app:metric_definitions}), neither admitted by the
generic Nat\'ario form, and not its Schwarzschild exterior, which the framework
explicitly permits and whose asymptotic mass the same authors
analyse~\cite{schuster2023adm}.  What the comparison supports there is a positive sign on the
sampled wall at the resolutions tested, not a converged value
(Section~\ref{sec:construction-exoticity}), and it is conditional on a premise
Barzegar, Buchert and Vigneron~\cite{barzegar2026classification} contest, that the
construction solves the field equations at all; we report it as published, with the
smoothing prescription and buffer clamp of
Appendix~\ref{app:metric_definitions} supplying what the published account leaves to
the implementation.

\warpax{} is offered as an independent verifier: given a metric it decides each
energy condition by Theorem~\ref{thm:lmi} and reports the per-observer eigenvalue
decision of Section~\ref{sec:classification} beside it, exact at Type~I and
closed-form for momentum-sourced Type~IV, at any velocity and with autodiff-exact
curvature (Appendix~\ref{app:toolkit}).  The scope of that offer is the one set out
in Section~\ref{sec:scope} and carried throughout: the Boolean decision is exact and
type-free, the Hawking--Ellis label reported beside it is numerical and
tolerance-bound, and the severities outside Type~I are rapidity-capped one-sided
diagnostics, not certificates.

\ack{This work is financially supported by VinUniversity under the
Environmental Intelligence (CEI) Grant (No.\ VUNI.CEI.FS\_0009).
The author thanks J.~Rodal for clarifications regarding frame
conventions, the $r_s \to 0$ limit of the irrotational angular
profile, and the Cartesian form of the shift vector.
The author declares no competing interests.}

\data{\warpax{} is freely available under the MIT license at
\url{https://github.com/anindex/warpax}; every number in this paper was produced by
\warpax{}~v1.4.0, and the releases are archived at
\url{https://doi.org/10.5281/zenodo.18715933}.  No external datasets are required.}

\appendix

\section{Numerical implementation}
\label{app:toolkit}

\warpax{} is a Python package built on JAX~\cite{jax2018} and
Equinox~\cite{kidger2021equinox}.  Each metric is a map $x^\mu\mapsto g_{ab}(x)$:
the four analyzed here (Alcubierre~\cite{alcubierre1994},
Nat\'ario~\cite{natario2002}, Van~den~Broeck~\cite{vandenbroeck1999},
Rodal~\cite{rodal2026}) and two benchmarks (Minkowski, and Schwarzschild in
isotropic Cartesian coordinates), with the full ADM decompositions in
Appendix~\ref{app:metric_definitions}.  Derivatives along the curvature chain are
taken by forward-mode automatic differentiation, so the derivatives of the
implemented metric are exact up to roundoff; all arithmetic is 64-bit floating
point.

\paragraph{Metric regularity.}
The warp metrics use $C^\infty$ $\tanh$-based shaping.  Radial expressions divide
by $\sqrt{r_s^2+\varepsilon^2}$ to remain $C^\infty$ at the centre, with
$\varepsilon^2=10^{-60}$ in the shape and angular profiles and the coarser
$\varepsilon^2=10^{-12}$ in the divisor that builds the unit radial direction;
what the coarser one costs is quantified in
Appendix~\ref{app:rodal_ablation}.  Curvature is the exact derivative of the
implemented profile.

\paragraph{Verification procedure.}
For a given metric and grid we evaluate the curvature chain $g_{ab}\to T_{ab}$,
classify each point by Hawking--Ellis type (I, II, III or IV), compute the algebraic
slacks \eqref{eq:nec-type1}--\eqref{eq:dec-type1} at Type~I points, run the
multi-start observer optimization at every point (Section~\ref{sec:optimization}),
and report the verdicts.  At Type~I the algebraic slacks decide and at Type~IV the
absence of a causal eigenvector forces the null violation; the capped extrema report
severity only.  Every non-Type-I point, whatever its label, takes its Boolean
verdict from Theorem~\ref{thm:lmi}.  The Eulerian baseline (six axis-aligned null
rays for the null condition, the ADM normal for the timelike ones) is, for any fixed
observer $u_0$, an upper bound on the worst-case margin, $m^*(x)\le m(x,u_0)$, so its
violation set lower-bounds the true extent; that inequality is enforced as a
consistency check.

\section{Rodal DEC ablation study}
\label{app:rodal_ablation}

\paragraph{Regularization of the coordinate origin.}
The Rodal metric is implemented with two $C^\infty$ regularizations
$\sqrt{r^2 + \varepsilon^2}$ of the coordinate radius at $r=0$: the shape and angular
profiles take $\varepsilon^2 = 10^{-60}$, and the divisor that builds the unit radial
direction takes the coarser $\varepsilon^2 = 10^{-12}$, without which
$\partial_i n_j$ diverges at the bubble centre.  The coarser one costs the exact
irrotationality: replacing $x^i/r$ by $x^i/\sqrt{r^2+\varepsilon^2}$ leaves a residual
curl linear in $\varepsilon^2$, with $\max|\nabla\times\beta|$ bounded by
$10^{-10}$ relative to $|\beta|$ at sampled wall points, ten orders below the
$O(1)$ values of the vortical drives and below every threshold used here.
Lemma~\ref{lem:rodal} is a statement about Rodal's
profiles, so the Type-I conclusion does
not rest on how the origin is regularized or on where the grid falls.  A
controlled ablation with three single-variable sweeps (resolution,
shape-function regularization, wall thickness; Alcubierre as control) shows the
$\sim\!28.5\%$ DEC miss rate is insensitive to resolution (0.32~pp) and to
$\varepsilon^2$ in the shape function (0.0~pp across
$10^{-24}$ to $10^{-6}$).  Its apparent wall-thickness dependence is a wall-volume
effect, analysed below; the Alcubierre control misses $0.0\%$ throughout.

The sweeps are run at baseline $v_s = 0.5$, $N = 50$, $\sigma_{\text{Rodal}} =
0.03$ and $\varepsilon^2 = 10^{-24}$ in the swept shape function, with the
direction divisor held at its production value $10^{-12}$.
Table~\ref{tab:rodal_ablation}(a) reports the resolution and regularization sweeps.
The third, over wall thickness, runs $52.9\to62.7\to75.4\to87.3\%$ in the
wall-restricted conditional rate on the uniform grid, whose two sharpest
$\sigma$ span $1.79$ and $0.598$ cells and so fall below the four-cell floor; the
wall-resolved values quoted in the text are $52.9\to63.2\to76.8\to85.5\%$
(Table~\ref{tab:rodal_ablation}(b)).  The trend agrees in direction and shape on
both, so the uniform-grid sweep is an ablation, not a measurement.

\begin{table}[tbp]
\centering
\footnotesize
\caption{Rodal DEC ablation.  \emph{(a)}~Resolution and shape-function
  regularization stability of the Rodal grid DEC miss rate at fixed wall thickness
  ($\sigma = 0.03$); the rate is insensitive to both, and the Alcubierre control
  misses $0.0\%$ throughout.  \emph{(b)}~Wall-thickness sweep, wall-resolved across
  its ladder.  Every $\sigma$ is evaluated on the exact
  axisymmetric $(r,\mu)$ reduction, on a four-level ladder
  $(N_r,N_\mu)=(48,48)$, $(64,64)$, $(80,80)$, $(96,96)$ with proper-volume weights
  $2\pi r^2$; ``wall cells'' is the worst-case count across the $10$--$90\%$ wall
  normal at the coarsest and finest level.  The quoted uncertainty is the largest
  departure from the finest level across the whole ladder, not the gap between the
  two finest, which at $\sigma=0.1$ would understate the drift by a factor of
  fifty.}
\label{tab:rodal_ablation}

\emph{(a)}\;\begin{tabular}{@{}l l c c@{}}
  \toprule
  Sweep & Parameter value & Rodal grid DEC miss (\%) & Alcubierre grid DEC miss (\%) \\
  \midrule
  \multirow{3}{*}{Resolution ($N$)}
    & $25$   & 28.36 & 0.0 \\
    & $50$   & 28.53 & 0.0 \\
    & $100$  & 28.68 & 0.0 \\
  \midrule
  \multirow{4}{*}{Regularization ($\varepsilon^2$)}
    & $10^{-24}$ & 28.53 & -- \\
    & $10^{-18}$ & 28.53 & -- \\
    & $10^{-12}$ & 28.53 & -- \\
    & $10^{-6}$  & 28.53 & -- \\
  \bottomrule
\end{tabular}

\medskip
\emph{(b)}\;\begin{tabular}{@{}r r r r r r@{}}
  \toprule
  & \multicolumn{2}{c}{Wall cells} & Wall & Grid DEC & Wall DEC \\
  \cmidrule(lr){2-3}
  $\sigma$ & coarsest & finest & points & miss (\%) & miss (\%) \\
  \midrule
  $0.01$ & 26.1 & 52.4 & $5{,}856$ & 46.98 & $52.88 \pm 0.42$ \\
  $0.03$ & 13.0 & 25.7 & $2{,}976$ & 34.46 & $63.16 \pm 0.78$ \\
  $0.1$ & 7.5 & 15.5 & $2{,}016$ & 32.10 & $76.78 \pm 1.92$ \\
  $0.3$ & 6.3 & 11.7 & $1{,}920$ & 30.67 & $85.46 \pm 0.53$ \\
  \bottomrule
\end{tabular}

\end{table}

The unconditional grid DEC miss rate falls from $47.0\%$ to $30.7\%$ as the wall
sharpens, but Table~\ref{tab:rodal_ablation}(b) shows this is a wall-volume effect: a
thicker wall (smaller~$\sigma$) spreads the angular shift gradients over a larger
sampled volume, diluting the grid-wide fraction.  The intrinsic quantity is the
wall-restricted conditional rate, the fraction of wall DEC violations the Eulerian
frame misses; it removes the volume and \emph{rises}, from $52.9\%$ to $85.5\%$, as
the wall sharpens.  The misalignment is therefore a property of the irrotational
Rodal geometry, not a volume artifact.  It is absent in Alcubierre, which has no
angular shift component.

The trend is measured on a resolved wall.  The uniform $N=50$ grid falls below the
four-cell criterion at the two sharpest walls, and
reaching four cells at $\sigma=0.3$ on a uniform Cartesian grid would require
$N\ge329$, that is ${\sim}3.6\times10^7$ points with an observer search at each.
The exact axisymmetric reduction costs a few thousand points for the same physics,
so every $\sigma$ spans at least $6.3$ cells at the coarsest ladder level and the
conditional rate is converged to better than $0.3$ percentage points between the
two finest levels.  Against the uniform grid the endpoints move by at most $2$
points and the rise is monotone throughout.

\section{Global enclosures of the wall extrema, and their reach}
\label{app:enclosures}

The local refinement of Section~\ref{sec:convergence} is basin-local.  The global
coverage below is verification evidence for the reported extrema, not a new method.

\paragraph{Exact dimensional reduction.}
Every metric in the benchmark family is invariant under rotations about the
propagation axis, so every scalar built covariantly from it depends only on
$(x,s)$ with $s=(y^2+z^2)^{1/2}$.  The reflection $z\mapsto-z$ is an isometry whose
fixed-point set is the half-plane $z=0$, $s=y\ge0$, so restricting the search to that
half-plane is \emph{exact}.

\paragraph{The bounded objective.}
We bound the null-energy deficit in the form that is well defined at every point
irrespective of algebraic type.  Writing a future null direction as $k^a=n^a+v^a$
with $v^a$ orthogonal to $n^a$ and of unit length in the induced metric,
\begin{equation}
  q(v)=T_{ab}k^ak^b=\rho_n-2\,b\!\cdot\!v+v^{\mathsf T}Sv,
  \qquad b_i=-T_{ab}n^ae_i^b,\quad S_{ij}=T_{ij},
  \label{eq:enclosure-objective}
\end{equation}
and the pointwise deficit is $\mathcal{N}(x)=\min_{|v|=1}q(v)$, whose sign decides the
null energy condition at $x$.  At a point with vanishing momentum, in particular
everywhere on an irrotational drive, $b=0$ and $\mathcal{N}=\min_i(\rho+p_i)$
identically, so for those drives the enclosure below certifies the reported invariant
margin itself.

\paragraph{Method.}
Three properties make a rigorous bound inexpensive.  The whole curvature chain
$g\to\Gamma\to R^a{}_{bcd}\to R_{ab}\to G_{ab}\to T_{ab}$ is differentiated in
\emph{interval} forward mode to second order, so every derivative has a
rigorous enclosure rather than a floating-point estimate; closed-form symbolic
differentiation of the three-component irrotational shift, with its $1/r$ and
$\log\cosh$, gives expressions whose interval evaluation is hopelessly over-wide.
The only transcendental in the shape function is $\tanh$, and
$\tanh u=1-2/(e^{2u}+1)$ contains $u$ exactly once, so its interval evaluation is
free of dependency overestimation.  And $\mathcal{N}$ is $1$-Lipschitz in the Eulerian
data, since perturbing $(\rho_n,b,S)$ changes \eqref{eq:enclosure-objective} at fixed
$|v|=1$ by at most $|\delta\rho_n|+2|\delta b|+\|\delta S\|_2$, so a direction that
minimizes $\mathcal{N}$ at a box midpoint stays near-optimal across a small box.
The upper end of the bracket is the value achieved at that direction.  One step in
obtaining it is floating-point: the secular solve for the sphere-constrained
quadratic is a bisection returning a \emph{direction}, not a certified minimizer.
It is made rigorous by re-evaluating \eqref{eq:enclosure-objective} at that direction
in interval arithmetic, and a suboptimal direction only weakens the bound, it cannot
invalidate it.  The lower end is the dual of Theorem~\ref{thm:lmi}, bounded over the
whole box by a verified interval $LDL^{\mathsf T}$; it is set out with its
alternative under \emph{Coverage} below.

On that footing the search is a Moore--Skelboe branch-and-bound: a priority queue of boxes
ordered by the lower end of the enclosure, splitting the most promising box and
discarding boxes that cannot contain the global minimum.  On exit
\begin{equation}
  \mathcal{N}_{\rm lower}\;\le\;\inf_{\rm wall}\mathcal{N}\;\le\;\mathcal{N}_{\rm upper},
  \label{eq:enclosure-bracket}
\end{equation}
with the lower end valid on the \emph{continuum} rather than at sample points.  A box
is dropped from the wall mask only when its shape-function enclosure lies entirely outside
$f\in[0.1,0.9]$, so no wall point is dropped, including points on the curved mask
boundary; and a box straddling the coordinate axis $r_s=0$, where the spherical form
carries a removable $1/r_s$, is reported as unbounded and split.
The exterior of the search box is excluded by a single interval evaluation.
The wall restriction $0.1\le f\le0.9$ is a condition on the
shape function alone, and $f$ depends on position only through
$r_s=|\mathbf{x}-\mathbf{x}_s(t)|$.  The top hat is strictly decreasing in $r_s$ for
$r_s>0$, since $\mathrm{d}f/\mathrm{d}r_s\propto
\mathrm{sech}^2[\sigma(r_s{+}R)]-\mathrm{sech}^2[\sigma(r_s{-}R)]<0$ there, so one
interval evaluation at $r_s=3$ bounds $f$ on the whole unbounded exterior by
$[0,\,1.3\times10^{-14}]$, disjoint from the band.  The
exterior therefore contains no wall point in \emph{any} direction, $x<-3$ included.

\paragraph{Result.}
Table~\ref{tab:enclosures} reports the two ends of \eqref{eq:enclosure-bracket}.

The upper end is the stronger statement, an \emph{achieved} value: the
branch-and-bound evaluates $\mathcal{N}$ on degenerate (point) boxes with the same
rigorous arithmetic, and the deepest such value it finds is attained at a wall point.
For the irrotational Rodal drive that search returns $\mathcal{N}=-0.1945$, against
the $-0.194$ obtained by basin-local polishing of the autodiff field
(Table~\ref{tab:diagnostic_convergence}).  Since $b\equiv0$ there, the two are the
same quantity reached by two independent routes, forward-mode autodiff refined
locally against interval differentiation searched globally, so the agreement
reproduces the reported extremum by a search over the whole wall.

The lower end is a continuum bound, valid at every point of the wall, but it is loose: the interval extension of the full
curvature chain accumulates dependency overestimation, so the bracket width in
Table~\ref{tab:enclosures} measures that and not any uncertainty in the reported
extremum.  On a box of half-width $h$ centred on the
wall the enclosure of $\rho_n$ converges at first order in $h$, but with a constant
far above the true variation $2|\nabla\rho_n|\,h$; the excess is dependency, not
physics.  It comes from the Christoffel and
curvature contractions, where the same derivative enters many terms, and not from the
metric inversion: preconditioning the interval Gauss--Jordan step with the midpoint
inverse changes the width by under a percent.  Removing it needs a Taylor model of
the stress tensor itself, hence third derivatives of the metric, one order beyond
the second-order interval AD used here; the achieved upper end, which reproduces the
published extremum, does not depend on it.

\paragraph{Coverage.}
Each reported extremum needs a bound that holds over the whole wall.
The interval algebra covers all four drives,
with Van~den~Broeck exercising the conformally flat branch
$\gamma_{ij}=B^2\delta_{ij}$, and all four close with a finite bracket
(Table~\ref{tab:enclosures}).

Alcubierre is bracketed to a width of $5.5\times10^{-4}$,
$[-0.631624,-0.631077]$: lower bound and achieved value agree to three decimals, a
factor of $1.0009$ apart, so the extremum is pinned.
Van~den~Broeck brackets to within a factor of $1.12$ ($[-1.446,-1.296]$), and
Rodal's upper end reproduces the polished invariant margin $-0.194$ to three digits.
Nat\'ario's bracket, $[-38.69,-5.216]$, is finite but $7.4$ times its own achieved
value and so not informative: the dependency is much worse there, the shift carrying
$r$ in four separate factors and a removable $1/r$.

The lower bound is not
$\rho_{\rm lo}-2|b|_{\max}+\lambda_{\min,\rm lo}(S)$, which charges the worst case of
the linear and quadratic terms in \emph{different} directions and whose residual gap
does not close as the box shrinks; it is the dual of the inequality
Appendix~\ref{app:slemma} uses, $\mathcal{N}\ge2\lambda_{\min}(M(\sigma))$ for any
$\sigma$, with $\lambda_{\min}$ bounded below over the box by a verified interval
$LDL^{\mathsf T}$, which does close.  By Corollary~\ref{cor:margin} that inequality
is an equality at the best $\sigma$, so the only slack it introduces is in the
choice of multiplier and not in the bound itself.  The decomposition fed to the objective is in
\emph{orthonormal} components, not coordinate ones, which coincide only on a
flat slice: on Van~den~Broeck's conformally flat slice a coordinate objective returns
a negative value where the physical null deficit is positive, so the chain
Cholesky-factors $\gamma=LL^{\mathsf T}$ and works in the frame the unit-sphere
minimisation refers to.

The bracket bounds $\mathcal{N}$ on the continuum of the wall region at fixed
physical parameters.  Note that it is a statement about $\mathcal{N}$ and not about
the Hawking--Ellis type assignment, which remains numerically classified and
tolerance-dependent (Section~\ref{sec:scope}).

\begin{table}[tbp]
  \centering
  \footnotesize
  \caption{Rigorous global search for the wall null-energy deficit
    $\mathcal{N}=\min_{|v|=1}q(v)$ at matched parameters ($R_b=1$, $\sigma=8$,
    $v_s=0.5$), by interval branch-and-bound on the axisymmetry-reduced domain
    ($120{,}000$ boxes).  $\mathcal{N}_{\rm upper}$ is an \emph{achieved} value;
    $\mathcal{N}_{\rm lower}$ is a continuum bound over the wall
    band $f\in[0.1,0.9]$.  For the vortical walls $\mathcal{N}$ is the all-type
    null deficit, deeper than the Type-I-restricted margin.}
  \label{tab:enclosures}
  \begin{tabular}{@{}l cc c l@{}}
  \toprule
  Metric & $\mathcal{N}_{\rm lower}$ & $\mathcal{N}_{\rm upper}$ & $\mathcal{N}_{\rm lower}/\mathcal{N}_{\rm upper}$ & independently reported value \\
  \midrule
  Alcubierre & $-0.631624$ & $-0.631077$ & $1.0009$ & $-0.628$ (all-wall NEC min, per-metric section) \\
  Rodal & $-0.379$ & $-0.194$ & $1.9507$ & $-0.194$ (polished $\min(\rho+p_i)$) \\
  Nat\'ario & $-38.695$ & $-5.216$ & $7.4179$ & -- \\
  Van den Broeck & $-1.446$ & $-1.296$ & $1.1155$ & -- \\
  \bottomrule
\end{tabular}

\end{table}

\section{Metric definitions}
\label{app:metric_definitions}

For reproducibility, we collect the ADM~$3+1$ decompositions of the
test metrics.  In each case the line element takes the general form
\begin{equation}
  \dd s^2 = -\alpha^2\,\dd t^2
    + \gamma_{ij}\bigl(\dd x^i + \beta^i\,\dd t\bigr)
                 \bigl(\dd x^j + \beta^j\,\dd t\bigr),
  \label{eq:adm-general}
\end{equation}
where $\alpha$ is the lapse, $\beta^i$ the shift vector, and
$\gamma_{ij}$ the spatial metric.  The four benchmark metrics and the Fuchs shell
share the bubble center $x_s(t) = v_s\,t$ and the comoving distance
$r_s = \sqrt{(x - x_s)^2 + y^2 + z^2}$; the Garattini--Zatrimaylov bubble instead
moves radially with $r(t)=r_0e^{Ht}$ and $v=Hr(t)$, as set out below.
The standard $\tanh$-smoothed top-hat shape function is
\begin{equation}
  f(r_s) = \frac{\tanh\bigl[\sigma(r_s + R_b)\bigr]
                - \tanh\bigl[\sigma(r_s - R_b)\bigr]}
               {2\,\tanh(\sigma R_b)}\,,
  \label{eq:shape-tanh}
\end{equation}
with $f(0) = 1$ and $f(r_s) \to 0$ for $r_s \gg R_b$.

\paragraph{Alcubierre.}
$\alpha = 1$, $\gamma_{ij} = \delta_{ij}$,
$\beta^i = (-v_s\,f(r_s),\;0,\;0)$.
The shape function $f$ is given by
equation~(\ref{eq:shape-tanh}) \cite{alcubierre1994}.

\paragraph{Van~den~Broeck.}
$\alpha = 1$, $\beta^i = (-v_s\,f(r_s),\;0,\;0)$,
$\gamma_{ij} = B^2(r_s)\,\delta_{ij}$, where
$B(r_s) = 1 + \alpha_{\text{vdb}}\,f_B(r_s)$.  The inner shape
function $f_B$ uses a separate radius $\tilde{R}$ and width
$\sigma_B$ \cite{vandenbroeck1999}:
\begin{equation}
  f_B(r_s) = \frac{\tanh\bigl[\sigma_B(r_s + \tilde{R})\bigr]
                  - \tanh\bigl[\sigma_B(r_s - \tilde{R})\bigr]}
                 {2\,\tanh(\sigma_B\tilde{R})}\,.
  \label{eq:vdb-conformal}
\end{equation}

\paragraph{Nat\'ario.}
$\alpha = 1$, $\gamma_{ij} = \delta_{ij}$.
The divergence-free shift ($\nabla_i\beta^i = 0$) has three nonzero
Cartesian components \cite{natario2002}:
\begin{align}
  \beta^x &= -v_s\bigl[2\,n(r_s) + r_s\,n'(r_s)\sin^2\!\theta\bigr],
    \nonumber\\
  \beta^y &= v_s\,n'(r_s)\,\frac{(x - x_s)\,y}{r_s}\,,\qquad
  \beta^z = v_s\,n'(r_s)\,\frac{(x - x_s)\,z}{r_s}\,,
  \label{eq:natario-shift}
\end{align}
where $n(r_s) = \tfrac{1}{2}\bigl(1 - f(r_s)\bigr)$,
$\sin^2\!\theta = (y^2 + z^2)/r_s^2$, and primes denote
$\dd/\dd r_s$.

\paragraph{Note on conventions (Nat\'ario).}
Unlike the other three drives, equation~\eqref{eq:natario-shift} is written in
Nat\'ario's own ``bubble-at-rest'' convention: since $n(r_s)\to\tfrac12$ as
$r_s\to\infty$, the shift tends to $\beta^x\to-v_s$ at infinity rather than to zero,
and $\beta^x(0)=0$ at the bubble center.  The lab-frame drives (Alcubierre,
Van~den~Broeck, Rodal) instead have $\beta^x\to0$ at infinity.  The two differ by
$\beta^x\mapsto\beta^x+v_s$, the coordinate change $x\mapsto x-v_s t$ and hence a
diffeomorphism, so every pointwise Lorentz invariant we report is unaffected: the
Hawking--Ellis type, the eigenvalue margins $\min(\rho+p_i)$ and $\min(\rho-|p_i|)$,
and the curvature invariants.  Quantities referred to the Eulerian normal are not:
at infinity the Nat\'ario slice normal is boosted relative to the other three, so its
Eulerian energy density, integrated negative energy, single-frame miss fractions and
affine null-ray normalization refer to a different observer family.  We retain
Nat\'ario's convention, in which the construction is defined, and handle the one
place where the difference changes a reported number: the ANEC
normalization~\eqref{eq:anec-normalization}, which pins $-g(k,n)=1$ for every metric
and rescales the Nat\'ario line integral by $1+v_s=\tfrac32$ at $v_s=\tfrac12$.

\paragraph{Rodal.}
$\alpha = 1$, $\gamma_{ij} = \delta_{ij}$.
The irrotational (curl-free) shift is specified in a spherical
tetrad aligned with the propagation axis ($\theta$ measured from
$+x$) \cite{rodal2026}:
\begin{equation}
  \hat{\beta}_r = -v_s\,F(r_s)\cos\theta\,,\qquad
  \hat{\beta}_\theta = v_s\,G(r_s)\sin\theta\,,
  \label{eq:rodal-shift}
\end{equation}
where $F = f$ (equation~\ref{eq:shape-tanh}) and
\begin{equation}
  G(r_s) = 1 - \frac{
    2\,r_s\sigma\sinh(R_b\sigma)
    + \cosh(R_b\sigma)\bigl[
      \ln\cosh\sigma(r_s{-}R_b) - \ln\cosh\sigma(r_s{+}R_b)
    \bigr]
  }{2\,r_s\sigma\sinh(R_b\sigma)}\,.
  \label{eq:rodal-G}
\end{equation}

\paragraph{Note on conventions (Rodal).}
Ref.~\cite{rodal2026} presents the irrotational construction in a
Nat\'ario-style ``bubble-at-rest'' convention, with radial profiles
$f_{\rm paper}(0)=g_{\rm paper}(0)=0$ and $f_{\rm paper}(\infty)=g_{\rm paper}(\infty)=1$.
We standardize to a lab-frame convention with vanishing shift at infinity:
\begin{equation}
  F(r_s)=1-f_{\rm paper}(r_s),\qquad G(r_s)=1-g_{\rm paper}(r_s),
\end{equation}
so that $F(0)=G(0)=1$ and $F(\infty)=G(\infty)=0$.
Subtracting a uniform asymptotic translation preserves irrotationality
(a uniform translation field is curl-free).
We use the ``$+\beta$'' ADM convention
$ds^2=-\alpha^2\,dt^2+\gamma_{ij}(dx^i+\beta^i\,dt)(dx^j+\beta^j\,dt)$;
Ref.~\cite{rodal2026} uses ``$-\beta$'', so component-level comparison
requires the corresponding sign flip.

\paragraph{Removable $r_s\to 0$ form in $G(r_s)$.}
The apparent $1/r_s$ factor in equation~(\ref{eq:rodal-G}) is a removable $0/0$
form of the spherical-basis representation.  Defining
$\Delta(r_s) \equiv \ln\cosh\bigl[\sigma(r_s{-}R_b)\bigr]
  - \ln\cosh\bigl[\sigma(r_s{+}R_b)\bigr]$,
one finds $\Delta'(0) = -2\sigma\tanh(\sigma R_b)$, whence
$\lim_{r_s\to 0} G(r_s) = 1$.
The irrotational construction is therefore regular at the bubble center; the
$r_s \mapsto \sqrt{r_s^2 + \varepsilon^2}$ substitution is a numerical stability
device for automatic differentiation, not a physical repair.

\paragraph{Cartesian shift components.}
In Cartesian coordinates, equations~(\ref{eq:rodal-shift}) become
\begin{equation}
  \boldsymbol{\beta} = -v_s\bigl[G(r_s)\,\hat{\boldsymbol{x}}
    + \bigl(F(r_s)-G(r_s)\bigr)\,n_x\,\boldsymbol{n}\bigr],
  \label{eq:rodal-cartesian}
\end{equation}
where $\boldsymbol{n} = (\Delta x,\,y,\,z)/r_s$ is the radial unit
vector from the bubble center and $n_x = \Delta x / r_s$.
Since $F(0) = G(0) = 1$, the factor $(F-G) \to 0$ at $r_s = 0$
and the shift reduces to $\boldsymbol{\beta}(0) = -v_s\,\hat{\boldsymbol{x}}$,
confirming manifest regularity without coordinate patches.

\paragraph{Fuchs constant-velocity shell.}
The construction of Fuchs \emph{et al.}~\cite{fuchs2024constvel} is
source-prescribed rather than metric-prescribed. A shell of constant density occupies $R_1\le r\le R_2$ with total
mass $M$; the Tolman--Oppenheimer--Volkoff equation
\begin{equation}
  \frac{\dd p}{\dd r} = -\frac{(\rho+p)\,(m+4\pi r^3 p)}{r\,(r-2m)},
  \qquad p(R_2)=0,
\end{equation}
is integrated inward for the isotropic pressure; density and pressure are then
smoothed by four passes of a moving average with width ratio
$s_\rho/s_P\simeq1.72$, the cumulative mass $m(r)$ recomputed from the smoothed
density, and the metric potentials obtained from
\begin{equation}
  e^{2b(r)} = \Bigl(1-\frac{2m}{r}\Bigr)^{-1},
  \qquad
  \frac{\dd a}{\dd r} = \frac{m+4\pi r^3 p}{r\,(r-2m)} ,
\end{equation}
with the Schwarzschild boundary condition $e^{2a}=e^{-2b}=1-2M/r$ outside the
solved grid, which we impose analytically rather than by extrapolating the
numerical solution.  The warp shift is added on the shell interior,
\begin{equation}
  \alpha = e^{a(r)},\qquad
  \gamma_{ij} = \delta_{ij} + \bigl(e^{2b(r)}-1\bigr)\,n_i n_j,\qquad
  \beta^i = -S_{\rm warp}(r)\,v_s\,\delta^i{}_x ,
\end{equation}
with $n_i=x_i/r$ and the compact sigmoid
of~\cite{bobrick2021,fuchs2024constvel},
\begin{equation}
  S_{\rm warp}(r) = 1 - \bigl[\exp\chi(r)+1\bigr]^{-1}
                  = \bigl[1+\exp(-\chi(r))\bigr]^{-1},
  \qquad
  \chi(r) = (R_2-R_1)\Bigl(\tfrac{1}{r-R_2}+\tfrac{1}{r-R_1}\Bigr),
  \label{eq:fuchs-sigmoid}
\end{equation}
clamped to $1$ for $r<R_1+R_b$ and to $0$ for $r>R_2-R_b$.  The clamp fixes the
orientation: $S_{\rm warp}$ falls from $1$ at the inner shell radius to $0$ at the
outer one, so it is the \emph{complement} of $[\exp\chi+1]^{-1}$, not that function
itself, which would jump by $0.99986$ at both clamp edges.  The poles sit at the
shell radii
$R_1$ and $R_2$, where $S_{\rm warp}$ and all its derivatives vanish; the clamp is
applied one buffer radius inside, at $r=R_1+R_b$ and $r=R_2-R_b$, where the
unclamped sigmoid still differs from its limiting value by $1.38\times10^{-4}$, so
the implemented profile has a jump of that size there.  Both radii lie
outside the $10$--$90\%$ wall band over which every margin reported here is
evaluated.  We take the published parameters $R_1=10\,$m, $R_2=20\,$m and
$M=4.49\times10^{27}\,$kg, that is $2M=6.6687$ in geometric units and
$2M/R_2=0.3334$, with $R_b=1$, $v_s=0.02$, and $2048$ radial points.  Equation
\eqref{eq:fuchs-sigmoid} places the $10$--$90\%$ crossings at $12.790029$ and
$17.209971$, giving $R_c=15$ and $W=4.419943$, the values the matched panel of
Section~\ref{sec:construction-exoticity} holds common.  Its boxcar smoother, endpoint handling included, is reproduced directly; a
variance-matched Gaussian kernel is available as a kernel-sensitivity control and
changes no reported sign.

\paragraph{Garattini--Zatrimaylov de~Sitter bubble.}
The construction of~\cite{garattini2025desitter} is a shift-only metric on
\emph{flat} slices (de~Sitter in Painlev\'e--Gullstrand form, with the
expansion carried by the background shift rather than by a scale factor), into
which the bubble is cut by interpolation:
\begin{equation}
  \alpha = 1,\qquad
  \gamma_{ij} = \delta_{ij},\qquad
  \beta^i = -\bigl(1-f(r_s)\bigr)\,\frac{x^i}{L} - f(r_s)\,v^i,
  \qquad L = H^{-1},
  \label{eq:gz-shift}
\end{equation}
with $f$ the same $\tanh$ top-hat used by the compact family and $\Lambda=3H^2$.
The bubble co-moves with the Hubble flow, $r(t)=r_0e^{Ht}$ and $v=Hr(t)$, which is
the paper's matching condition; on it the last two terms cancel and
\eqref{eq:gz-shift} collapses to
$\beta = -x/L + f(r_s)\,(x-x_s)/L$, a sum of two radial gradients and hence
\emph{irrotational}.  By Lemma~\ref{lem:jcurl} the Eulerian momentum density then
vanishes identically and the wall is Hawking--Ellis Type~I, which is the second case
of Lemma~\ref{lem:rodal}.  Section~\ref{sec:construction-exoticity} reports the
sampled wall, and the float64 labelling on it, in full.  The bubble
violates the pointwise null condition, with $R_c^2\min_i(\rho+p_i)=-0.0098$ over its
Type-I wall volume, consistent with the Santiago--Schuster--Visser no-go.  This
geometry is not asymptotically flat, so it appears in the cross-construction
comparison but not on the Minkowski-baselined exoticity ranking.

A de~Sitter-sliced variant, with $\gamma_{ij}=e^{2Ht}\delta_{ij}$, an Alcubierre
shift $\beta^x=-v_sf(r_s)$ and a bubble centre moving at constant velocity, is a
different spacetime and is not what is analysed here: it retains the Hubble flow
inside the bubble instead of switching it off, its slices are not flat, and its shift
is not irrotational ($|\nabla\times\beta|\simeq0.29$ at a generic wall point), so it
has Alcubierre-like Type-IV wall structure that the published construction lacks. The tabulated entries come from \eqref{eq:gz-shift}.

\section{Deciding the conditions without an eigendecomposition}
\label{app:slemma}

The Boolean decision is all-observer: it uses no rapidity cap and no algebraic
type.  Two cheaper constructions cannot serve in its place, the Eulerian null
witness and a margin set to $-\max(|\operatorname{Im}\lambda|,10^{-30})$;
\emph{Where it is used} below says by how much each fails.

Neither the tool nor its cone-theoretic content is new.  The dominant energy
condition states that $-T^a{}_b$ maps the future cone into itself, that is, that it
is a \emph{Lorentz-positive map}.  The structure of that cone is classical: Loewy
and Schneider~\cite{loewy1975icecream} give the operators preserving the Lorentz
cone and its extreme rays, and
Hildebrand~\cite{hildebrand2007lorentz,hildebrand2011lorentzII} its semidefinite
description, already a linear matrix inequality for that cone; within relativity
Bergqvist and Senovilla~\cite{bergqvist2001nullcone} characterise exactly this
class, and an S-lemma characterization of cone-preserving maps is older still.
What follows is the classical S-procedure written out for the four energy
conditions, with the multiplier's sign and range fixed condition by condition and
an exact rational certificate available whenever the optimal multiplier is not
forced to a single irrational value.

One step in it is not bookkeeping, and it is the reason the list
closes at four.  Parts~(i) to~(iii) each quantify over one vector, which is the single
constrained form the S-procedure is exact for.  The dominant condition quantifies over
two, and the S-lemma is false for two constraints even under Slater's condition, so
there is no multiplier to look for in the form the condition is usually written.
Part~(iv) removes the second vector rather than relaxing it: the flux
$J^a=-T^a{}_bu^b$ is causal exactly when $(T^2)_{ab}u^au^b\le0$, and it is
future-directed for free once the weak condition holds, because
$T_{ab}u^au^b=-J_au^a$.  The dominant condition is therefore the conjunction of two
one-vector tests, each of which is again the classical statement, and it is decided by
the same feasibility problem as the rest.

\paragraph{Proof of Theorem~\ref{thm:lmi}.}

\begin{proof}
By \eqref{eq:lmi-quadratic} the WEC is exactly $q\ge0$ on the open unit ball,
hence by continuity on the closed ball, whose boundary is the NEC.  Put
$g(w)=1-|w|^2$.  Since $g(0)=1>0$, Slater's condition holds for the single
quadratic constraint $g\ge0$; it requires nothing of $\rho_n$, so it holds everywhere,
the exotic points included, and the S-lemma
\cite{yakubovich1971,polik2007slemma} gives: $q\ge0$ wherever $g\ge0$ if and only
if $q-\sigma g\ge0$ on all of $\mathbb{R}^3$ for some $\sigma\ge0$.  Writing
$q(w)-\sigma g(w)=(1,w)M(\sigma)(1,w)^{\mathsf T}$, global non-negativity of that
affine-argument quadratic is equivalent to $M(\sigma)\succeq0$ (vectors with
nonzero first component by scaling, the rest by continuity).  This is~(ii).

Part~(i) is the equality-constrained case, and it is obtained by homogenizing.  Put
$x=(x_0,\tilde w)\in\mathbb{R}^4$ and
\begin{equation*}
  \tilde q(x)=\rho_n\,x_0^2-2x_0\,b\!\cdot\!\tilde w+\tilde w^{\mathsf T}S\tilde w,
  \qquad
  \tilde g(x)=x_0^2-|\tilde w|^2 ,
\end{equation*}
homogeneous quadratic forms on $\mathbb{R}^4$ with $\tilde q(1,w)=q(w)$ and
$\tilde g(1,w)=g(w)$.  The cone $\{\tilde g=0\}$ is
$\{c\,(\pm1,\hat s):|\hat s|=1,\ c\in\mathbb{R}\}$, and $\tilde q$ is even, so
``$\tilde q\ge0$ on $\{\tilde g=0\}$'' is exactly the null energy condition at the
point.  The version needed here is the equality-constrained S-lemma in its
\emph{semidefinite} form, due to Calabi.  Finsler's theorem is the strict
counterpart, and the convexity of the joint range that both rest on is Dines'; see
\cite{polik2007slemma} for the survey and the dimension hypothesis, and
\cite{yakubovich1971} for the inequality-constrained form.  It states that for
homogeneous quadratic forms on $\mathbb{R}^n$ with $n\ge3$ and $\tilde g$
\emph{indefinite},
\begin{equation*}
  \tilde q\ge0 \text{ on } \{\tilde g=0\}
  \iff
  \exists\,\sigma\in\mathbb{R}:\ \tilde q-\sigma\tilde g\ge0 \text{ on }\mathbb{R}^n,
\end{equation*}
the multiplier now free in sign because the constraint is an equality.  Both
hypotheses hold: $n=4\ge3$, and $\tilde g$ is indefinite since $\tilde g(1,0)=1$ while
$\tilde g(0,e_1)=-1$.  Nothing further is required: no Slater point (that belongs to
the inequality-constrained case used for~(ii)), no rank condition.  The multiplier
in~(i) is free in sign in an essential way, not merely permitted to be: for
$\widehat T=\mathrm{diag}(-1,2,2,2)$ the null contraction is $q(s)=1>0$, so the null
energy condition holds, yet $M(\sigma)\succeq0$ only for
$\sigma\in[-2,-1]$.  Finally $\tilde g$ has matrix $-\eta$, so $\tilde q-\sigma\tilde g$ has matrix
$\widehat T+\sigma\eta=M(\sigma)$ of \eqref{eq:lmi}, and non-negativity of that form
is $M(\sigma)\succeq0$.  This is~(i).

For~(iii), with the un-normalized $u^a=n^a+w^ie_i^a$
one has $g_{ab}u^au^b=-1+|w|^2$ and hence
$\Theta_{ab}u^au^b=q(w)+\tfrac12 T\,g(w)$, so the SEC is the WEC for $\Theta$.
For~(iv), the DEC requires $T_{ab}u^au^b\ge0$ together with
$J^a=-T^a{}_bu^b$ causal; and $J_aJ^a=(T^2)_{ab}u^au^b$, so causality of $J$ is
the WEC for $-(T^2)_{ab}$.  Future-directedness needs no separate test: since
$T_{ab}u^au^b=-J_au^a$, a causal $J$ with the WEC satisfied is future-directed.
\end{proof}

The best multiplier does more than exist.  Its margin is the null-cone minimum
itself, up to a factor of two.

\begin{corollary}[The multiplier margin is half the null-cone minimum]
\label{cor:margin}
In a fixed orthonormal tetrad, writing $k^a=n^a+s^ie_i^a$ with $|s|=1$,
\begin{equation}
  2\max_{\sigma\in\mathbb{R}}\lambda_{\min}\bigl(M(\sigma)\bigr)
  \;=\;\min_{|s|=1}T_{ab}k^ak^b ,
  \label{eq:margin-identity}
\end{equation}
and the maximum is attained.
\end{corollary}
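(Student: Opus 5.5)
The plan is to prove the two inequalities separately. The key observation is that the Euclidean identity $\mathbb{1}$ on the tetrad components restricts to the null cone as a fixed multiple of the normalization used in the statement. For a null vector written as $x=(1,s)$ with $|s|=1$ we have $x^{\mathsf T}\eta\,x=0$ and $x^{\mathsf T}x=2$. The Euclidean quadratic form of $M(\sigma)$ therefore sees the null cone exactly as $\widehat T$ does, up to the factor $2$. Put $m=\min_{|s|=1}T_{ab}k^ak^b$; this minimum exists and is finite because the sphere is compact and $q(s)$ is continuous.

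\emph{Upper bound.} Fix $\sigma\in\mathbb{R}$ and let $s_\ast$ attain $m$. Apply the Rayleigh quotient characterization of $\lambda_{\min}$ to $x_\ast=(1,s_\ast)$:
\begin{equation*}
\lambda_{\min}\bigl(M(\sigma)\bigr)\le\frac{x_\ast^{\mathsf T}M(\sigma)x_\ast}{x_\ast^{\mathsf T}x_\ast}
=\frac{\widehat T(x_\ast,x_\ast)+\sigma\,x_\ast^{\mathsf T}\eta\,x_\ast}{2}=\frac{m}{2}.
\end{equation*}
This holds for every $\sigma$, so $\sup_\sigma\lambda_{\min}(M(\sigma))\le m/2$. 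In particular the supremum is finite.

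\emph{Lower bound and attainment.} This is the substantive step, and the plan is to obtain it from part~(i) of Theorem~\ref{thm:lmi} applied to a shifted tensor. Set $\widehat T'=\widehat T-\tfrac m2\mathbb{1}$; it is a symmetric $4\times4$ matrix, i.e.\ the tetrad components of some symmetric tensor. Its null contractions are computed as follows. Every null vector has the form $c(\pm1,s)$ with $|s|=1$, and on it
\begin{equation*}
\widehat T'(x,x)=c^2\bigl(q(\pm s)-m\bigr)\ge0,
\end{equation*}
using $x^{\mathsf T}x=2c^2$ and the evenness of $\widehat T$ under $x\mapsto -x$. So $\widehat T'$ satisfies the NEC. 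Part~(i) of Theorem~\ref{thm:lmi} requires no hypothesis on the type and, through the homogenized form, applies to any symmetric matrix. It therefore supplies a $\sigma_\ast\in\mathbb{R}$ with $\widehat T'+\sigma_\ast\eta\succeq0$. Equivalently $M(\sigma_\ast)\succeq\tfrac m2\mathbb{1}$, so $\lambda_{\min}(M(\sigma_\ast))\ge m/2$. Combined with the upper bound this gives equality at $\sigma_\ast$. The maximum is therefore attained, and \eqref{eq:margin-identity} follows.

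The main obstacle is to check that subtracting a multiple of $\mathbb{1}$, which is not a Lorentz-covariant operation, is legitimate here. It is, because the corollary is stated in a fixed orthonormal tetrad. Within that tetrad, Theorem~\ref{thm:lmi}(i) is a statement about arbitrary real symmetric $\widehat T$: Calabi's equality-constrained S-lemma uses only the indefiniteness of $-\eta$ and $n=4\ge3$, and nothing about the origin of the matrix. The only other point to verify is that the normalization $k^a=n^a+s^ie^a_i$ fixes $x^{\mathsf T}x=2$ uniformly, so that the factor of two is exact.
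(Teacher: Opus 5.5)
Your proof is correct and is essentially the paper's own argument: the Rayleigh quotient at the null vector $(1,s)$, whose Euclidean square is $2$, gives $\lambda_{\min}(M(\sigma))\le m/2$ for every $\sigma$, and part~(i) of Theorem~\ref{thm:lmi} applied to $\widehat T-\tfrac m2\mathbb{1}$ gives the reverse inequality. The one difference is that you read attainment directly off the multiplier $\sigma_\ast$ supplied by part~(i). The paper instead adds a separate concavity-and-coercivity argument ($\lambda_{\min}(M(\sigma))\le-|\sigma|+\lambda_{\max}(\widehat T)$), which your observation shows is not needed.
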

\begin{proof}
Write $\nu$ for the right-hand side.  For any $\sigma$ and any unit $s$ the vector
$x=(1,s)/\sqrt2$ has Euclidean norm one, and since $\tilde g(1,s)=0$,
\begin{equation*}
  x^{\mathsf T}M(\sigma)\,x
  =\tfrac12\bigl[\tilde q(1,s)-\sigma\tilde g(1,s)\bigr]
  =\tfrac12\,T_{ab}k^ak^b ,
\end{equation*}
so minimizing over $s$ gives $\lambda_{\min}(M(\sigma))\le\nu/2$ for every $\sigma$.
Conversely $\widehat T-\tfrac{\nu}{2}\mathbb{1}$ satisfies the null condition: a null
$(1,s)$ has Euclidean square $2$, so its contraction drops by exactly $\nu$ and is
nonnegative by the definition of $\nu$.  Part~(i) then supplies a $\sigma$ with
$\widehat T-\tfrac{\nu}{2}\mathbb{1}+\sigma\eta\succeq0$, that is
$\lambda_{\min}(M(\sigma))\ge\nu/2$.  Attainment follows from
$\lambda_{\min}(M(\sigma))\le-|\sigma|+\lambda_{\max}(\widehat T)$, which sends the
concave function $\sigma\mapsto\lambda_{\min}(M(\sigma))$ to $-\infty$ at both ends.
\end{proof}

\paragraph{Using it.}
$M$ is affine in $\sigma$, so $\lambda_{\min}(M(\sigma))$ is \emph{concave} and the
search for a multiplier is a one-dimensional concave maximization: a ternary
search over $4\times4$ symmetric eigenvalues, with no semidefinite-programming
solver and no additional dependency.  A rational $\sigma$ together with an exact
$LDL^{\mathsf T}$ factorization of $M(\sigma)$ is a certificate a reader can check
by hand; it is only sufficient, and at points where the optimum is saturated
(the multiplier is then forced to a single, possibly irrational value) the check
must be done in exact or interval arithmetic.

A worked instance.  For $\rho_n=3$, $b=(1,0,0)$ and
$S=\mathbb{1}$,
\begin{equation}
  \widehat T=
  \begin{pmatrix} 3&-1&0&0\\ -1&1&0&0\\ 0&0&1&0\\ 0&0&0&1 \end{pmatrix},
  \qquad
  M(1)=\widehat T+\eta=
  \begin{pmatrix} 2&-1&0&0\\ -1&2&0&0\\ 0&0&2&0\\ 0&0&0&2 \end{pmatrix},
  \label{eq:worked-certificate}
\end{equation}
and $M(1)=LDL^{\mathsf T}$ with $L$ unit lower triangular carrying the single entry
$L_{10}=-\tfrac12$ and $D=\mathrm{diag}(2,\tfrac32,2,2)$.  Every entry is rational and
every diagonal entry of $D$ is positive, so $M(1)\succeq0$ and the multiplier
$\sigma=1\ge0$ certifies the null and weak conditions at that point.  The certificate
is checked by multiplying out $LDL^{\mathsf T}$; nothing else is required, and in
particular no eigenvalue is computed.  For comparison, the admissible multipliers here
are $\sigma\in[1-\sqrt3,\,1+\sqrt3]$, so the interval of rational choices is wide and
$\sigma=1$ is an interior point rather than a saturating one.

When no multiplier exists, a violating observer is obtained by minimizing $q$ on the
ball directly.  It cannot be read off the eigenvector of the most negative eigenvalue
of $M(\sigma_\star)$: writing that eigenvector as $(u_0,\mathbf{u})$ and taking
$w=\mathbf{u}/u_0$, one has
$q(w)=\sigma(1-|w|^2)+\lambda_{\min}|(u_0,\mathbf{u})|^2/u_0^2$, the first term being non-negative for
$\sigma\ge0$ and $|w|\le1$, so the negative term has to outweigh it, which
$|\lambda_{\min}|\ge\sigma$ would secure and nothing guarantees.  That construction also fails
outright at a repeated $\lambda_{\min}$, where the eigensolver returns an arbitrary
basis of the eigenspace: for $\widehat T=\operatorname{diag}(1,-3,10,10)$ the optimum is
$\sigma_\star=2$ with $M=\operatorname{diag}(-1,-1,12,12)$, and the routine can select
$e_0$, returning $w=0$ with $q(0)=+1$, which would be presented as a violating
observer, while $w=(-0.999,0,0)$ gives $q=-1.994$.  Minimizing $q$ on the ball
avoids both failures, and it does so without needing the minimum to be global: any
$w$ with $|w|\le1$ and $q(w)<0$ refutes the condition outright, since it exhibits an
observer.  The search is therefore one-sided by construction, and satisfaction is
not inferred from a failed search: it is certified by a multiplier and by nothing
else.  A quadratic on a ball can have a local
minimizer that is not global: $q=-x^2+x+y^2+z^2$ has a strict local minimum at
$w=+e_1$ with $q=0$, against the global $-2$ at $w=-e_1$.  A descent that stalled
there would return no witness, which is a refusal and not a false verdict.

Both sides of the verdict therefore have an exactly checkable object, and neither
needs a tolerance.  Satisfaction is certified by a multiplier $\sigma$, rational
except where the optimum is saturated and $\sigma$ is forced to a single irrational
value; violation is certified by a rational observer.  Both are verified in exact
arithmetic, rational where the multiplier is.  No tetrad is
required, because $M(\sigma)=e\,(T+\sigma g)\,e^{\mathsf T}$ is congruent to the
coordinate matrix $T_{ab}+\sigma g_{ab}$ and congruence preserves inertia.

The null condition needs one further step, because its multiplier is free in sign.
To refute every $\sigma$ at once a dual variable $X\succeq0$ must satisfy
$\langle g,X\rangle=0$ \emph{exactly}, since then
$0\le\langle T+\sigma g,X\rangle=\langle T,X\rangle$ for every $\sigma$, so
$\langle T,X\rangle<0$ leaves no multiplier available.  A rank-one
$X=kk^{\mathsf T}$ would need a rational null $k$, and a rational Lorentzian form
need not represent zero: over $\mathbb{Q}$ the form $\mathrm{diag}(-15,1,1,1)$ has no
nonzero null vector at all, because $15$ is not a sum of three rational squares.  A
rank-two $X$ avoids the question entirely.  Take rational $k,\ell$ whose norms
straddle the cone, $g(k,k)\,g(\ell,\ell)<0$, and set
\begin{equation}
  X=\alpha\,kk^{\mathsf T}+\beta\,\ell\ell^{\mathsf T},
  \qquad \alpha=|g(\ell,\ell)|,\quad \beta=|g(k,k)| .
  \label{eq:nec-dual-pair}
\end{equation}
Then $\langle g,X\rangle=\alpha\,g(k,k)+\beta\,g(\ell,\ell)=0$ identically, by the
opposite signs, while $\langle T,X\rangle=\alpha\,T(k,k)+\beta\,T(\ell,\ell)$ is a
rational number whose negativity refutes every multiplier at once; a nonnegative
value refutes nothing.  Straddling alone is not sufficient:
$k$ must be taken where $T$ is negative.  For $\widehat T=\mathrm{diag}(1,-3,0,0)$ and
$\ell=(0,0,1,0)$, the timelike $k=(5,4,0,0)$ gives $\langle T,X\rangle=-23$ and
refutes the null condition, whereas $k=(2,1,0,0)$ gives $+1$ and refutes nothing.

\paragraph{What this does and does not extend.}
Theorem~\ref{thm:lmi} is \emph{pointwise}: it asserts $\forall x\,\exists\,
\sigma(x)$.  It does not lift to a spatial box, where a single interval LMI would
prove the strictly stronger $\exists\,\sigma\,\forall x$.  That stronger statement
already fails on the saturated family $\rho=a(x)>0$, $b=0$, $S=-a(x)\mathbb{1}$,
for which $q\ge0$ everywhere while $\sigma=a(x)$ is forced pointwise, so no common
multiplier exists on any box over which $a$ varies, and subdivision does not repair
it.  Global coverage over a domain is therefore left with the interval branch and
bound of Appendix~\ref{app:enclosures}, which brackets $\min_{|w|\le1}q$ directly.

\paragraph{Where it is used.}
The two cheaper routings named at the head of this
appendix fail quantitatively.  For the canonical Type~II block
$(\mu,f,p_2,p_3)=(0,1,-2,0)$ the Eulerian null witness is exactly zero, which would
read as a satisfied NEC, while $k=(1,0,1,0)$ gives $T_{ab}k^ak^b=-1$ and the true
null-cone minimum is $-4/3$; Theorem~\ref{thm:lmi} returns the margin $-2/3$, which
by \eqref{eq:margin-identity} is exactly half of it.  At Type~III $\operatorname{Im}\lambda$ vanishes identically, so an
imaginary-part sentinel would report every Type~III point as violating at
$-10^{-30}$ whatever its stress-energy.  The margins supplied here at Types~II,
III and~IV therefore differ from what any single contraction returns.  Everything restricted
to Type~I, which is where the reported wall extrema and fractions lie, is set by
the eigenvalue inequalities
and is untouched by it.  A null-dust Type~II benchmark saturates the null
condition to $<10^{-12}$ under the capped observer search, but it is the one
Type~II family in which the momentum witness equals the null-cone minimum, so it
could not have detected the defect just quantified.

\paragraph{Why the decision is routed around the classifier.}
Type~IV identification by eigendecomposition is tolerance-dependent, and the
dependence is not a matter of choosing the tolerance better: it is a property of the
eigenvalue problem.  A defective Jordan
block of size $m$ is not a continuous function of the matrix entries, and a
perturbation $\delta$ displaces its eigenvalues by $O(\delta^{1/m})$.  Fitting the
displacement against $\delta$ over $\delta\in[10^{-14},10^{-8}]$ returns a slope of
$0.509$ for a $J_2$ block and $0.334$ for a $J_3$, against the predicted $1/2$ and
$1/3$.  Floating-point rounding alone supplies $\delta\sim\varepsilon$, so a $J_3$
splits at $\varepsilon^{1/3}\approx 6\times10^{-6}$, and generically into a
\emph{complex} triple.  For a generic Type~III tensor $J_3(\lambda)\oplus[p]$,
$p\neq\lambda$, the float64 spectrum therefore comes back non-real four orders of
magnitude above the classification threshold, and the label the eigensolver returns is
whatever the tolerance selects: on the exact $J_3(-1)\oplus[3]$ the measured complex
split is $5.7\times10^{-6}$, and the classifier reports Type~IV at
$\mathrm{tol}=10^{-10}$, Type~II at $2\times10^{-6}$ and Type~III at $5\times10^{-6}$.
\emph{No} tolerance can decide whether the input was exactly Type~III or a Type~IV
tensor nearby, because in float64 the two are indistinguishable.  A usable window
exists only when the normalised Jordan-chain strength satisfies
$\kappa\gtrsim\varepsilon^{1/4}\approx10^{-4}$.  We therefore report the algebraic
type as a diagnostic with a measured error rate, not as the basis of a
certification.  At $50$ digits the ambiguity is gone: the arbitrary-precision
classifier separates
Segre~$[3,1]$ from~$[2,1,1]$ by the Jordan defect of the repeated eigenvalue rather
than by counting distinct eigenvalues, and labels both correctly.  For the same
reason no count of Type~II or Type~III wall points enters any verdict or statistic
in this paper: such a count aggregates float64 labels that follow the tolerance, so
it would not be a measurement.  The one place such a count is printed is the
Garattini--Zatrimaylov wall of Section~\ref{sec:construction-exoticity}, and it is
reported there as the rounding artefact it is, feeding nothing.  Any conditional statistic built on it inherits the defect, since a
weak-condition failure fraction restricted to exactly Type~III points can only be
$100\%$, so a value below $100\%$ would be evidence of mislabelling and nothing
else.

Theorem~\ref{thm:lmi} is what makes that separation possible.  It forms no
eigendecomposition and uses no type, so it returns the correct verdict at exactly
the points the classifier cannot label.  On the Type~III tensor
\eqref{eq:typeIII-example} the verdict is elementary and exact:
$T_{ab}=\ell_am_b+m_a\ell_b$ gives $T_{ab}k^ak^b=2(\ell\!\cdot\!k)(m\!\cdot\!k)$, so
the rational null direction $k^a=(1,0,1,0)$ returns $-2$, and no multiplier can
certify a condition that a rational null vector violates.

The independence runs the other way too, and gives a test.  Types~III and~IV
violate \emph{every} standard energy
condition~\cite{martinmoruno2017lnp,martinmoruno2018core}: for a Type~III tensor the
null condition cannot be satisfied once the off-diagonal Jordan entry is nonzero, so
a nontrivial Type~III point violates all four.  Any point the classifier labels III or IV at which the LMI
\emph{certifies} satisfaction is thus a demonstrated classification error.  On the
regularized WarpShell wall at $v_s=0.5$, on the $50^3$ grid over $[-25,25]^3$, the
classifier returns $92\,616$ Type~I and $32\,384$ Type~IV points and no Type~II or
Type~III point at all; the null energy condition is reported violated at every one
of the $32\,384$, so the check records no contradiction.  That verdict is reported
and not certified: the multiplier search runs in binary64 and $\lambda_{\min}$
from a symmetric eigensolver carries a backward error of order
$\varepsilon\lVert M\rVert$, so a margin is a verdict only once it clears a noise
floor in either direction.  That floor is
\begin{equation}
  \epsilon_{\rm floor}=10^{-12}\,\max_{IJ}|\widehat T_{IJ}|+10^{-18},
  \label{eq:noise-floor}
\end{equation}
relative to the scale of the tensor so that it is covariant under $T\to cT$, with a
small absolute term that keeps exact vacuum, where the residual ternary bracket
returns about $-6\times10^{-22}$, from being reported as a violation.  Inside the
floor the answer is
\emph{inconclusive}, and the escape is exact rather than tighter: a rational
multiplier checked by an exact $LDL^{\mathsf T}$, or a rational violating observer
checked by exact evaluation.

\paragraph{The same check over the reported family.}
The same test applied to the four constructions of this paper, at every point of
the wall-clustered $48^3$ benchmark grid and at $v_s=0.5,1,2$, is
Table~\ref{tab:lmi_typefree}(a).  The fraction of Type-I points at which the
eigenvalue route and the linear matrix inequality return the same verdict, where
the two are provably equivalent so that a disagreement would be a \emph{label}
error rather than a disagreement about physics, is $100.00\%$ everywhere.  The
fraction of points labelled Type~III or~IV at which the inequality certifies
\emph{satisfaction}, which by the argument above cannot be reconciled with the
label, is zero over $683\,648$ such points across the family.  Both are
thresholded against \eqref{eq:noise-floor} and not against zero.  This does not make the
classifier exact: what it measures is that on the geometries reported here the
type-based and type-free routes never disagree, and the inequality decides every
wall point at which the type-based route returns no verdict ($1\,304$ of $1\,304$
for Alcubierre at $v_s=0.5$, and likewise at every other entry).

\begin{table}[tbp]
  \centering
  \footnotesize
  \caption{\emph{(a)}~Type-free check of the type-based route on the wall-clustered
    $48^3$ benchmark grid.  \emph{Left}: agreement of Theorem~\ref{thm:lmi} with the
    eigenvalue inequalities at Type-I points.  \emph{Right}: Type-III and Type-IV
    points at which the inequality certifies satisfaction.  Both thresholded against
    \eqref{eq:noise-floor}; Rodal has no non-Type-I point, so its right-hand entries
    are undefined.  \emph{(b)}~Crossing the Type-I/Type-IV transition on the analytic
    families of the text.  ``Labelled II'' counts sweep points the float64 classifier
    returns as Type~II, ``Margin Lip.''\ is the Lipschitz constant of the margin along
    the sweep, and the last column is the residual of the identity
    \eqref{eq:margin-identity} against $\min T_{ab}k^ak^b$ over $400\,000$ sampled
    null directions.}
  \label{tab:lmi_typefree}

  \emph{(a)}\;\begin{tabular}{@{}l ccc ccc@{}}
  \toprule
  & \multicolumn{3}{c}{Type-I verdicts agreeing (\%)} & \multicolumn{3}{c}{Certified label errors (\%)} \\
  \cmidrule(lr){2-4}\cmidrule(lr){5-7}
  Metric & $v_s=0.5$ & $1.0$ & $2.0$ & $v_s=0.5$ & $1.0$ & $2.0$ \\
  \midrule
  Alcubierre & 100.00 & 100.00 & 100.00 & 0.00 & 0.00 & 0.00 \\
  Natário & 100.00 & 100.00 & 100.00 & 0.00 & 0.00 & 0.00 \\
  Van den Broeck & 100.00 & 100.00 & 100.00 & 0.00 & 0.00 & 0.00 \\
  Rodal & 100.00 & 100.00 & 100.00 & -- & -- & -- \\
  \bottomrule
\end{tabular}

  \medskip
  \emph{(b)}\;\begin{tabular}{@{}l cccc@{}}
  \toprule
  Family & points & labelled II & margin Lip.\ & $|2\mathcal{N}_{\rm LMI} - \min_{|s|=1}q|$ \\
  \midrule
  Momentum, $\Delta\to0$ & 401 & 1 & 1 & 1.4e-06 \\
  Momentum, decoupled & 401 & 1 & 0.499 & 1.4e-09 \\
  Transverse ($\Delta>0$) & 301 & 0 & 0.499 & 7.5e-05 \\
  \bottomrule
\end{tabular}

\end{table}

That check has three limits.  First,
$89\,808$ of the $92\,616$ Type-I labels are near-vacuum points assigned Type~I by
the near-vacuum test rather than computed from an eigenstructure, so its
discriminating power lies in the $2\,808$ resolved Type-I points and the $32\,384$
Type-IV ones.  Second, the Type-III case tested \emph{zero} points, and by the
displacement estimate above it is structurally guaranteed to, so that half of the
test is vacuous.  Third, the geometry is the regularized WarpShell, which
Section~\ref{sec:related} excludes from every quantitative table.  A grid is the
wrong instrument for this question in any case: it has no reason to land on a
codimension-one locus, and it supplies no independent reference for the label it
is being asked to check.  The analytic sweeps below are built to land on it
exactly.

\paragraph{The transition itself, on families where the answer is known.}
Both limits are removed by leaving the grid.  On analytic one-parameter families the
algebraic type is known in closed form, an independent brute-force scan over
$400\,000$ null directions is available at every point, and the transition can be
crossed deliberately rather than sampled.  Three families are swept
(Table~\ref{tab:lmi_typefree}(b)),
all in a Minkowski orthonormal frame so that the classifier, the inequality and the
scan operate on literally the same matrix.

The first is the momentum block itself, $\rho=S_\parallel=1$ with $p_\perp=0$ and
$|j|$ swept through $1$, so that $\Delta=4(1-|j|^2)$ carries the point from Type~I
through the nilpotent $J_2(0)$ at $|j|=1$ to Type~IV.  Across $401$ samples the
inequality's margin is Lipschitz with constant $1$ and passes through the locus
smoothly ($+4.99\times10^{-3}$ at $|j|=0.995$, exactly $0$ at $|j|=1$,
$-5.00\times10^{-3}$ at $|j|=1.005$), while the label is discontinuous there.  The
margin agrees with the brute-force null-cone minimum to $1.4\times10^{-6}$, which is
the Monte-Carlo error of the scan.

The second family fixes the same $\Delta$ and the same Type~II point but sets
$p_\perp=-3$, which puts the null-cone minimum at $-2-|j|^2/4$.  The label still
runs I\,$\to$\,II\,$\to$\,IV across the sweep while the sign of the margin does not
change at any of the $401$ points, so the algebraic type is not a proxy for an
energy condition and a procedure that read the condition off the type would report
a transition where the physics has none.

The third is the transverse family of Corollary~\ref{cor:conformal}, where
$\Delta=3>0$ throughout and Type~IV is opened by the transverse coupling instead.
The measured band endpoints, $m=0.78$ and $m=2.01$ at a sweep spacing of $0.01$,
reproduce the predicted $m_\star\approx0.782$ and $m_2\approx2.015$.

The fourth family is the one the grid could not supply.  On an exact Segre $[3,1]$
family, $121$ points log-spaced over $|f|\in[10^{-6},1]$, the $50$-digit recomputation returns
Type~III at every one.  The float64 classifier returns Type~IV at every one of the
$121$ at $\mathrm{tol}=10^{-10}$; at $\mathrm{tol}=2\times10^{-6}$ it returns
Type~II at $63$, Type~III at $40$ and Type~IV at $18$, and at $5\times10^{-6}$ it
returns $70$, $45$ and $6$.  The label is a function of the tolerance across the
whole family, and it is never right at the tolerance used here.  The linear
matrix inequality certifies the null-energy violation at all $121$.  The verdict
therefore survives six decades of chain strength over which the label does not.

\subsection{A verdict certified from the metric}
\label{app:interval-lmi}

One step in the chain above is still floating-point.  The rational certificate is
exact, but it is exact \emph{relative to the components it is handed}: it certifies
a statement about a binary64 snapshot of $T_{ab}$, not about the stress-energy the
metric induces.  The gap is the curvature chain, and it closes with the
construction of Appendix~\ref{app:enclosures}.

That appendix evaluates $g\to\Gamma\to R^a{}_{bcd}\to R_{ab}\to
G_{ab}\to T_{ab}$ in interval arithmetic with interval forward-mode
differentiation, and its lower bound is an interval $LDL^{\mathsf T}$ acceptance
test on exactly the matrix $M(\sigma)$ of \eqref{eq:lmi}.  Run on a degenerate
(point) box the first is a rigorous enclosure of the true tensor at that point,
so composing the two decides the condition \emph{from the metric}.

Table~\ref{tab:interval_lmi} runs that composition at three wall points of each
drive at $v_s=\tfrac12$.  The brackets close to a width between $1.1\times10^{-16}$
and $3.3\times10^{-14}$ in absolute terms, which is ten significant digits on
Van~den~Broeck's
small on-axis margin and fourteen on the deep wall values, with
no tolerance and no floating-point tensor anywhere in the decision.  A floating-point
eigenvalue routine does run, but only to propose the multiplier $\sigma$: a poor
proposal widens the bracket and cannot invalidate it, since what is certified is the
interval $LDL^{\mathsf T}$ test at the $\sigma$ proposed.  Both
directions occur: Van~den~Broeck's Type-I remainder reaches the axis and is
certified \emph{satisfied} there, while the same drive is certified violated half
a wall thickness off it.

The composition covers the null, weak, strong and
dominant conditions alike: $\Theta_{ab}$ and $-(T^2)_{ab}$ are interval-algebraic
rearrangements of the $(\rho_n,b_i,S_{ij})$ the chain already returns, and by
Theorem~\ref{thm:lmi} the strong and dominant cases are the ball condition on
those two tensors.  The \emph{witness} does not carry over:
for a null $k$ the metric term drops out, so $\Theta(k,k)=T(k,k)$ and the null
witness gives no information about the strong condition.  The timelike conditions
take their witness from the closed ball instead, which includes $w=0$, so
$\rho_n<0$ alone establishes a weak-condition violation with no direction search.

At a saturated point, such as the asymptotically flat exterior where $q\equiv0$, the
interval test returns \emph{inconclusive}, which is the correct outcome there.

\begin{table}[tbp]
  \centering
  \footnotesize
  \caption{The null-energy verdict taken back to the metric, at $R_b=1$,
    $\sigma=8$, $v_s=\tfrac12$.  $\mathcal{N}$ is the type-free null deficit
    \eqref{eq:enclosure-objective}; both endpoints come from the interval
    curvature chain and the interval $LDL^{\mathsf T}$ multiplier test.  The two
    endpoints agree to the six decimals printed, so the width is given as its own
    column.}
  \label{tab:interval_lmi}
  \begin{tabular}{@{}l l rr r c@{}}
  \toprule
  Metric & wall point & $\mathcal{N}_{\rm lower}$ & $\mathcal{N}_{\rm upper}$ & width & verdict \\
  \midrule
  Alcubierre & on axis & $-0.198944$ & $-0.198944$ & $7.7\times10^{-15}$ & violated \\
  Alcubierre & wall, $s=0.35$ & $-0.343691$ & $-0.343691$ & $2.8\times10^{-16}$ & violated \\
  Alcubierre & wall, $s=0.5$ & $-0.146460$ & $-0.146460$ & $1.1\times10^{-16}$ & violated \\
  Natário & on axis & $-0.557042$ & $-0.557042$ & $3.3\times10^{-16}$ & violated \\
  Natário & wall, $s=0.35$ & $-1.389981$ & $-1.389981$ & $4.4\times10^{-16}$ & violated \\
  Natário & wall, $s=0.5$ & $-1.267846$ & $-1.267846$ & $1.1\times10^{-15}$ & violated \\
  Van den Broeck & on axis & $+0.000195$ & $+0.000195$ & $4.5\times10^{-15}$ & satisfied \\
  Van den Broeck & wall, $s=0.35$ & $+0.034675$ & $+0.034675$ & $1.3\times10^{-16}$ & satisfied \\
  Van den Broeck & wall, $s=0.5$ & $-1.006795$ & $-1.006795$ & $4.4\times10^{-16}$ & violated \\
  Rodal & on axis & $-0.156336$ & $-0.156336$ & $1.1\times10^{-14}$ & violated \\
  Rodal & wall, $s=0.35$ & $-0.169108$ & $-0.169108$ & $1.7\times10^{-16}$ & violated \\
  Rodal & wall, $s=0.5$ & $-0.000911$ & $-0.000911$ & $3.3\times10^{-14}$ & violated \\
  \bottomrule
\end{tabular}

\end{table}

\subsection{A census of the wall, decided from the metric}
\label{app:interval-census}

The spot check above supports a qualitative statement.  The composition is cheap
enough to run over the whole active wall band, where that statement becomes a
count.
Table~\ref{tab:interval_census} reports it: for each
drive, $4257$ points on a $33\times129$ product grid in $(r,\theta)$ over the band
$f\in[0.1,0.9]$, each of the four conditions decided from the metric.

Two features of that table matter.  The first is what it does \emph{not} use: no
Hawking--Ellis type, no eigendecomposition of $T^a{}_b$, no classification tolerance
and no rapidity cap, so the verdict is the same computation whether a point is
Type~I, II, III or IV, including on the Type~II locus of
\eqref{eq:momentum-trichotomy} that no float64 eigensolver can separate from its
neighbours.  The second is the refusal column: across $17{,}028$ points and $68{,}112$ verdicts it is
\emph{zero}, and so is the count of points at which the interval chain failed to
evaluate.  Every sampled wall point was decided.

Note that each entry counts sampled points, so the table is a statement about that
sample and not about the continuum: a count over a finite set is not a measure of the
wall.  Band membership is itself
certified in interval arithmetic at every listed point, so the set is exactly what
it is reported to be, but it remains a finite set.  The continuum
statement for the null condition is the branch-and-bound bracket of
Table~\ref{tab:enclosures}, which
bounds the infimum over the whole band rather than over a sample.  The two are
complementary: this one is broad and pointwise, that one narrow and global.

The two also check each other.  Both produce an upper bound on the same infimum, by
different routes: the census endpoint is a value \emph{attained} at a sampled point, and the
search's achieved upper end is attained inside a box it did not discard.  Neither
bounds the other by any theorem, and the sample is necessarily the weaker of the two,
since a $33\times129$ grid resolves only what it samples.  What is checkable is that
the census endpoint lies at or above the certified lower end, which it does for every
drive, and that the two upper bounds are close: $-0.63001$ against $-0.63108$ for
Alcubierre, $-5.2125$ against $-5.2164$ for Nat\'ario, $-1.29620$ against $-1.29646$
for Van~den~Broeck, and $-0.19418$ against $-0.19447$ for Rodal, relative differences
of $1.7\times10^{-3}$, $7.6\times10^{-4}$, $1.9\times10^{-4}$ and $1.5\times10^{-3}$.
A $33\times129$ sample of the band comes within a fifth of a percent of what a
$120{,}000$-box search attains.

\begin{table}[tbp]
  \centering
  \footnotesize
  \caption{Every sampled wall point decided from the metric, at $R_b=1$,
    $\sigma=8$, $v_s=\tfrac12$ (Van~den~Broeck also $\tilde R=1$, $\alpha=\tfrac12$,
    $\sigma_B=8$), $80$-bit working precision.  Per construction, $4257$
    points of a $33\times129$ grid in $(r,\theta)$ over the band $f\in[0.1,0.9]$
    ($r\in[0.8627,1.1373]$).  Entries count
    \emph{sampled points}, not wall measure.  ``Refused'' counts inconclusive
    verdicts, which are not evidence that the condition holds.  ``Deepest upper''
    is the most negative certified upper endpoint attained.}
  \label{tab:interval_census}
  \begin{tabular}{@{}l l rrr r@{}}
  \toprule
  Construction & condition & certified & certified & refused & deepest \\
   &  & violated & satisfied &  & upper \\
  \midrule
  Alcubierre & NEC & 4257 & 0 & 0 & $-0.6300$ \\
   & WEC & 4257 & 0 & 0 & $-0.6300$ \\
   & SEC & 4257 & 0 & 0 & $-0.6300$ \\
   & DEC & 4257 & 0 & 0 & $-0.6300$ \\
  \cmidrule(l){1-6}
  Natário & NEC & 4257 & 0 & 0 & $-5.2125$ \\
   & WEC & 4257 & 0 & 0 & $-5.2125$ \\
   & SEC & 4257 & 0 & 0 & $-5.2125$ \\
   & DEC & 4257 & 0 & 0 & $-6.9489$ \\
  \cmidrule(l){1-6}
  Van den Broeck & NEC & 3359 & 898 & 0 & $-1.2962$ \\
   & WEC & 3359 & 898 & 0 & $-1.2962$ \\
   & SEC & 4257 & 0 & 0 & $-1.2962$ \\
   & DEC & 3359 & 898 & 0 & $-1.2962$ \\
  \cmidrule(l){1-6}
  Rodal & NEC & 3811 & 446 & 0 & $-0.1942$ \\
   & WEC & 3811 & 446 & 0 & $-0.1942$ \\
   & SEC & 3829 & 428 & 0 & $-0.2235$ \\
   & DEC & 4205 & 52 & 0 & $-0.1942$ \\
  \bottomrule
\end{tabular}

\end{table}

\providecommand{\newblock}{}


\begin{thebibliography}{10}
\expandafter\ifx\csname url\endcsname\relax
  \def\url#1{{\tt #1}}\fi
\expandafter\ifx\csname urlprefix\endcsname\relax\def\urlprefix{URL }\fi
\providecommand{\eprint}[2][]{\url{#2}}

\bibitem{alcubierre1994}
Alcubierre M 1994 {\em Class. Quantum Grav.\/} {\bf 11} L73--L77
  (\textit{Preprint} \eprint{gr-qc/0009013})

\bibitem{santiago2021}
Santiago J, Schuster S and Visser M 2022 {\em Phys. Rev. D\/} {\bf 105} 064038
  (\textit{Preprint} \eprint{2105.03079})

\bibitem{hawking1973}
Hawking S~W and Ellis G~F~R 1973 {\em The Large Scale Structure of
  Space-Time\/} (Cambridge University Press)

\bibitem{gergely2026braiding}
Gergely L~{\'A} 2026 Fluid interpretation, Hawking--Ellis classification, and
  energy conditions of the proper kinetic gravity braiding stress tensor
  (\textit{Preprint} \eprint{2608.15228})

\bibitem{martinmoruno2018core}
Mart{\'\i}n-Moruno P and Visser M 2018 {\em Class. Quantum Grav.\/} {\bf 35}
  125003 (\textit{Preprint} \eprint{1802.00865})

\bibitem{rodal2026}
Rodal J 2026 {\em Gen. Relativ. Gravit.\/} {\bf 58} 1 (\textit{Preprint}
  \eprint{2512.18008})

\bibitem{fuchs2024constvel}
Fuchs J, Helmerich C, Bobrick A, Sellers L, Melcher B and Martire G 2024 {\em
  Class. Quantum Grav.\/} {\bf 41} 095013 (\textit{Preprint}
  \eprint{2405.02709})

\bibitem{bobrick2021}
Bobrick A and Martire G 2021 {\em Class. Quantum Grav.\/} {\bf 38} 105009
  (\textit{Preprint} \eprint{2102.06824})

\bibitem{fell2021}
Fell S~D~B and Heisenberg L 2021 {\em Class. Quantum Grav.\/} {\bf 38} 155020
  (\textit{Preprint} \eprint{2104.06488})

\bibitem{garattini2025desitter}
Garattini R and Zatrimaylov K 2026 Warp drive in a de Sitter universe
  (\textit{Preprint} \eprint{2502.13153})

\bibitem{jax2018}
Bradbury J, Frostig R, Hawkins P, Johnson M~J, Leary C, Maclaurin D, Necula G,
  Paszke A, Vander{P}las J, Wanderman-{M}ilne S and Zhang Q 2018 {JAX}:
  composable transformations of {Python+NumPy} programs
  \url{https://github.com/jax-ml/jax}, version 0.9.0

\bibitem{kidger2021equinox}
Kidger P and Garcia C 2021 Equinox: neural networks in {JAX} via callable
  {PyTrees} and filtered transformations (\textit{Preprint} \eprint{2111.00254})

\bibitem{warpfactory2024}
Helmerich C {\em et~al.\/} 2024 {\em Class. Quantum Grav.\/} {\bf 41} 095009
  (\textit{Preprint} \eprint{2404.03095})

\bibitem{martinmoruno2017lnp}
Mart{\'\i}n-Moruno P and Visser M 2017 Classical and semi-classical energy
  conditions {\em Wormholes, Warp Drives and Energy Conditions\/} ({\em
  Fundamental Theories of Physics\/} vol 189) ed Lobo F~S~N (Springer) pp
  193--213 (\textit{Preprint} \eprint{1702.05915})

\bibitem{martinmoruno2018type3}
Mart{\'\i}n-Moruno P and Visser M 2018 {\em Class. Quantum Grav.\/} {\bf 35}
  185004 (\textit{Preprint} \eprint{1806.02094})

\bibitem{martinmoruno2017rainich}
Mart{\'\i}n-Moruno P and Visser M 2017 {\em Class. Quantum Grav.\/} {\bf 34}
  225014 (\textit{Preprint} \eprint{1707.04172})

\bibitem{martinmoruno2021backreaction}
Mart{\'\i}n-Moruno P and Visser M 2021 {\em Phys. Rev. D\/} {\bf 103} 124003
  (\textit{Preprint} \eprint{2102.13551})

\bibitem{barzegar2026classification}
Barzegar H, Buchert T and Vigneron Q 2026 General formalism, classification, and
  demystification of the current warp-drive spacetimes (\textit{Preprint}
  \eprint{2602.16495})

\bibitem{olum1998}
Olum K~D 1998 {\em Phys. Rev. Lett.\/} {\bf 81} 3567--3570 (\textit{Preprint}
  \eprint{gr-qc/9805003})

\bibitem{lobovisser2004}
Lobo F~S~N and Visser M 2004 {\em Class. Quantum Grav.\/} {\bf 21} 5871--5892
  (\textit{Preprint} \eprint{gr-qc/0406083})

\bibitem{kontou2020}
Kontou E~A and Sanders K 2020 {\em Class. Quantum Grav.\/} {\bf 37} 193001
  (\textit{Preprint} \eprint{2003.01815})

\bibitem{celmaster2025}
Celmaster B and Rubin S 2025 Violations of the weak energy condition for Lentz
  warp drives (\textit{Preprint} \eprint{2511.18251})

\bibitem{natario2002}
Nat{\'a}rio J 2002 {\em Class. Quantum Grav.\/} {\bf 19} 1157--1166
  (\textit{Preprint} \eprint{gr-qc/0110086})

\bibitem{vandenbroeck1999}
Van Den~Broeck C 1999 {\em Class. Quantum Grav.\/} {\bf 16} 3973--3979
  (\textit{Preprint} \eprint{gr-qc/9905084})

\bibitem{krasnikov1998}
Krasnikov S~V 1998 {\em Phys. Rev. D\/} {\bf 57} 4760--4766 (\textit{Preprint}
  \eprint{gr-qc/9511068})

\bibitem{lentz2021}
Lentz E~W 2021 {\em Class. Quantum Grav.\/} {\bf 38} 075015 (\textit{Preprint}
  \eprint{2006.07125})

\bibitem{rodal2023invariants}
Rodal J 2023 {\em Gen. Relativ. Gravit.\/} {\bf 55} 134 (\textit{Preprint}
  \eprint{2512.20738})

\bibitem{rodal2024natario}
Rodal J 2024 {\em Int. J. Theor. Phys.\/} {\bf 63} 168 (\textit{Preprint}
  \eprint{2512.19837})

\bibitem{garattini2024bh}
Garattini R and Zatrimaylov K 2024 {\em Phys. Lett. B\/} {\bf 856} 138910
  (\textit{Preprint} \eprint{2408.04495})

\bibitem{clough2024gw}
Clough K, Dietrich T and Khan S 2024 {\em The Open Journal of Astrophysics\/}
  {\bf 7} (\textit{Preprint} \eprint{2406.02466})

\bibitem{rodal2025infeasibility}
Rodal J 2025 On the infeasibility of low-energy warp drive via metamaterial
  gravitational coupling (\textit{Preprint} \eprint{2507.09724})

\bibitem{huey2024}
Huey G 2024 {\em Class. Quantum Grav.\/} {\bf 41} 135007
  (\textit{Preprint} \eprint{2311.07193})

\bibitem{santospereira2025matching}
Santos-Pereira O~L, Abreu E~M~C and Ribeiro M~B 2026 {\em Eur. Phys. J. C\/}
  {\bf 86} 46 (\textit{Preprint} \eprint{2512.12541})

\bibitem{barzegar2024restrictions}
Barzegar H and Buchert T 2025 {\em Universe\/} {\bf 11} 293 (\textit{Preprint}
  \eprint{2407.00720})

\bibitem{buchert2026realizations}
Buchert T and Frackowiak A 2026 {\em Universe\/} {\bf 12} 132
  (\textit{Preprint} \eprint{2605.03653})

\bibitem{white2025nacelle}
White H \textit{et~al} 2025 {\em Class. Quantum Grav.\/} {\bf 42} 235022

\bibitem{martingarcia2008}
Mart{\'\i}n-Garc{\'\i}a J~M 2008 {\em Comput. Phys. Commun.\/} {\bf 179}
  597--603 (\textit{Preprint} \eprint{0803.0862})

\bibitem{gourgoulhon2018}
Gourgoulhon {\'E} and Mancini M 2018 {\em Les cours du CIRM\/} {\bf 6}
  (\textit{Preprint} \eprint{1804.07346})

\bibitem{einsteintoolkit2024}
{Einstein Toolkit Consortium} 2024 The {Einstein Toolkit}: a community
  computational infrastructure for relativistic astrophysics
  \url{https://einsteintoolkit.org}

\bibitem{maccallum2018}
MacCallum M~A~H 2018 {\em Living Rev. Relativ.\/} {\bf 21} 6

\bibitem{warpfactory_toolkit2024}
Helmerich C, Fuchs J, Bobrick A, Melcher B, Sellers L and Martire G 2023 {Warp
  Factory}: A numerical toolkit for the analysis and optimization of warp drive
  geometries {\em AIAA SCITECH 2023 Forum\/} (\textit{Preprint}
  \eprint{2404.10855})

\bibitem{warpfactory_docs2024}
{Applied Physics} 2024 {WarpFactory} documentation: energy conditions analysis
  \url{https://applied-physics.gitbook.io/warp-factory/examples/analysis/a1-energy-conditions}

\bibitem{coogan2024diffjeom}
Coogan A 2024 diffjeom: differential geometry with {JAX}
  \url{https://github.com/adam-coogan/diffjeom}

\bibitem{cranganore2025einsteinfields}
Cranganore S~S, Bodnar A, Berzins A and Brandstetter J 2025 {Einstein Fields}: a
  neural perspective to computational general relativity {\em ICLR 2026\/}
  (\textit{Preprint} \eprint{2507.11589})

\bibitem{bara2025}
Bara M 2025 Unexplored opportunities for automatic differentiation in
  astrophysics (\textit{Preprint} \eprint{2507.09379})

\bibitem{optimistix2024}
Kidger P 2024 Optimistix: modular optimisation in {JAX}
  \url{https://github.com/patrick-kidger/optimistix}

\bibitem{kidger2022diffrax}
Kidger P 2022 Diffrax: numerical differential equation solvers in {JAX}
  \url{https://github.com/patrick-kidger/diffrax}

\bibitem{grahamolum2007}
Graham N and Olum K~D 2007 {\em Phys. Rev. D\/} {\bf 76} 064001
  (\textit{Preprint} \eprint{0705.3193})

\bibitem{ford1995}
Ford L~H and Roman T~A 1995 {\em Phys. Rev. D\/} {\bf 51} 4277--4286
  (\textit{Preprint} \eprint{gr-qc/9410043})

\bibitem{pfenning1997}
Pfenning M~J and Ford L~H 1997 {\em Class. Quantum Grav.\/} {\bf 14} 1743--1751
  (\textit{Preprint} \eprint{gr-qc/9702026})

\bibitem{fewster2000qi}
Fewster C~J 2000 {\em Class. Quantum Grav.\/} {\bf 17} 1897--1911
  (\textit{Preprint} \eprint{gr-qc/9910060})

\bibitem{fewstereveson1998}
Fewster C~J and Eveson S~P 1998 {\em Phys. Rev. D\/} {\bf 58} 084010
  (\textit{Preprint} \eprint{gr-qc/9805024})

\bibitem{le2026warpshells}
Le A~T 2026 On the boundary cost of source-consistent warp shells
  (\textit{Preprint} \eprint{2605.25417})

\bibitem{nocedal2006}
Nocedal J and Wright S~J 2006 {\em Numerical Optimization\/} 2nd ed (Springer)

\bibitem{tsitouras2011}
Tsitouras C 2011 {\em Comput. Math. Appl.\/} {\bf 62} 770--775

\bibitem{tao2016symplectic}
Tao M 2016 {\em Phys. Rev. E\/} {\bf 94} 043303 (\textit{Preprint}
  \eprint{1609.02212})

\bibitem{christian2021fantasy}
Christian P and Chan C~K 2021 {\em Astrophys. J.\/} {\bf 909} 67
  (\textit{Preprint} \eprint{2010.02237})

\bibitem{abellan2024lapse}
Abell{\'a}n G, Bol{\'i}var N and Vasilev I 2024 {\em Class. Quantum Grav.\/} {\bf 41}
  105011

\bibitem{rodal2026screening}
Rodal J 2026 Tamm--Rubilar branch diagnostics for Drummond--Hathrell photon
  propagation: Schwarzschild calibration and a Kerr weak-lensing benchmark
  (\textit{Preprint} \eprint{2603.21352})


\bibitem{reiterer2019choptuik}
Reiterer M and Trubowitz E 2019 {\em Commun. Math. Phys.\/} {\bf 368} 143--186
  (\textit{Preprint} \eprint{1203.3766})

\bibitem{loewy1975icecream}
Loewy R and Schneider H 1975 {\em J. Math. Anal. Appl.\/} {\bf 49} 375--392

\bibitem{hildebrand2011lorentzII}
Hildebrand R 2011 {\em Linear Multilinear Algebra\/} {\bf 59} 719--731

\bibitem{bergqvist2001nullcone}
Bergqvist G and Senovilla J M~M 2001 {\em Class. Quantum Grav.\/} {\bf 18}
  5299--5325 (\textit{Preprint} \eprint{gr-qc/0104090})

\bibitem{shoshany2024ctc}
Shoshany B and Snodgrass B 2024 {\em Class. Quantum Grav.\/} {\bf 41} 205005
  (\textit{Preprint} \eprint{2309.10072})

\bibitem{schuster2023adm}
Schuster S, Santiago J and Visser M 2023 {\em Gen. Relativ. Gravit.\/} {\bf 55} 14
  (\textit{Preprint} \eprint{2205.15950})

\bibitem{kontou2024qei}
Kontou E-A 2024 {\em Universe\/} {\bf 10} 291 (\textit{Preprint}
  \eprint{2405.05963})

\bibitem{le2026steering}
Le A~T 2026 Steering a warp drive without exotic matter (\textit{Preprint}
  \eprint{2606.22531})

\bibitem{yakubovich1971}
Yakubovich V~A 1971 {\em Vestnik Leningrad Univ.\/} {\bf 1} 62--77 (S-procedure
  in nonlinear control theory)

\bibitem{polik2007slemma}
P\'olik I and Terlaky T 2007 {\em SIAM Rev.\/} {\bf 49} 371--418

\bibitem{hildebrand2007lorentz}
Hildebrand R 2007 {\em Linear Multilinear Algebra\/} {\bf 55} 551--573

\bibitem{maeda2020nonstatic}
Maeda H 2021 {\em Gen. Relativ. Gravit.\/} {\bf 53} 90 (\textit{Preprint}
  \eprint{2001.11335})

\end{thebibliography}
\end{document}